\newtheorem{theorem}{Theorem}[section]
\newtheorem{corollary}{Corollary}[theorem]
\newtheorem{lemma}[theorem]{Lemma}
\newtheorem{definition}[theorem]{Definition}
\newtheorem{conjecture}[theorem]{Conjecture}
\renewcommand{\>}{\rangle}
\newcommand{\<}{\langle}
\newcommand{\Tr}{\textup{Tr}}
\newcommand{\poly}{\textup{poly}}
\newcommand{\1}{\mathds{1}}
\newcommand{\g}{\mathfrak{g}}
\newcommand{\U}{\mathbf{U}}
\newcommand{\Ad}{\text{Ad}}
\newcommand{\E}{\mathbf{E}}
\newcommand{\K}{\mathbf{K}}
\renewcommand{\O}{\mathbf{O}}
\renewcommand{\H}{\mathbf{H}}
\newcommand{\A}{\mathbf{A}}
\newcommand{\X}{\mathbf{X}}
\newcommand{\W}{\mathbf{W}}
\newcommand{\G}{\mathcal{G}}
\begin{document}

\title{The Adjoint Is All You Need:\\ Characterizing Barren Plateaus in Quantum Ans\"atze}

\author{Enrico Fontana}
\affiliation{Global Technology Applied Research, JPMorgan Chase}
\affiliation{Computer and Information Sciences, University of Strathclyde}
\author{Dylan Herman}
\email{dylan.a.herman@jpmorgan.com}
\author{Shouvanik Chakrabarti}
\author{Niraj Kumar}
\author{Romina Yalovetzky}
\affiliation{Global Technology Applied Research, JPMorgan Chase}
\author{Jamie Heredge}
\affiliation{Global Technology Applied Research, JPMorgan Chase}
\affiliation{School of Physics, The University of Melbourne}
\author{Shree Hari Sureshbabu}
\author{Marco Pistoia}
\affiliation{Global Technology Applied Research, JPMorgan Chase}

\begin{abstract}
    Using tools from the representation theory of compact Lie groups, we formulate a theory of Barren Plateaus (BPs) for parameterized quantum circuits whose observables lie in their dynamical Lie algebra (DLA), a setting that we term Lie algebra Supported Ansatz (LASA). A large variety of commonly used ans\"atze such as the Hamiltonian Variational Ansatz, Quantum Alternating Operator Ansatz, and many equivariant quantum neural networks are LASAs. {In particular, our theory provides, for the first time, the ability to compute the variance of the gradient of the cost function of %
    the quantum compound ansatz.} We rigorously prove that{, for LASA,} the variance of the gradient of the cost function, {for a $2$-design of the dynamical Lie group}, scales inversely with the dimension of the DLA, which agrees with existing numerical observations. {In addition, to motivate the applicability of our results for $2$-designs to practical settings, we show that rapid mixing occurs for LASAs with polynomial DLA.} Lastly, we include potential extensions for handling cases when the observable lies outside of the DLA and the implications of our results.
    
\end{abstract}

\maketitle

\section{Introduction}

Variational quantum algorithms (VQAs) are a popular class of quantum computing heuristics due to their low circuit cost and ability to be trained in a hybrid quantum-classical fashion \cite {cerezo2021variational}. {The community has identified a variety of potential applications for VQAs in the areas of  optimization~\cite{farhi2014quantum, peruzzo2014variational, Liu_2022, Niroula_2022, Herman_2023, shaydulin2023evidence} and machine learning~\cite{Mitarai_2018, farhi2018classification, havlivcek2019supervised, Larocca_2022,  herman2023}.} Unfortunately, the optimization of VQAs can be a computationally challenging task due to (1) exponentially many parameters being required to ensure convergence~\cite{you2021exponentially, you2022convergence, anschuetz2021critical, anschuetz2022beyond, you2023analyzing}, and (2) exponentially many samples being required to estimate gradients, known as the barren plateau (BP) problem~\cite{mcclean2018barren, cerezo2021cost, wang2021noise, martin2023barren}. {In some cases, it has been observed numerically that both of these obstacles to VQA optimization can be mitigated when the chosen parameterized quantum circuit (PQC) obeys certain symmetries~\cite{Larocca2022diagnosingbarren, you2022convergence}.} The symmetries of the ansatz cause its action{, in either the Schr\"{o}dinger or Heisenberg pictures, to break into invariant subspaces. However, there have only been a few cases in which potentially useful symmetries, mostly in the Schr\"{o}dinger picture, have been identified, e.g. permutation invariance~\cite{schatzki2022theoretical}.}

The existing theoretical results on the trainability {and convergence} of ans\"atze with symmetries have been restricted to {the Schr\"{o}dinger picture and a setting called \emph{subspace controllable} }\cite{mcclean2018barren, Larocca2022diagnosingbarren, schatzki2022theoretical,you2022convergence}. %
Subspace controllability occurs when the circuit can express any unitary transformation between states in an invariant subspace and it has been observed that it results in training landscapes that are essentially trap-free~\cite{russell2016quantum,Larocca_2023}. In addition, if the invariant subspaces have small dimension, i.e. scale polynomially in system size, it can be easily shown that BPs are not present {for subspace controllable PQCs}.

{These results however fail in the uncontrollable setting, where the circuit is limited to expressing a subgroup of the unitary group in the invariant subspace.}
With respect to the BPs problem, existing work has observed a desirable feature of subspace uncontrollable circuits~\cite{Larocca2022diagnosingbarren}. In this setting, it appears that the trainability of the ansatz depends on the dimension of the \emph{dynamical Lie algebra} (DLA), which holds almost trivially in the subspace controllable setting since the DLA dimension grows with the square of the subspace dimension. However, existing work has only provided evidence of this connection to the DLA dimension numerically in the uncontrollable setting~\cite{Larocca2022diagnosingbarren}. There are cases where, for an uncontrollable PQCs, the dimension of the effective DLA only grows polynomially in the system size, while the invariant subspace dimension {where the initial state lies} is exponentially growing, such as the quantum compound ansatz~\cite{kerenidis2022quantum, cherrat2023quantum}. Note that the effective DLA is the restriction of the action of the DLA to an invariant subspace.  Thus, this connection between the DLA dimension and BPs has remained unproven in the general setting.

In this work, using a simple but powerful observation regarding the adjoint representation and the representation theory of compact Lie groups, we prove that for a general class of PQCs that the variance of the gradient of  the cost function does fall inversely with the dimension of the effective DLA {for $2$-designs of the dynamical Lie group}.
{As we will show, the Heisenberg picture and the symmetries of the circuit's action on the observable are more suitable for explaining this phenomenon. This will lead to intuitive and commonplace conditions on the observable that are sufficient for this connection to hold. To show the validity of the $2$-design assumption in practice, we show that fast mixing occurs for DLAs with polynomial dimension, and we experimentally verify our formulae for the quantum compound ansatz.}

\section{Results}\label{sec:main_results}
\subsection{General Framework}

Variational Quantum Algorithms (VQAs) consist of optimizing the parameters of parameterized circuits of the form given in the below definition:
\begin{definition}[Periodic ansatz]\label{def:ansatz}
A periodic ansatz constructed from Hermitian generators $\{\tilde\H_1, \dots, \tilde\H_{K}\}$ consists of a unitary of the form
\begin{align}\U(\boldsymbol{\theta}) = \prod_{l=1}^{L}\prod_{k=1}^{K}e^{-\theta_{(l,k)}i\tilde\H_{k}},
\end{align}
an initial state $\rho = \U_0|\mathbf{0}\rangle\langle\mathbf{0}|\U_0^{\dagger}$, {and a Hermitian measurement operator $\O$.}
\end{definition}
{The output of a VQA is the parameter-dependent expectation value $\<\O\>_{\rho} = \Tr\{\U(\boldsymbol\theta)\rho\U^\dagger(\boldsymbol\theta)\O\}$, known as the \emph{cost function}.}

For $n$-qubits, the set of $\U(\boldsymbol{\theta})$ lies in the unique connected subgroup of $\text{SU}(2^n)$, called the {\emph{dynamical Lie group}~\cite{d2021introduction}. It is the subgroup} associated with the real span of the Lie closure (i.e., closure under taking commutators) of the generators:
\begin{equation}\label{eq:dla}
    \g := \text{span}_{\mathbb{R}}\langle i\tilde\H_1, \dots, i\tilde\H_{K} \rangle_{\text{Lie}},
\end{equation} 
which is known in the quantum control literature as the \emph{dynamical Lie algebra} (DLA)~\cite{d2021introduction}.  We denote the dimension {of $\g$} as a real vector space by $d_{\g}$.

We also informally define the notion of Barren Plateau for quantum ans\"{a}tze.
\begin{definition}[Barren Plateau]\label{def:bp}
    A class of quantum ans\"atze experiences a Barren Plateau if the variance of the cost function gradient decays exponentially with system size, i.e. for all $(l, k)$,
    \begin{align}
        \textup{Var}_{\boldsymbol{\theta} \sim \nu}[\partial_{(l,k)}\langle \O\rangle_{\rho}] \in \mathcal{O}\left(\frac{1}{{b}^n}\right),
    \end{align}
    where the system size $n$ is the number of qubits {and $b > 1$}. Typically $\nu$ is the uniform distribution over the range of the parameters.
\end{definition}
Note that in general a BP at initialization may not imply a BP throughout the training trajectory. However, in most cases when $\nu$ is the uniform distribution over parameters, the collection $\U(\boldsymbol{\theta})$ forms an {approximate} 2-design w.r.t. the Haar measure {\emph{on the dynamical Lie group} (this is made explicit in Section \ref{sec:review})}, and due to Haar invariance, a BP at initialization implies a BP throughout training. 
A PQC that experiences a BP is also called untrainable, which follows from the gradient being computationally-infeasible to estimate to arbitrary precision. Otherwise, if the variance only falls as $\Omega\left(1/\text{poly}(n)\right)$, then the PQC is trainable.

\subsection{DLA - BP Connection}

It has been conjectured  that the dimension of the DLA plays a crucial role in characterizing the trainability of VQAs.
More specifically, the following conjecture linking trainability and DLA dimension was put forward:
\begin{conjecture}[Conjecture 1 in~\cite{Larocca2022diagnosingbarren}, paraphrased]
\label{con:conjecture}
    The scaling of the variance of the partial derivatives of the cost function is inversely proportional to the dimension of the DLA:
    \begin{equation}
         \textup{Var}_{\boldsymbol{\theta} \sim \nu}[\partial_{(l,k)}\langle \O\rangle_{\rho}] \in \mathcal{O}\left(\frac{1}{\poly(d_\g)}\right).
    \end{equation}
\end{conjecture}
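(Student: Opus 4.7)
The plan is to combine the LASA hypothesis $i\O\in\g$ with the 2-design assumption on the dynamical Lie group $G$ to reduce the variance to a double Haar integral of adjoint-representation matrix elements, and then invoke Schur orthogonality for compact Lie groups.

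First, I would split the ansatz at the parameter of interest as $\U(\boldsymbol\theta) = \U_R\, e^{-i\theta_{(l,k)}\tilde\H_k}\,\U_L$ and differentiate in closed form to obtain $\partial_{(l,k)}\<\O\>_\rho = i\Tr\bigl(\U\rho\U^\dagger[V,\O]\bigr)$ with $V = \Ad_{\U_R}(\tilde\H_k)$. The crucial observation is that, by LASA, $i\O$ and every $i\tilde\H_k$ lie in $\g$; since $\g$ is $\Ad_G$-invariant, $iV$ does too, and hence $[V,\O]\in i\g$. The entire integrand is therefore a bilinear form built out of elements of $\g$ rotated by $\Ad_G$, which is exactly the ``adjoint is all you need'' slogan.

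Next, I would use the 2-design hypothesis to replace the parameter average by an average over the Haar measure of $G$, treating the two halves $\U_R$ and $\U_L$ as independent Haar elements. Since $\E[\partial]=0$ by left/right invariance, we have $\Var[\partial]=\E[\partial^2]$, and expanding the square yields two independent Haar integrals of matrix elements of the adjoint action on $\g$. Because $G$ is compact, $\Ad|_\g$ decomposes orthogonally (under the Hilbert--Schmidt inner product restricted to $\g$) into irreducible subrepresentations $\g = \bigoplus_\alpha W_\alpha$. Schur orthogonality then produces a sum of terms each weighted by $1/\dim W_\alpha$, with coefficients $c_\alpha$ controlled by Hilbert--Schmidt inner products between the projections $\Pi_\alpha\O$, $\Pi_\alpha\tilde\H_k$, and the projection of $\rho$ onto $i\g$; these are bounded by the operator norms of $\O$, $\tilde\H_k$, and $\rho$, which are typically $\mathcal{O}(1)$.

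Finally, I would relate this bound to $d_\g$. When $\g$ is simple, $\Ad|_\g$ is itself irreducible and the sum collapses to a single $1/d_\g$ term, recovering the conjecture directly. The hard part will be the general reductive case: if $\g$ has a large abelian centre, for instance, it splits into many one-dimensional irreps and the naive sum $\sum_\alpha 1/\dim W_\alpha$ need not be $\mathcal{O}(1/\poly(d_\g))$. I expect this to be resolved by showing that the coefficients $c_\alpha$ carry compensating $\dim W_\alpha$ factors arising from the trace structure, and by working throughout with the \emph{effective} DLA, i.e.\ the restriction of $\g$ to the invariant subspace containing $\rho$, on which only the irreducible factors that actually couple to $\rho$ survive. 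Carrying out this bookkeeping cleanly for an arbitrary initial state is the principal technical challenge.
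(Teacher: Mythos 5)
Your overall route---rewriting the derivative as an abstracted gradient $\Tr(\U\rho\,\U^{\dagger}[V,\O])$ with $V$ in the represented DLA, replacing the parameter average by two independent Haar averages, killing the mean, and reducing the second moment to Schur orthogonality for the adjoint action on $\g$---is exactly the paper's strategy, and in the simple case your sketch matches Lemma~\ref{lemma:main} and Theorem~\ref{thm:simple_group}: irreducibility of $\Ad$ collapses the twirl of $(i\O)^{\otimes 2}$ to $\frac{\|\O\|_{\text{F}}^2}{d_\g}\K$ and yields the $1/d_\g$ behaviour. The genuine gap is your treatment of the reductive case. You correctly worry that a large abelian centre splits into many one-dimensional irreps so that a naive $\sum_\alpha 1/\dim W_\alpha$ need not be small, but your proposed fix---``compensating $\dim W_\alpha$ factors'' hidden in the coefficients $c_\alpha$, plus a restriction to the effective DLA---is not the mechanism and would not materialize. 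The centre drops out \emph{identically}, because the gradient contains the commutator $[\H,\cdot\,]$ and $[\H,\O_{\mathfrak{c}}]=d\phi([h,o_{\mathfrak{c}}])=0$ since $\mathfrak{c}$ commutes with all of $\g$; and the cross terms between distinct simple ideals vanish not by dimension counting but by structure: the $\g_\alpha$ are ideals, so $[\H,\E_k]$ stays inside $d\phi(\g_\alpha)$, and distinct simple ideals are mutually non-isomorphic $G$-modules, so Schur orthogonality annihilates every cross term (Lemma~\ref{lemma:main_compact}). What survives is the clean sum of Theorem~\ref{thm:compact}, $\sum_\alpha \|\H_{\g_\alpha}\|_{\text{K}}^2\|\O_{\g_\alpha}\|_{\text{F}}^2\|\rho_{\g_\alpha}\|_{\text{F}}^2/d_{\g_\alpha}^2$, with no central contribution at all.

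A second, quantitative gap: your claim that the coefficients are ``bounded by operator norms, typically $\mathcal{O}(1)$'' misses how the generator enters. The two Haar integrals are not symmetric. The $g^{+}$ twirl gives one factor $1/d_{\g_\alpha}$ times the split Casimir, but the $g^{-}$ twirl then acts on $[\H,\E_k]^{\otimes 2}$ summed over a basis, and this second application of Schur orthogonality resums the structure constants into the \emph{Killing} norm $\|\H_{\g_\alpha}\|_{\text{K}}^2=I_{\Ad}\|h_{\g_\alpha}\|_\g^2$ with $I_{\Ad}\in\Theta(\sqrt{d_{\g_\alpha}})$---a dimension-dependent factor, not an $\mathcal{O}(1)$ constant (it is why the compound-ansatz variances scale like $d_\g^{-3/2}$ rather than $d_\g^{-1}$). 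The conjecture's upper bound is then extracted via the inequality $\|\H_{\g_\alpha}\|_{\text{K}}^2\le 2d_{\g_\alpha}\|\H_{\g_\alpha}\|_{\text{F}}^2$, which cancels one power of $d_{\g_\alpha}$ and leaves Equation~\eqref{eqn:upper_bound_grad_var}. Finally, a small step you glossed over: Schur orthogonality is being applied to a \emph{real} representation; the paper justifies this by noting the adjoint matrix coefficients are real and that the complexification of the adjoint representation of a compact simple algebra remains irreducible.
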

{In this work, we provide a proof of this conjecture.}
We emphasize that our results show a more explicit scaling of the variance with the DLA dimension, instead of just an upper bound. Thus, our results shed light on when \emph{stronger} versions of the above conjecture hold, e.g., $\Theta(\frac{1}{\poly(d_\g)})$. However, this  depends on the initial state and observable, since the DLA dimension may not always be the quantity dominating the decay.

{It turns out that the connection holds for a certain class of ans\"{a}tze, which we term the class of \emph{Lie Algebra Supported Ansatz} (LASA).}

\begin{definition}[Lie Algebra Supported Ansatz] A Lie Algebra Supported Ansatz (LASA) is a periodic ansatz where the measurement operator $\O$ is such that $i\O$ belongs to the dynamical Lie algebra associated with the circuit generators $\{i\tilde\H_1, \dots, i\tilde\H_{K}\}$.
\end{definition}

\begin{figure}
    \label{fig:lasa}
    \includegraphics[width=0.45\textwidth]{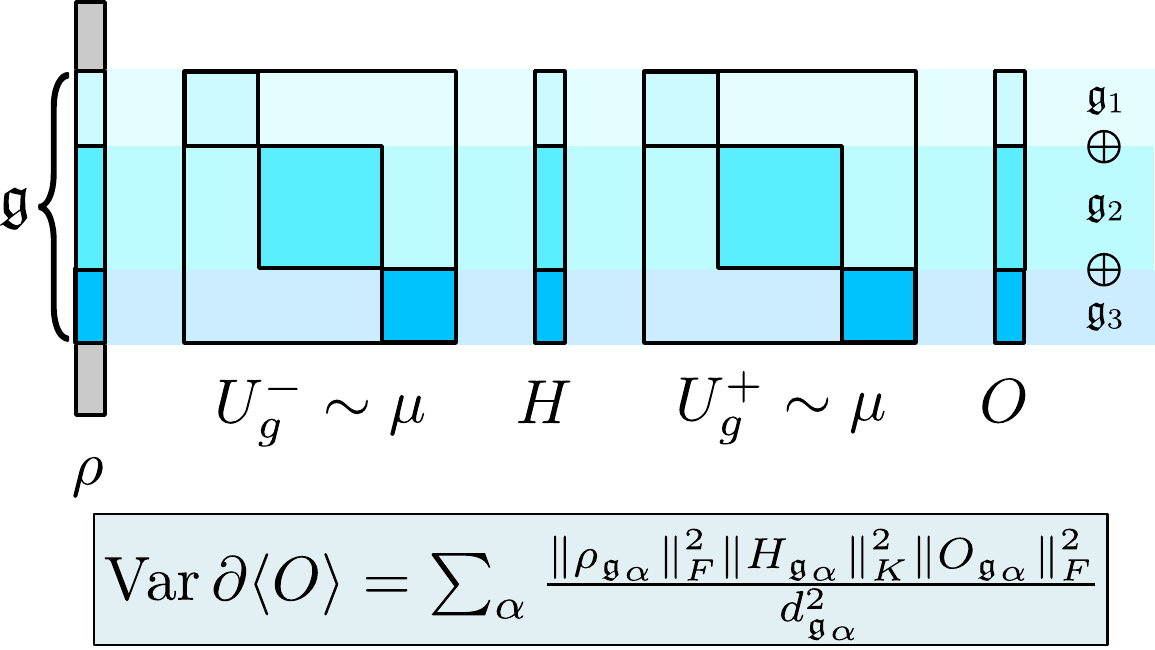}
    {\caption{
    Main result: for the gradient variance, when the observable is in the DLA (as in the case of Lie Algebra Supported Ansatz - LASA) only the components of $\rho$ in the DLA matter, and everything can be computed in the adjoint representation. Specifically, the subscript $\alpha$ for operators corresponds to their orthogonal projection onto the simple ideal $\g_{\alpha}$. When the DLA has multiple ideals, each ideal individually contributes a term to the variance.
    }}
\end{figure}

{In Fig.~\ref{fig:lasa}, we display our main result, which shows that the variance of the gradient has a direct dependence on DLA dimension for LASAs. As will be made  rigorous later, by construction, the action of a LASA on its observable will decompose into invariant subspaces (corresponding to preserved symmetries) each of dimension at most $d_{\g}$.}

While we introduce restrictions on the observable, we note that our results are still far-reaching. This is because LASAs include many commonly-used PQCs such as the Hamiltonian variational ansatz (HVA)~\cite{Wecker_2015} and quantum alternating operator ansatz (QAOA)~\cite{farhi2014quantum, Hadfield_2019}. We also note that all LASAs are equivariant quantum neural networks (EQNNs)~\cite{nguyen2022theory}. %
However, an EQNN is not necessarily a LASA, since there are equivariant operators that may not lie in the DLA. This is because equivariance is defined with respect to a symmetry group of the quantum data, and one could imagine a situation in which the circuit has a small DLA such that other equivariant operators exist outside the DLA.

\subsection{Representation Theoretic Notation}\label{sec:review}

{The following presents the notation used throughout the paper and assumes familiarity with Lie groups and representation theory. The unfamiliar reader is directed to the \emph{Supplementary Information}, where we briefly introduce Lie groups and representation theory.}

{Our focus will be a compact, connected Lie group $G$. The corresponding compact Lie algebra will be denoted by $\g$. The notation $V$ will represent an arbitrary finite-dimensional inner product space over $\mathbb{C}$ or $\mathbb{R}$. If a result does not specify which field is used, then either can be assumed. Additionally, $\mathcal{U}(V)$ will denote the group of isometries on $V$ (i.e., depending on the field,  either the unitary group or orthogonal group), and $\mathfrak{u}(V)$ will denote the set of skew-Hermitian operators on $V$. For either $\mathbb{R}$ or $\mathbb{C}$, we will use $\phi : G \rightarrow \mathcal{U}(V)$ to denote a unitary representation of the group $G$ and $d\phi : \g \rightarrow \mathfrak{u}(V)$ to denote the differential or Lie algebra representation. We will frequently use the notation $\U_{g}$ to denote the element $\phi(g) \in \mathcal{U}(V)$ for some $g \in G$ when the representation $\phi$ and space $V$ are clear from the context.} 

{Recall that the adjoint representation of a Lie group $G$ is the homomorphism:
\begin{align}
    \forall g \in G, \text{Ad}_{g}(k):=gkg^{-1} \in \g, \forall k \in \g,
\end{align}
and the adjoint representaton of a Lie algebra is the homomorphism:
\begin{align}
    \forall h \in \g, \text{ad}_{h}(k):=[h, k] \in \g, \forall k \in \g.
\end{align}}
{For compact simple Lie algebras, since all trace forms are related by a real factor,} we define a scaling constant $I_\phi$ that we call the \textit{index of the representation} (w.r.t. the standard representation) such that:
\begin{equation}
    -\Tr (d\phi(e_i)d\phi(e_j)) = I_\phi \delta_{ij},
\end{equation}
{for $\{ e_i\}$ a basis for $\g$ satisfying:
\begin{align}
    -\Tr(e_ie_j) = \delta_{ij}.
\end{align}}
The constant $I_\phi$ is the same as (twice) the Dynkin index for irreducible representations~\cite{fuchs1995affine}.

{For compact simple Lie algebras we consider a few norms induced by the trace forms. For any $a \in \g$, we define the \emph{standard norm} to be
\begin{equation}
    \|a\|_\g^2 = -\Tr(a^2),
\end{equation}
the \emph{Killing norm}
$\lVert a \rVert_{\text{K}}^2$ to be the norm induced by the Killing form (trace form associated with the adjoint representation), and more generally,  for an arbitrary Lie algebra representation $d\phi$, we denote the usual \emph{Frobenius norm} by $\|d\phi(a)\|_{\text{F}}^2$.  All are related in the natural way via the associated index of the representation, as defined earlier. Specifically, for arbitrary $d\phi$:
\begin{align}
\lVert d\phi(a) \rVert_{\text{F}} &= I_{\phi}\lVert a \rVert_{\g}^2\\
\lVert d\phi(a) \rVert_{\text{K}}^{2} &= \lVert a \rVert_{\text{K}}^{2}=I_{\text{Ad}}\lVert a \rVert_{\g}^2 = \frac{I_{\text{Ad}}}{I_{\phi}}\lVert d\phi(a) \rVert_{\text{F}}^2.\label{eq:killing}
\end{align}
For an arbitrary $\X \in \mathfrak{u}(V)$, we define $\X_{\g}$ to be the orthogonal projection under the Frobenius inner product onto $d\phi(\g)$.}

{Lastly, throughout the paper, all integration, e.g. $\int_{G} f(g) dg$, is with respect to the Haar measure $\mu$ for $G$. The notation $\mu^{\otimes 2}$ will denote the product Haar measure.}
\\

{Let us now place these notions in the context of VQAs. The vector space $V$ on which the group acts is the $n$-qubit Hilbert space $\mathbb{C}^{2^n}$. In general the PQC's dynamical Lie group will be $\phi(G)$ with $\phi$ a faithful (injective) representation and this is what we will assume here. In practice however one always can take $\phi$ to be the identity map, identifying $G$ with the dynamical group and $\g$ with the DLA, without invalidating the results.}

{In this abstract setting there is no notion of parameter space and hence the PQC gradient $\partial_{(l,k)}\langle \O\rangle_{\rho}$ is not well defined. Thus, we introduce the following parameter-independent quantity associated with any compact, connected Lie group:
\begin{definition}[Abstracted Gradient]
{Let $G$ be a compact, connected Lie group with representation $\phi : G \rightarrow \mathcal{U}(V)$. In addition, let $h \in \g$ and $i\O, i\A \in \mathfrak{u}(V)$.   We define the \emph{abstracted gradient} to be the following quantity:}
\begin{align}
    \partial\langle \O \rangle_{\A} := \textup{Tr}\{\U_{g^{-}}^{\dagger}\A\U_{g^{-}}[\H, \U_{g^{+}} \O \U_{g^{+}}^{\dagger}]\},
\end{align}
where $\U_{g^{\pm}} := \phi(g^{\pm})$ for arbitrary $g^{+}, g^{-} \in G$, and $\H = d\phi(h)$.
\end{definition}
Note that now we set the generators to be skew-Hermitian. 
The connection between abstracted and PQC gradients is clear for the periodic ansatz in Definition~\ref{def:ansatz}: for any parameter $\theta_{(l, k)}$ the PQC gradient will be equivalent to an abstracted gradient, with $\U_{g^-}$ ($\U_{g^+}$) being the unitaries preceding (following) the unitary $e^{-\theta_{(l, k)} \H_{k}}$ in the circuit.}

{In our calculations we will look at second moments of the abstracted gradient for $(g^+, g^-) \sim \mu^{\otimes 2}$. This will accurately model the experimental behaviour if for any $\theta_{(l, k)}$ the ansatz takes the form $\W^{(\text{L})}e^{-\theta_{(l, k)} \H_{k}}\W^{(\text{R})}$ with $\W^{(\text{L}/\text{R})}$ random unitaries forming independent $2$-designs for $\phi(G)$.
}
{For a sufficiently deep periodic ansatz, the assumption is valid for parameters in the middle of the PQC whenever randomly initialized, polynomially-sized periodic ans\"atze form approximate 2-designs.}
{It has been shown that this holds for $\g = \mathfrak{su}(2^n)$ or $\mathfrak{so}(2^n)$ and when all generators are in the Pauli group~\cite{haah2024efficient}. It has been widely assumed in literature that this result still holds for ansatz with different DLAs, with only numerical evidence. The following result answers this in the affirmative for LASA with polynomially-sized DLA, showing that rapid mixing to $2$-design still holds when we sample generators from a basis for the DLA.}

{\begin{theorem}[Rapid mixing for polynomial DLA]
\label{thm:rapid}
Consider an orthogonal basis of skew-Hermitian generators $\mathcal{A}:=\{\H_1, \dots, \H_{d_{\g}}\}$ for the DLA with the property that the unitary $e^{-\theta\H_k}$ corresponding to a generator $\H_{k}$ is $t_k$-periodic. In addition, suppose that $d_{\g} = \mathcal{O}(\textup{poly}(n))$. Consider a LASA formed by applying evolutions $e^{-\theta_k \H_k}$ where $\H_{k}$ is selected uniformly at random from the set $\mathcal{A}$ and the parameter $\theta_{k}$ uniformly from $[0, t_k)$. Then, the ansatz is an $\epsilon$-approximate $2$-design for the dynamical group $\mathcal{G}$  after  $\mathcal{O}(\textup{poly}(n)\log(1/\epsilon))$ layers.
\end{theorem}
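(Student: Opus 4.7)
The plan is to reduce the $2$-design property to a spectral gap estimate for the one-step second moment super-operator, and then bound this gap by $1/\poly(n)$ by exploiting that $\{\H_k\}$ is a basis of a polynomial-dimensional Lie algebra. First, I would recall that a distribution $\nu$ on $\G$ yields an $\epsilon$-approximate $2$-design iff $\|T_\nu^L - T_\mu\|_{\diamond} \leq \epsilon$, where $T_\nu(A) := \E_{\U \sim \nu}[(\U \otimes \U) A (\U^\dagger \otimes \U^\dagger)]$ and $\mu$ is the Haar measure on $\G$. The standard norm conversion then gives
\[ L \,=\, O\!\left(\frac{\log(d^2/\epsilon)}{\Delta}\right) \,=\, O\!\left(\frac{n + \log(1/\epsilon)}{\Delta}\right), \]
with $d = 2^n$ and $\Delta$ the spectral gap of $T_\nu$ on the orthogonal complement of its $1$-eigenspace. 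Thus it suffices to show $\Delta \geq 1/\poly(n)$.

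Second, I would compute $T_\nu$ explicitly for a single layer. Averaging $\U \otimes \U$ under conjugation over $\theta \in [0, t_k)$ amounts to averaging $e^{-\theta\, \ad_{\H_k \otimes I + I \otimes \H_k}}$ on $\mathrm{End}(V \otimes V)$. Because $t_k$ is the period of $e^{-\theta \H_k}$ in the defining representation, the eigenvalues of $\H_k$, and hence those of $\ad_{\H_k \otimes I + I \otimes \H_k}$, are integer multiples of $2\pi/t_k$, so this $\theta$-average is exactly the orthogonal projector $P_k$ onto $\ker\!\big(\ad_{\H_k \otimes I + I \otimes \H_k}\big)$. Averaging over $k$ gives $T_\nu = \frac{1}{d_\g}\sum_k P_k$, an average of orthogonal projectors whose common $1$-eigenspace is the diagonal $\g$-commutant in $\mathrm{End}(V \otimes V)$. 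Since $\{\H_k\}$ spans $\g$, this commutant is exactly the Haar-invariant subspace of $T_\mu$, so the $1$-eigenspace of $T_\nu$ already matches that of $T_\mu$, and only a \emph{quantitative} gap remains to be shown.

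Third, I would lower bound $\Delta$ by converting to a Casimir estimate. For $A$ orthogonal to the common commutant,
\[ \langle A, (I - T_\nu) A\rangle \,=\, \frac{1}{d_\g}\sum_k \|(I - P_k) A\|_F^2 \,\geq\, \frac{1}{d_\g \,\max_k \|\ad_{\H_k \otimes I + I \otimes \H_k}\|_{\mathrm{op}}^2} \sum_k \|\ad_{\H_k \otimes I + I \otimes \H_k}(A)\|_F^2, \]
and the last sum equals (up to the orthonormal-basis normalization) the action of the quadratic Casimir of $\g$ on $A$, viewed as an element of the diagonal representation on $V \otimes V \otimes V^* \otimes V^*$. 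This Casimir acts as a scalar on each $\G$-isotypic component and is strictly positive on all non-invariant isotypes; bounding its minimum nonzero eigenvalue via Weyl's dimension formula, combined with $\|\ad_{\H_k \otimes I + I \otimes \H_k}\|_{\mathrm{op}} \leq 2\|\H_k\|_{\mathrm{op}}$ and a polynomial a priori bound on $\|\H_k\|_{\mathrm{op}}$, yields $\Delta = \Omega(1/\poly(d_\g)) = \Omega(1/\poly(n))$. Plugging into the estimate of step one gives the claimed depth.

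The main obstacle is the quantitative control in the third step, specifically the simultaneous polynomial bound on $\|\H_k\|_{\mathrm{op}}$ and on the inverse minimum nonzero Casimir eigenvalue across every $\G$-irrep that actually appears in $V \otimes V \otimes V^* \otimes V^*$. A na\"ive Frobenius-to-operator-norm conversion loses a factor of $2^{n/2}$ and is not acceptable, so one must exploit that $\dim \g = \poly(n)$ bounds both representation-theoretic invariants simultaneously --- most cleanly by adapting the detectability-lemma / martingale technique of Haah et al.~\cite{haah2024efficient} from the Pauli setting on $\mathfrak{su}(2^n)$ or $\mathfrak{so}(2^n)$ to the general Lie-algebra-basis random walk considered here. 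The Lie-algebraic structure of $T_\nu$ as an average of projectors coming from a spanning set of $\g$ should make this adaptation natural.
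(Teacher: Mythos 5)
Your first two steps coincide with the paper's proof: the full-period $\theta$-average is exactly the orthogonal projector onto the kernel of the corresponding commutator operator, the one-step moment operator is $\frac{1}{d_\g}\sum_k P_k$, its fixed space matches the Haar commutant because the $\H_k$ span $\g$, and the gap is attacked via a quadratic Casimir plus Schur's lemma, following Haah et al. But the theorem's content lives entirely in the quantitative step you defer, and the route you sketch for it would not close. Working on all of $\mathrm{End}(V\otimes V)$, your displayed inequality divides by the single global quantity $\max_k\lVert \ad_{\H_k\otimes\1+\1\otimes\H_k}\rVert_{\mathrm{op}}^2$, which is dominated by the largest irrep of $\g$ present in $V^{\otimes2}\otimes V^{*\otimes2}$, while the Casimir minimum is set by the smallest; this mismatch is exponentially lossy. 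Concretely, take $\g=\mathfrak{su}(2)$ with $\phi$ the spin-$j$ irrep, $j=\Theta(2^n)$ (a LASA with $d_\g=3$, cf. the paper's own spin example in the supplement): spin-$1$ components occur in $V^{\otimes2}\otimes V^{*\otimes2}$, and the normalization-invariant ratio $c_{\min}/\max_k\lVert\H_k\rVert_{\mathrm{op}}^2$ is $\Theta(1/j^2)=\Theta(4^{-n})$, so your bound certifies only an exponentially small gap, although the true gap is $\Theta(1)$ (on each spin-$L$ component the Casimir $L(L+1)$ is comparable to $\lVert\psi_L(\H_k)\rVert_{\mathrm{op}}^2=\Theta(L^2)$ --- visible only in a per-irrep bound). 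More generally, your two required inputs --- a polynomial a priori bound on $\lVert\H_k\rVert_{\mathrm{op}}$ and the minimum nonzero Casimir over the irreps occurring in $V^{\otimes2}\otimes V^{*\otimes2}$ --- are properties of the representation $\phi$, not of $d_\g$, and nothing in the hypotheses supplies them; the Weyl-dimension-formula plan does not tie them to $\dim\g=\poly(n)$.

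The paper's proof closes exactly this hole, and the mechanism is absent from your outline. First, the LASA hypothesis is used to pass from $\mathrm{End}(V\otimes V)$ to the adjoint module: each $\theta$-average is replaced by the kernel projector of $\ad_{\H_k}^{\otimes t}$ acting on tensor powers of $\g$ itself (here $t=2$), so every irrep $\psi_m$ that appears has highest weight a sum of at most $t$ roots --- a constraint expressed purely in terms of $\g$, independent of $\phi$. Second, the kernel bound $\mathrm{Ker}(\psi_m(\H_k)) \prec \1 - \psi_m(\H_k)^2/\lVert\psi_m(\H_k)\rVert_{\mathrm{op}}^2$ is applied \emph{per irrep}, with $\lVert\psi_m(\H_k)\rVert_{\mathrm{op}}\le t\lVert\ad_{\H_k}\rVert_{\mathrm{op}}$, and orthogonality of the basis turns the sum over $k$ into the Casimir scalar $c_{\psi_m}$. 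Third, the minimum of $c_{\psi_m}$ over the constrained highest weights is lower-bounded by an explicit case analysis of the classical root systems $A_n,B_n,C_n,D_n$ (not Weyl's dimension formula; the semisimple case is handled ideal by ideal), giving $c_{\psi_m}=\Omega(t)$ for $t\lesssim\sqrt{d_\g}$. The resulting gap is $\Delta=\Omega\left(r_{\mathrm{K}}/(t\,d_\g)\right)$ with $r_{\mathrm{K}}:=\min_k\lVert\ad_{i\H_k}\rVert_{\mathrm{F}}^2/\lVert\ad_{i\H_k}\rVert_{\mathrm{op}}^2$, and a separate root-system lemma shows $r_{\mathrm{K}}=\Omega(\sqrt{d_\g})$ for every compact simple Lie algebra, yielding $\Delta=\Omega(1/\poly(n))$ with no reference to $\phi$ whatsoever. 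Trading the physical representation for the adjoint one is precisely what LASA buys here; the ``adaptation of Haah et al.''\ that you defer to is, in effect, the entire theorem.
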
}
{The proof of the above result and its generalization to $t$-designs for arbitrary LASA are in the \emph{Supplementary Information} and are based on techniques used by Ref.~\cite{haah2024efficient} and earlier works. %
Such random walks have been known to converge for some time~\cite{Harrow_2009}, and that  convergence to Haar for exponential DLA is not efficient. However, the above result makes the spectral gap dependence explicit.}

{The approach of studying BPs with 2-designs is standard, e.g. see Ref.~\cite{mcclean2018barren}. Furthermore, as we have shown, it is theoretically motivated in the case of independent, uniformly distributed parameters. However, there may still be settings where the 2-design assumption fails and where our results will not hold, for example other initialization schemes or correlated parameters. Interestingly, there is evidence that both may avoid BPs~\cite{zhang2022escaping, volkoff2021large}, however do not investigate this research direction further.}

Inspired by our overall goal
of analyzing BPs in parameterized quantum circuits, we seek to compute the quantity
\begin{align}
\label{eqn:variance}
    &\textup{GradVar} 
    := \text{Var}_{(g^{+}, g^{-}) \sim \mu^{\otimes 2}}[\partial\langle \O \rangle_{\rho}] \nonumber\\&= \mathbb{E}_{(g^{+}, g^{-}) \sim \mu^{\otimes 2}}[(\partial\langle \O \rangle_{\rho})^2] - (\mathbb{E}_{(g^{+}, g^{-}) \sim \mu^{\otimes 2}}[\partial\langle \O \rangle_{\rho}])^2,
\end{align}
where $\mu$ is the unique Haar measure over $G$ and $\rho$ is the initial quantum state to which all elements of the dynamical group are applied.
{$\mathbb{E}_{(g^{+}, g^{-}) \sim \mu^{\otimes 2}}[\partial\langle \O \rangle_{\rho}]$ can be shown to be zero is general (see \emph{Supplementary Information}), and thus in practice we focus on the second moment:
\begin{equation}\label{eq:sec_moment}
    \text{GradVar} = \mathbb{E}_{(g^{+}, g^{-}) \sim \mu^{\otimes 2}}[(\partial\langle \O \rangle_{\rho})^2].
\end{equation}}

{Using Definition~\ref{def:bp}, a BP occurs when the following holds:}
    \begin{equation}
        \textup{GradVar} \in \mathcal{O}\left(\frac{1}{{b}^n}\right), \;\; {b > 1}.
    \end{equation}
This is the phenomenon that our methods will seek to probe for the specific case of LASAs.

\subsection{Theory of BPs for LASA}

{We now present our theoretical contributions,} which connect the Lie algebra dimension to the scaling of the gradient variance. We note that norms involving the Hermitian observable $\O$ and the skew-Hermitian generator $\H$ have a few interpretations {as mentioned in Section \ref{sec:review}.} However, to be concise and for readability, we present the results in only one form.

{We start by recalling that all compact Lie algebras (and thus groups) are reductive. 
\begin{definition}[Reductive Lie algebra~\cite{hall2013lie}]
A Lie algebra $\g$ is reductive if the adjoint representation is completely reducible, i.e., $\g$  has the following decomposition as a direct sum of Lie algebras:
 \begin{align}
 \label{eqn:splitting}
\g = \bigoplus_{\alpha}\g_{\alpha} \oplus \mathfrak{c},
\end{align}where each $\g_{\alpha} \subset \g$ is a simple ideal and $\mathfrak{c} \subset \g$ is the center of $\mathfrak{g}$. Note that if $G$ is simply connected then $\mathfrak{c} = \{0\}$. 
\end{definition}
This property is essential for proving our main result, as it allows us to extend our expression (Theorem \ref{thm:simple_group}) for the gradient variance for simple Lie groups to the general compact case. If $\g$ is compact, then the $\g_{\alpha}$ will be compact as well~\cite{wiersema2309classification}.
Note that this notion of reducibility is related to what has appeared in prior works, e.g.~\cite{Larocca2022diagnosingbarren, schatzki2022theoretical,kerenidis2022quantum}, the differences are mainly as to whether the group acts on the observable or state. 
We discuss this in detail in Section \ref{sec:comparison}.}

{Next, we present our expression for the variance of the gradient for compact simple groups that applies to each $\g_{\alpha}$ in Equation \eqref{eqn:splitting}.}

\begin{theorem}[Simple group variance]
\label{thm:simple_group}
Let $G$ be a compact, connected simple Lie group with Lie algebra $\g$.
Suppose $\phi$ is a finite-dimensional unitary representation of $G$. 
In addition,  $o, h \in \g$, $i\O = d\phi(o)$, $\H = d\phi(h)$ and $\rho$ a density matrix. 
Then the following holds:
\begin{equation}
\label{eqn:simple_group_gradvar}
   \textup{GradVar}
    =\frac{\lVert \H \rVert_{\textup{K}}^{2}\lVert \O \rVert_{\textup{F}}^{2} \lVert  \rho_{\g} \rVert_{\textup{F}}^{2}}
    {d_{\g}^{2}}.
\end{equation}
\end{theorem}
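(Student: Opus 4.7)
The plan is to recast $(\partial\langle\O\rangle_\rho)^2$ as a quadratic form in two independent copies of the orthogonal matrix $R_g$ representing $\Ad_g$, and then collapse both Haar integrals via Schur's lemma. Schur applies here because, for a compact simple $G$, the complexification $\g_{\mathbb{C}}$ is simple, so the adjoint representation is absolutely irreducible; this yields $\int_G R_g X R_g^T\,dg = (\Tr(X)/d_\g)\,I$ for every matrix $X$.

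First, I exploit the hypothesis $i\O, \H \in d\phi(\g)$ to localize the computation. Since $d\phi(\g)$ is a Lie subalgebra of $\mathfrak{u}(V)$ invariant under $\Ad_{\phi(g)}$, the commutator $[\H, \U_{g^+}\O\U_{g^+}^\dagger]$ lies in $i\,d\phi(\g)$ (the Hermitian counterpart). Hence in the trace defining $\partial\langle\O\rangle_\rho$, only the projection of $\U_{g^-}^\dagger\rho\U_{g^-}$ onto this subspace contributes; because the conjugation by $\phi(G)$ preserves both $d\phi(\g)$ and its orthogonal complement, this projection equals $\U_{g^-}^\dagger\rho_\g\U_{g^-}$.

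Second, choose an orthonormal basis $\{e_i\}_{i=1}^{d_\g}$ of $\g$ under $-\Tr$ in the defining representation, and lift it to the Frobenius-orthonormal family $\tilde e_i := -i\,d\phi(e_i)/\sqrt{I_\phi}$. Expand $\rho_\g = \sum_i c_i \tilde e_i$ and $\O = \sum_j o_j \tilde e_j$, write $R_g \in \mathrm{O}(d_\g)$ for the matrix of $\Ad_g$ and $M_h$ for the skew matrix of $\ad_h$ in this basis, and use the identities $\U_g\tilde e_i\U_g^\dagger = \sum_k(R_g)_{ki}\tilde e_k$ together with $[\H,\tilde e_m] = \sum_p(M_h)_{pm}\tilde e_p$. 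A direct computation reduces the gradient to the bilinear form
\[
\partial\langle\O\rangle_\rho \;=\; \vec c^{\,T}\, R_{g^-}\, M_h\, R_{g^+}\, \vec o.
\]
Squaring and integrating in stages, Schur first gives $\mathbb{E}_{g^+}[R_{g^+}\vec o\vec o^{\,T}R_{g^+}^T] = (\|\O\|_{\text{F}}^2/d_\g)\,I$, whereupon the remaining average becomes $\mathbb{E}_{g^-}[R_{g^-}(M_hM_h^T)R_{g^-}^T] = (\|M_h\|_{\text{F}}^2/d_\g)\,I$, contracting with $\vec c$ to yield $\|\rho_\g\|_{\text{F}}^2 = \|\vec c\|^2$. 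Finally, since $\ad_h$ is skew-symmetric on compact $\g$, $\|M_h\|_{\text{F}}^2 = -\Tr(\ad_h^2) = \|h\|_{\text{K}}^2 = \|\H\|_{\text{K}}^2$, producing the claimed expression. The first moment vanishes because $\int R_g\,dg = 0$ for the nontrivial irreducible $\Ad$, so the second moment coincides with the variance.

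The main obstacle is careful bookkeeping: tracking the factors of $i$ when moving between Hermitian operators ($\O$, $\rho$) and the skew-Hermitian Lie algebra $d\phi(\g)$; verifying that orthogonal projection onto $d\phi(\g)$ commutes with conjugation by $\phi(G)$; and invoking (standard but essential) absolute irreducibility of $\Ad$ for a compact simple $G$, without which Schur's lemma would leave nontrivial commutant terms. Once the bilinear form above is in hand, the remainder is just two one-line applications of Schur.
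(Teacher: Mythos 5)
Your proof is correct and takes essentially the same route as the paper's: both arguments rest on the absolute irreducibility of the adjoint representation of a compact simple group (via simplicity of $\g_{\mathbb{C}}$) and two successive applications of Schur orthogonality, one per Haar integral --- your twirl identity $\int_G R_g X R_g^T\,dg = (\Tr(X)/d_\g)\,I$ is exactly the paper's Lemma~\ref{lemma:main} written in coordinates (the split Casimir $\K$ becoming the identity on $\mathbb{R}^{d_\g}$), and your identity $\|M_h\|_{\text{F}}^2 = -\Tr(\ad_h^2) = \lVert \H \rVert_{\text{K}}^2$ reproduces the paper's structure-constant contraction to the Killing form. The only cosmetic differences are that you vectorize the gradient into the bilinear form $\vec c^{\,T} R_{g^-} M_h R_{g^+}\vec o$ and project $\rho$ onto $d\phi(\g)$ at the outset, whereas the paper works with tensor products and recovers $\lVert \rho_{\g}\rVert_{\text{F}}^2$ only at the end via $\Tr\bigl((i\rho)^{\otimes 2}\K\bigr)$.
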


{If $G$ is compact, one can use the fact that it is reductive and apply Theorem \ref{thm:simple_group} to each of the compact simple ideals to obtain the following:}

\begin{theorem}[Compact group variance]
\label{thm:compact}
Let $G$ be a compact and connected Lie group with Lie algebra $\g$. Suppose $\phi$ is a finite-dimensional unitary representation, $o, h \in \g$, $i\O = d\phi(o)$, $\H = d\phi(h)$, and $\rho$ is a density matrix. 
Then the following holds:
\begin{equation}
\label{eqn:compact_var_eqn}
   \textup{GradVar} = \sum_{\alpha} \frac{\lVert  \H_{\g_{\alpha}} \rVert_{\textup{K}}^{2} \lVert  \O_{\g_{\alpha}} \rVert_{\textup{F}}^{2} \lVert\rho_{\g_{\alpha}} \rVert_{\textup{F}}^{2}}{d_{\g_{\alpha}}^{2}} .
\end{equation}
Note that the center $\mathfrak{c}$ does not contribute to the variance. %
\end{theorem}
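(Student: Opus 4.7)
The plan is to reduce the compact case to the simple case (Theorem~\ref{thm:simple_group}) by exploiting the reductive decomposition $\g=\bigoplus_\alpha\g_\alpha\oplus\mathfrak{c}$. Because the $\g_\alpha$ are distinct simple ideals, $[\g_\alpha,\g_\beta]=0$ for $\alpha\neq\beta$, and by definition $[\g_\alpha,\mathfrak{c}]=0$. Lifting to the group, the connected subgroups $G_\alpha$ associated with each $\g_\alpha$ pairwise commute with each other and with the identity component $Z$ of the center, and together they generate $G$, so (up to a finite central subgroup that is immaterial for Haar integration) $G\cong Z\times\prod_\alpha G_\alpha$ and the Haar measure $\mu$ factorizes accordingly. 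I would therefore parameterize $g^\pm=z^\pm\prod_\alpha g^\pm_\alpha$ with each piece independently Haar-distributed.

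First I would decompose $\H=\H_c+\sum_\alpha\H_\alpha$ and $\O=\O_c+\sum_\alpha\O_\alpha$ under the orthogonal direct sum, with subscripts denoting projection onto $d\phi(\g_\alpha)$ or $d\phi(\mathfrak{c})$. Two commutation facts drastically simplify the abstracted gradient: (i) $d\phi(\mathfrak{c})$ is central in $d\phi(\g)$, so $[\H_c,\cdot]=0$ on $d\phi(\g)$ and $[\H,\O_c]=0$, killing the center's contribution entirely; (ii) for $\alpha\neq\beta$, $\Ad_{g^+_\beta}$ fixes $\g_\alpha$ pointwise (so $\U_{g^+_\beta}\O_\alpha\U_{g^+_\beta}^\dagger=\O_\alpha$) and $[\H_\beta,X]=0$ for every $X\in d\phi(\g_\alpha)$. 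Together with the analogous identities on the $\U_{g^-}$ side (obtained via cyclicity of the trace), this collapses the abstracted gradient to
\begin{equation}
    \partial\langle\O\rangle_\rho \;=\; \sum_\alpha T_\alpha(g^+_\alpha,g^-_\alpha),
\end{equation}
with $T_\alpha:=\Tr\{\U_{g^-_\alpha}^\dagger\rho\,\U_{g^-_\alpha}\,[\H_\alpha,\U_{g^+_\alpha}\O_\alpha\U_{g^+_\alpha}^\dagger]\}$ being exactly the abstracted gradient for the simple group $G_\alpha$ acting on $V$ with generator $\H_\alpha$, observable $\O_\alpha$, and state $\rho$.

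Squaring and taking expectation, $\mathbb{E}[(\partial\langle\O\rangle_\rho)^2]=\sum_{\alpha,\beta}\mathbb{E}[T_\alpha T_\beta]$. For $\alpha\neq\beta$, $T_\alpha$ and $T_\beta$ depend on disjoint, independent Haar variables and therefore factorize in expectation, and each single-factor expectation is the first moment of a simple-group abstracted gradient, which vanishes by the argument cited around Eq.~\eqref{eq:sec_moment}. Only the diagonal terms survive, and applying Theorem~\ref{thm:simple_group} to each $G_\alpha$ gives $\mathbb{E}[T_\alpha^2]=\lVert\H_{\g_\alpha}\rVert_{\textup{K}}^2\lVert\O_{\g_\alpha}\rVert_{\textup{F}}^2\lVert\rho_{\g_\alpha}\rVert_{\textup{F}}^2/d_{\g_\alpha}^2$, using that the Killing form of an ideal coincides with the restriction of the ambient Killing form and that the simple-group formula sees the state only through its projection onto $d\phi(\g_\alpha)$. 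Summing over $\alpha$ then yields the claimed expression.

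The main subtlety is the group-level factorization $G\cong Z\times\prod_\alpha G_\alpha$, which in general holds only up to a finite central subgroup; this is harmless because the abstracted gradient depends on each $g$ only through $\phi(g)\in\mathcal{U}(V)$, so passing to the cover and pushing its product Haar measure down preserves all the averages involved. Beyond that, the only bookkeeping is to verify that the projections $\H_{\g_\alpha}$, $\O_{\g_\alpha}$, and $\rho_{\g_\alpha}$ from the compact statement match the ones invoked when applying Theorem~\ref{thm:simple_group} factor by factor, which follows from the orthogonality of the direct sum $\g=\bigoplus_\alpha\g_\alpha\oplus\mathfrak{c}$ under any trace form on $d\phi(\g)$.
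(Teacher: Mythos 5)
Your proof is correct, and it reaches the theorem by a genuinely different route than the paper. The paper never factorizes the group: it keeps a single Haar integral over $G$ and proves Lemma~\ref{lemma:main_compact}, the compact generalization of Lemma~\ref{lemma:main}, namely $\int_G(\U_g\A\U_g^{\dagger})^{\otimes 2}\,dg=\sum_\alpha \|\A_{\g_\alpha}\|^2_{\text{F}}\,\K_{\g_\alpha}/d_{\g_\alpha}+\A_{\mathfrak{c}}^{\otimes 2}$, where the cross terms between distinct ideals vanish by Schur orthogonality (distinct simple ideals are non-isomorphic as $G$-modules under the adjoint action), and the surviving center term is then annihilated by the commutator with $\H$ --- your observation (i). You instead lift to the finite central cover $Z_0\times\prod_\alpha\tilde{G}_\alpha$, factorize the Haar measure, and show the abstracted gradient is literally a sum $\sum_\alpha T_\alpha$ over disjoint independent variables, so cross expectations vanish by independence together with the vanishing-mean lemma, and Theorem~\ref{thm:simple_group} applies ideal by ideal. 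The trade-off: you need the structure theorem for compact connected groups and the pushforward-of-Haar argument (which you correctly flag and handle, since everything depends on $g$ only through $\phi(g)$), whereas the paper needs the module non-isomorphy of the ideals; in exchange, your version makes the additivity of the variance transparent --- the gradient is a sum of independent mean-zero terms --- while the paper's route yields the standalone Lemma~\ref{lemma:main_compact}, including the center's $\A_{\mathfrak{c}}^{\otimes 2}$ contribution to the second moment, which is reused later for observables outside the DLA. Your supporting claims all check out: $\Ad_{g_\beta}$ fixes $\g_\alpha$ pointwise because $[\g_\beta,\g_\alpha]=0$ and $G_\beta$ is connected; the Killing form of an ideal is the restriction of the ambient one; and your Lie-algebraic splittings of $h$ and $o$ coincide with the Frobenius-orthogonal projections $\H_{\g_\alpha}$, $\O_{\g_\alpha}$ appearing in the statement, since $\g_\alpha=[\g_\alpha,\g_\alpha]$ forces orthogonality of $d\phi(\g_\alpha)$ to $d\phi(\g_\beta)$ and $d\phi(\mathfrak{c})$ under any invariant trace form.
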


{As mentioned in the introduction, the above theorem} is the central result of the paper. It shows that under the assumption of a LASA we can get a \textit{precise mathematical expression} for the gradient variance. Notably this expression is in terms of quantities that are intimately linked with the Lie algebra and the representation, and are well characterized for all simple algebras. %

\subsection{Interpretation of Results}
\label{sec:interpretation}
{The three norms in the numerator can be viewed as effectively measuring the support that each operator has on the simple ideal $d\phi(\g_{\alpha})$. Specifically, $\lVert \O_{\g_{\alpha}}\rVert_{\text{F}}$ and  $\lVert \rho_{\g_{\alpha}}\rVert_{\text{F}}$ being Frobenius norms can be interpreted as generalized measures of purity with respect to $d\phi(\g_{\alpha})$. This concept was actually first introduced in Ref.~\cite{somma2004nature}. A similar interpretation is also valid for the Killing norm $\lVert \H_{\g_{\alpha}} \rVert_{\text{K}}$, however, this time the relevant representation of $\g_\alpha$ is the adjoint representation, and so the norm is scaled by the ratio of the indices as in Equation~\eqref{eq:killing}.} 

{If one is still uncomfortable with the Killing norm,} we note that  $\lVert \H_{\g_{\alpha}} \rVert_{\text{K}}^{2} \le 2d_{\g_{\alpha}} \lVert \H_{\g_{\alpha}} \rVert_{\text{F}}^{2}$ (see the \emph{Supplementary Information}), and so one gets the following upper bound: 
\begin{align}
\label{eqn:upper_bound_grad_var}
    \textup{GradVar} 
    \in \mathcal{O}\left(\sum_{\alpha} \frac{\lVert  \H_{\g_{\alpha}} \rVert_{\text{F}}^{2} \lVert  \O_{\g_{\alpha}} \rVert_{\text{F}}^{2} \lVert \rho_{\g_{\alpha}} \rVert_{\text{F}}^{2}}{d_{\g_{\alpha}}}\right),
\end{align}
which presents the result in terms of more familiar quantities, i.e. Frobenius norms.
{In addition,} we now see that Conjecture \ref{con:conjecture} is explicitly proven (and indeed significantly generalized) for LASA.

{From Equation  \eqref{eqn:compact_var_eqn} or \eqref{eqn:upper_bound_grad_var}, we infer} that a BP can only occur whenever at least one of the terms in the expression leads to exponential decay. More specifically, the gradients will decay exponentially under any of these conditions: the state has exponentially small support over the Lie algebra; the state, the measurement operator and the generator are mostly supported on a subalgebra, $\g_{\alpha}$, the dimension of which is exponentially large; or the support of the state, measurement operator and generator are mutually incompatible on the subalgebras, in the sense that all terms vanish. The second condition amounts to the conjecture of Ref.~\cite{Larocca2022diagnosingbarren}, while the last is a novel prediction of this work, which only occurs in the strict semisimple case.

{Lastly, we conclude with some details on how one might use our results in practice. Since the generators and observables will typically be linear combinations of Pauli strings, one can  try to utilize symbolic computation to reason about the decomposition of $\g$ into simple ideals. A basis for the DLA can be obtained by computing nested commutators symbolically and checking for linear independence as done in Ref.~\cite{Larocca2022diagnosingbarren}, and a stopping condition can be added to avoid an exponential amount of iterations when the DLA ends up being exponential in dimension.}
{If the DLA has polynomial dimension and the generators are sums of Pauli strings, then there is an efficient procedure for discovering the simple ideals. Given a basis for the DLA $\{\E_{k}\}_{k=1}^{d_{\g}}$, compute the $d_{\g} \times d_{\g}$ matrices for each operator $\text{ad}_{\E_{k}}$ in the basis $\{\E_{k}\}_{k=1}^{d_{\g}}$, denote them $\widehat{\text{ad}_{\E_{k}}}$ . These are only polynomially large matrices. The next step is to simultaneously block diagonalize the $\widehat{\text{ad}_{\E_{k}}}$, which will reveal bases for the simple ideals. We can compute the $\lVert\O_{\g_{\alpha}}\rVert_{\text{F}}^2$ and $\lVert\H_{\g_{\alpha}}\rVert_{\text{K}}^2$ norms symbolically. If $\{\A_k\}_{k=1}^{d_{\g_{\alpha}}}$ is a basis for the ideal $\g_{\alpha}$, which can be expressed in terms of sums of Pauli strings given our assumption, then the norm $\lVert\rho_{\g_{\alpha}}\rVert_{\text{F}}^2 = \sum_{k=1}^{d_{\g_{\alpha}}}\Tr(\A_k^{\otimes 2}\rho^{\otimes2})$ can be computed on a quantum computer.}

\subsection{Variance Computation for Quantum Compound Ansatz}%
\label{sec:compound_example}

The quantum compound ansatz is a quantum representation on $n$ qubits ($2^n$-dimensional) of the Lie group $\textup{SO}(n)$ or $\textup{SU}(n)$~\cite{kerenidis2022quantum, cherrat2022quantum}.
Given a general $g \in \textup{SU}(n) ~(\textup{SO}(n))$, one can decompose it into a product of $\textup{SU}(2)~(\textup{SO}(2))$ rotations on 2-dimensional subspaces, which are (generalized) Givens rotations:
\begin{equation}
    \U_{g} = \prod_{(i,j) \in E} \U^{\text{Givens}}_{ij}(g),
\end{equation}
{and are implemented using the fermionic beam splitter (FBS) gate defined in Ref.~\cite{kerenidis2022quantum}.}

The graph $E$ can have various topologies, for example a pyramid or a staircase.
The circuit preserves Hamming weight, and the representation splits into subspaces corresponding to the different Hamming weights. {The analysis of the gradient variance for a more general class of Hamming-weight preserving unitaries appears in Ref.~\cite{monbroussou2023trainability}.}

One can check that the appropriate representation for the generators of a $\textup{SU}(2)$ Givens rotation between qubit $i$ and $j$ is
\begin{align}
    h_x^{ij} &= -\frac{i}{4} (\sigma_x^i \otimes  \sigma_x^j + \sigma_y^i \otimes \sigma_y^j)  \otimes \sigma_z^{\otimes |i-j-1|} \\&= d\phi\left(-\frac{i}{2}X^{(ij)}\right) \\
    h_y^{ij} &= -\frac{i}{4} (\sigma_y^i \otimes \sigma_x^j - \sigma_x^i \otimes \sigma_y^j) \otimes \sigma_z^{\otimes |i-j-1|} \\ &= d\phi\left(-\frac{i}{2}Y^{(ij)}\right) \\
    h_z^{ij} &= -\frac{i}{4}(\sigma_z^i - \sigma_z^j) = d\phi\left(-\frac{i}{2}Z^{(ij)}\right),
\end{align}
where $X^{(ij)}, Y^{(ij)}, Z^{(ij)}$ act as the Pauli operators $\sigma_{x}, \sigma_{y}, \sigma_{z}$ on the $2 \times 2$ block formed by $i$ and $j$, respectively, and are zero otherwise. They are elements of $\mathfrak{su}(n)$, and $\phi$ is the direct sum of the alternating representations for $k=1, \dots, n$, i.e.:
\begin{align}
    V = \bigoplus_{k=1}^{n}\wedge^{k}\mathbb{C}^{n}.
\end{align}
Note that the norm of each of these generators in $\g$ is $1/2$.
Importantly, while the set of generators spans the representation of $\g$, since it is larger than the dimension of $\g$ it is a not linearly independent set.
Note the extra $\sigma_z$'s in the definition of $h_x$ and $h_y$ are reminiscent of the string of $\sigma_z$ in the Jordan--Wigner encoding, only that here they are needed for the algebra to close.
The $\textup{SO}$ case is generated by the $h_y^{ij}$ elements only.

{To clarify why the ansatz is subspace uncontrollable, we can consider the Hamming-weight $n/2$ subspace. On this subspace, the DLA is isomorphic to $\mathfrak{su}(n)$, while the Lie algebra of the  full space of unitary operators on this subspace is isomorphic to $\mathfrak{su}(\binom{n}{n/2})$, hence the compound ansatz cannot enact all unitary transformations.}

{Before proceeding we present a mixing time result to $t$-design for the quantum compound ansatz that is tighter than Theorem \ref{thm:rapid}.
\begin{theorem}[Rapid mixing for Compound Ansatz]
\label{thm:mix_compound}
Consider an $n$-qubit quantum compound ansatz that is a LASA constructed using the set of generators $\{X^{(ij)}$,$Y^{(ij)}$, $\sum_{i=1}^{j} Z^{(ij)}\}$ with rotations angles chosen uniformly at random. Then,  for $t \leq n/2$, the ansatz is an $\epsilon$-approximate $t$-design for the dynamical group $\textup{SU}(n)$  after  $\mathcal{O}(tn\log(1/\epsilon))$ layers.
\end{theorem}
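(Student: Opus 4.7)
The plan is to leverage the fact that the dynamical group $\textup{SU}(n)$ acts on the $2^n$-dimensional circuit space through the faithful representation $\phi = \bigoplus_{k=0}^{n} \wedge^k \mathbb{C}^n$, which lets me lift the entire analysis from a random walk on the $2^n$-dimensional unitary group down to one on $\textup{SU}(n)$ in its defining representation $\mathbb{C}^n$. This dimensional reduction is the source of the tighter scaling compared to Theorem~\ref{thm:rapid}. By faithfulness, the single-layer distribution $\nu$ over $\textup{SU}(n)$ induced by picking a random Givens rotation determines the $t$-design property entirely through the moment operator
\begin{equation}
M_t := \int_{\textup{SU}(n)} \U^{\otimes t} \otimes \bar{\U}^{\otimes t}\, d\nu(\U),
\end{equation}
acting on $(\mathbb{C}^n)^{\otimes t} \otimes (\mathbb{C}^{n*})^{\otimes t}$, and one aims to show $\lVert M_t^L - M_\infty \rVert \leq \epsilon$ for $L = \mathcal{O}(tn\log(1/\epsilon))$, where $M_\infty$ is the projector onto the $\textup{SU}(n)$-invariant subspace.

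First, I would invoke Schur--Weyl duality to decompose the tensor space into $\textup{SU}(n)$-isotypic components indexed by pairs of Young diagrams $(\lambda,\mu)$ with $|\lambda|,|\mu|\le t$. For $t\le n/2$ all such diagrams have at most $n/2$ rows, so no column-straightening identifications occur and the Weyl dimension formula gives the clean bound $\dim V_\lambda \le n^{\mathcal{O}(t)}$. Since averaging $\nu$ over a uniformly random pair $(i,j)$ and a uniform angle yields a conjugation-invariant-enough kernel, $M_t$ is block-diagonal in this decomposition and the task reduces to bounding the operator norm of $M_t$ on each non-trivial isotypic component.

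Second, I would establish a uniform spectral gap $1-\lVert M_t|_{V_\lambda\otimes V_\mu^*}\rVert = \Omega(1/(tn))$ via a detectability/martingale argument in the spirit of Ref.~\cite{haah2024efficient}, adapted to the Kac-type walk generated by two-level Givens rotations on $\mathbb{C}^n$. The crucial simplification is that a single Givens rotation on a random pair $(i,j)$ acts as a uniformly random $\textup{SU}(2)$ block; projecting onto any non-trivial irrep of this $\textup{SU}(2)$ subgroup yields a constant contraction, and averaging over the $\binom{n}{2}$ choices of pair extracts a factor $\Omega(1/n)$ from the probability that a given pair is ``active'' on a given tensor mode, while the factor $1/t$ arises because at most $\mathcal{O}(t)$ of the tensor indices can be active in any block of level $|\lambda|+|\mu|\le 2t$. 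Finally, combining this gap with the dimension bound through the standard inequality $\lVert M_t^L - M_\infty\rVert \leq \sqrt{\dim}\,(1-\text{gap})^L$ and solving for $L$ yields the stated $\mathcal{O}(tn\log(1/\epsilon))$ depth.

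The hard part will be establishing the uniform gap $\Omega(1/(tn))$: naively summing single-rotation contractions gives only $\Omega(1/n^2)$, so one needs either a comparison to a well-studied symmetric Kac walk on $\textup{SU}(n)$, or an explicit inductive reduction on the number of active coordinates in each Schur--Weyl block. The restriction $t \leq n/2$ is essential precisely at this step, because it prevents the appearance of long-column diagrams in the Schur--Weyl decomposition that would introduce additional anti-symmetrization constraints and spoil the clean per-pair contraction estimate; extending beyond this regime would require separately handling the determinantal twisting of the defining representation.
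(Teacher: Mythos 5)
There is a genuine gap, and it sits exactly where you flag it yourself: the uniform spectral gap $\Omega(1/(tn))$ is asserted but not established, and the mechanism you sketch for it does not match the walk in the theorem. You model a layer as ``a uniformly random $\textup{SU}(2)$ block'' on a random pair $(i,j)$, but the ansatz applies a single one-parameter rotation $e^{-\theta \H_k}$ with $\H_k$ drawn from the orthogonal basis $\{X^{(ij)}, Y^{(ij)}, \sum_{i=1}^{j}Z^{(ij)}\}$ and $\theta$ uniform. A uniform-angle rotation about one axis is far from Haar on the corresponding $\textup{SU}(2)$ subgroup --- its moment operator is the projector onto the kernel of that single generator's action, which contracts nothing in the directions it commutes with. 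This is precisely why the generator set includes the Cartan-type $Z$ elements and why any gap must come from averaging over the \emph{complete} orthogonal basis, not from per-pair contraction: the paper's mechanism is the operator inequality $d_\g \1 - \sum_k \textup{Ker}(\psi_m(\H_k)) \succ \min_k (\lVert \H_k\rVert_{\textup{K}}^2/\lVert\psi_m(i\H_k)\rVert_{\textup{op}}^2)\, C_{\psi_m}$, where the sum of squared generators reproduces the quadratic Casimir $C_{\psi_m}$ (constant on each irrep of $\textup{ad}^{\otimes t}$ by Schur's lemma) --- an identity that requires the full orthogonal basis and has no analogue in your pair-by-pair accounting. Your own observation that naive summation yields only $\Omega(1/n^2)$ is the symptom: the missing factor of $n$ is supplied in the paper not by an inductive reduction on active coordinates, but by the \emph{minimum stable Killing rank} $r_{\textup{K}} = \min_k \lVert\textup{ad}_{i\H_k}\rVert_{\textup{F}}^2/\lVert\textup{ad}_{i\H_k}\rVert_{\textup{op}}^2 \in \Omega(\sqrt{d_\g}) = \Omega(n)$ (Lemma~\ref{lem:min_stable_rank}), verified for the compound generators by noting each $X^{(ij)}, Y^{(ij)}$ is conjugate to a Cartan element $Z^{(ij)}$. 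The theorem is then a two-line corollary (Corollary~\ref{cor:rapid_compound}) of the general mixing result (Theorem~\ref{thm:t_mixing_time}): gap $\Delta \in \Omega(r_{\textup{K}}/(d_\g t)) = \Omega(1/(nt))$, hence depth $\mathcal{O}(tn\log(1/\epsilon))$.

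Two further mismatches are worth noting. First, the restriction $t \le n/2$ does not enter through column straightening in Schur--Weyl on the defining representation; in the paper it enters through the root-system analysis lower-bounding the Casimir eigenvalue $c_{\psi_m}$ of nontrivial irreps in $\textup{ad}^{\otimes t}$ (highest weights of size at most $2t$ still satisfy $c_{\psi_m}/t^2 \in \Omega(1/t)$ only when $t$ is small relative to $n$). Second, your closing inequality $\lVert M_t^L - M_\infty\rVert \le \sqrt{\dim}\,(1-\textup{gap})^L$ carries a dimension factor $n^{\mathcal{O}(t)}$ that, once solved for $L$, gives $\mathcal{O}(tn(t\log n + \log(1/\epsilon)))$ rather than the claimed $\mathcal{O}(tn\log(1/\epsilon))$; with the operator-norm definition of an $\epsilon$-approximate design used in the paper, $\lVert T^L - T^{*}\rVert_{\textup{op}} \le (1-\Delta)^L$ requires no dimension factor at all. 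Your high-level instinct --- that the tighter scaling over Theorem~\ref{thm:rapid} should come from structure specific to Givens rotations --- is correct, but the structure actually exploited is the conjugacy of every generator to a Cartan element (saturating the stable-rank lower bound), not a dimensional reduction to the defining representation.
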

Of course, for BPs $t=2$ is the main interest. The proof follows simply from Theorem \ref{thm:rapid} and is left to the \emph{Supplementary Information}. Not that for the chosen set of generators some of the randomly chosen angles are not independent (i.e. the $\sum_{i=1}^{j}Z^{(ij)}$ type generators).
}

{The following three results utilize our theory of BPs for LASA to show that the quantum compound ansatz can be BP free under uniform initialization.}

{\begin{theorem}
\label{thm:compound_basis_state}
    For a quantum compound ansatz that is also LASA, if the initial state is a computational basis state, then the following holds:
    \begin{align}
        \textup{GradVar} \in \Omega\left(\frac{1}{n^3}\right).
    \end{align}
\end{theorem}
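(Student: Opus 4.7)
The plan is to apply Theorem~\ref{thm:simple_group} directly, using the fact that the compound ansatz has the simple DLA $\g = \mathfrak{su}(n)$ of dimension $d_{\g} = n^{2}-1$. Because the circuit preserves Hamming weight, a computational basis state $\rho = |b\rangle\langle b|$ of Hamming weight $k$ stays in the invariant block $W_{k} = \wedge^{k}\mathbb{C}^{n}$, where it corresponds to the basis vector $|S\rangle$ indexed by the support $S = \{i : b_{i}=1\}$; on $W_{k}$ the algebra $\g$ acts by the irreducible alternating representation $\phi_{k}$. The claim is only meaningful for $k\in\{1,\dots,n-1\}$, since $\phi_{0}$ and $\phi_{n}$ are trivial and $\textup{GradVar}$ vanishes identically. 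Restricting the traces in the variance formula to $W_{k}$, Theorem~\ref{thm:simple_group} reduces the problem to showing $\lVert\H\rVert_{\textup{K}}^{2}\,\lVert\O\rVert_{\textup{F}}^{2}\,\lVert\rho_{\g}\rVert_{\textup{F}}^{2} \in \Omega(n)$, so that the denominator $d_{\g}^{2} = \Theta(n^{4})$ produces the target $\Omega(1/n^{3})$ scaling.

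The two representation-theoretic factors are quick. For $\mathfrak{su}(n)$ the adjoint index is $I_{\Ad} = 2n$, so with the $\Theta(1)$ generator norm stated in the excerpt, $\lVert\H\rVert_{\textup{K}}^{2} = I_{\Ad}\lVert h\rVert_{\g}^{2} = \Theta(n)$. For an observable $\O = -i\,d\phi_{k}(o)$ with $o\in\mathfrak{su}(n)$, evaluating the trace form $-\Tr(d\phi_{k}(D)^{2})$ on a Cartan basis of diagonal traceless matrices yields the index $I_{\phi_{k}} = \binom{n-2}{k-1}$, and hence $\lVert\O\rVert_{\textup{F}}^{2} = \binom{n-2}{k-1}\lVert o\rVert_{\g}^{2}$.

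The main obstacle is the computation of $\lVert\rho_{\g}\rVert_{\textup{F}}^{2}$. I would expand $\mathfrak{su}(n)$ in an orthogonal basis consisting of the off-diagonal pair generators $X^{(ij)},Y^{(ij)}$ together with an orthonormal Cartan basis of diagonal traceless matrices. The critical observation is that $\langle S|d\phi_{k}(X^{(ij)})|S\rangle = \langle S|d\phi_{k}(Y^{(ij)})|S\rangle = 0$, because each of these generators strictly shifts the index set $S$; consequently the Frobenius projection of $\rho$ onto $\g$ is supported entirely on the Cartan. For a diagonal $D\in\mathfrak{su}(n)$, $\langle S|d\phi_{k}(D)|S\rangle = \chi_{S}\cdot D$ where $\chi_{S}\in\mathbb{R}^{n}$ is the indicator vector of $S$; summing the squared overlaps over an orthonormal Cartan basis yields $\lVert P_{\textup{traceless}}\chi_{S}\rVert^{2} = k-k^{2}/n = k(n-k)/n$, and after dividing by the $\phi_{k}$-Frobenius normalization $I_{\phi_{k}}$ one obtains $\lVert\rho_{\g}\rVert_{\textup{F}}^{2} = k(n-k)/\bigl(n\binom{n-2}{k-1}\bigr)$. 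The decisive step is the cancellation of $\binom{n-2}{k-1}$ between $\lVert\O\rVert_{\textup{F}}^{2}$ and $\lVert\rho_{\g}\rVert_{\textup{F}}^{2}$, which eliminates the exponentially large block-dimension factor and leaves $\textup{GradVar} = \Theta\bigl(\lVert o\rVert_{\g}^{2}\,k(n-k)/(n^{2}-1)^{2}\bigr)$. Since $k(n-k) \ge n-1$ for every $k \in \{1,\dots,n-1\}$, with the worst case attained at $k=1$ and $k=n-1$ (unary initializations), the bound $\textup{GradVar} \in \Omega(1/n^{3})$ follows.
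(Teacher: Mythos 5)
Your proposal is correct, and it reproduces the paper's exact expression: the paper's proof also applies Theorem~\ref{thm:simple_group} (in the index-free form $\textup{GradVar} = I_{\Ad}\lVert o\rVert_{\g}^{2}\lVert h\rVert_{\g}^{2}d_{\g}^{-2}\sum_i \Tr^2(i\rho\E_i)$) and arrives at $\textup{GradVar} = k(n-k)/(2(n^2-1)^2)$, matching your $\Theta\bigl(\lVert o\rVert_{\g}^{2}\,k(n-k)/(n^{2}-1)^{2}\bigr)$. Where you genuinely diverge is in the computation of the state's projected norm. The paper works in the full $2^n$-dimensional direct-sum representation and never touches $I_{\phi}$: it builds an explicit orthonormal Cartan basis out of Pauli-$z$ strings, argues by permutation (SWAP) invariance that the diagonal part of the Casimir must have the two-parameter form $A\sum_i \sigma^z_i\otimes\sigma^z_i + B\sum_{i\neq j}\sigma^z_i\otimes\sigma^z_j$, and pins down $A$ and $B$ by evaluating on test states, from which $I_\phi\lVert\rho_\g\rVert_{\textup{F}}^2 = k(n-k)/n$ follows by counting agreeing and disagreeing bit pairs. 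You instead restrict to the block $\wedge^{k}\mathbb{C}^{n}$, kill the off-diagonal generators by the observation that they strictly shift the index set $S$, and reduce the Cartan sum to the projection $\lVert P_{\textup{traceless}}\chi_S\rVert^2 = k(n-k)/n$ of the indicator vector --- a more elementary and arguably more transparent route that avoids both the symmetry ansatz and the test-state evaluations. You also pay the price of computing the index $I_{\phi_k} = \binom{n-2}{k-1}$ explicitly, which the paper sidesteps since $I_\phi$ cancels abstractly; what this buys you is that the cancellation of the exponentially large block-dimension factor between $\lVert\O\rVert_{\textup{F}}^2$ and $\lVert\rho_\g\rVert_{\textup{F}}^2$ becomes visible rather than hidden in the normalization, which nicely explains why the Hamming-weight-$n/2$ sector shows no trace of its $\binom{n}{n/2}$ dimension. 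Finally, your explicit exclusion of $k\in\{0,n\}$ is a point the paper glosses over: for $|0\cdots 0\rangle$ or $|1\cdots 1\rangle$ the representation on the block is trivial and $\textup{GradVar}=0$, so the theorem as stated implicitly needs exactly the caveat you supply.
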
}

The conclusion is that $\text{SU}$-compound layers with Lie-algebra supported measurements do not have BPs for any fixed Hamming weight computational basis state.
Note that computational basis states of the same Hamming weight are in an irreducible subspace of the tensor product representation {(see the \emph{Supplementary Information})}.%

\begin{figure}
    \centering
    \includegraphics[width=0.45\textwidth]{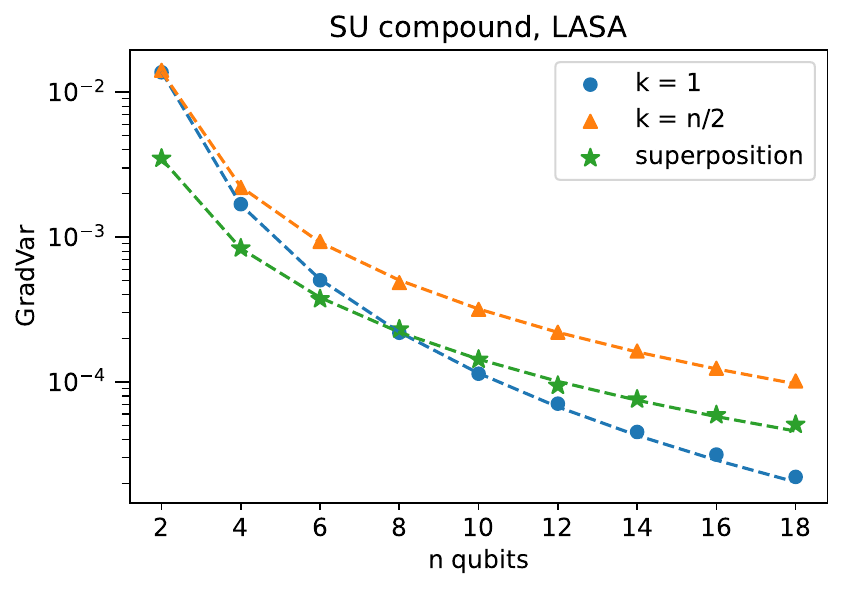}
    {\caption{Gradient variance scaling for SU compound layers, LASA. Dots are numerical results while dotted lines are analytical predictions using the equations in the text. Showing results for computational basis input states of Hamming weight 1 and $n/2$ and the uniform superposition state $|+\>^{\otimes n}$, for $n$ number of qubits ranging from 2 to 18 in steps of 2. The measurement operator is $-ih_{12}^z = (\sigma_1^z - \sigma_2^z)/4$.
    Accounting for the randomness of initialization, there is good agreement of numerical results with the predictions. The error bars are too small to plot. Additional information on the numerics is in the \emph{Supplementary Information}.}}
    \label{fig:su_compound}
\end{figure}

Next, we consider the uniform superposition state $|\psi\rangle = |+\rangle^{\otimes n}$ { and show that the quantum compound ansatz is still BP-free. In addition, in this case, the variance decays exactly with the DLA dimension $n^2-1$.}

{\begin{theorem}\label{thm:compound_uniform_sup}
    For a quantum compound ansatz that is also LASA, if the initial state is a uniform superposition of all computational basis states, then the following holds:
    \begin{align}
        \textup{GradVar} \in \Theta\left(\frac{1}{n^2}\right).
    \end{align}
\end{theorem}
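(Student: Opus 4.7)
The plan is to apply Theorem~\ref{thm:simple_group} directly, since for the quantum compound ansatz the DLA is isomorphic to the simple Lie algebra $\mathfrak{su}(n)$. The variance formula then reduces to
\begin{equation*}
\textup{GradVar} = \frac{\lVert\H\rVert_{\textup{K}}^2\lVert\O\rVert_{\textup{F}}^2\lVert\rho_\g\rVert_{\textup{F}}^2}{d_\g^2},\quad d_\g=n^2-1,
\end{equation*}
so the task becomes the evaluation of the three norm factors for $\rho=|+\rangle\langle+|^{\otimes n}$ together with the generator and observable used in Fig.~\ref{fig:su_compound}.

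First, I would dispose of the two ``easy'' norms using the index machinery of Section~\ref{sec:review}. With $\phi=\bigoplus_{k=1}^{n-1}\wedge^k\mathbb{C}^n$ (the $k=0,n$ pieces being trivial and contributing nothing), summing Dynkin indices yields $I_\phi=\sum_{k=1}^{n-1}\binom{n-2}{k-1}=2^{n-2}$, while $I_{\textup{Ad}}=2n$ for $\mathfrak{su}(n)$. The observable $\O=-ih_z^{12}=(\sigma_z^1-\sigma_z^2)/4$ is the image of $o=-iZ^{(12)}/2\in\mathfrak{su}(n)$ of norm $\lVert o\rVert_\g^2=1/2$, and any generator $\H=h_z^{ij}$ (or $h_x^{ij}$, $h_y^{ij}$) shares this norm. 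Therefore $\lVert\O\rVert_{\textup{F}}^2=I_\phi/2=2^{n-3}$ and $\lVert\H\rVert_{\textup{K}}^2=I_{\textup{Ad}}/2=n$.

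The main step is the evaluation of $\lVert\rho_\g\rVert_{\textup{F}}^2$. Choose an orthonormal basis $\{T_a\}$ of $\mathfrak{su}(n)$ consisting of off-diagonal anti-Hermitian operators $T_{ij}^\pm\propto(E_{ij}\pm E_{ji})$ together with any orthonormal Cartan basis; the basis-independent identity $\lVert\rho_\g\rVert_{\textup{F}}^2=I_\phi^{-1}\sum_a|\Tr(d\phi(T_a)\rho)|^2$ then applies. The key simplification is the Pauli expansion $\rho=2^{-n}\sum_{S\subseteq[n]}\prod_{i\in S}\sigma_x^i$: for any Pauli string $P$ one has $\Tr(P\rho)=1$ if $P$ contains only $I$ and $\sigma_x$ factors, and $0$ otherwise. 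Examining the three generator families, $h_z^{ij}$ always carries a $\sigma_z$ factor, $h_y^{ij}$ always carries a $\sigma_y$ factor, and $h_x^{ij}$ for $|i-j|>1$ carries a Jordan--Wigner $\sigma_z$ string, so only the nearest-neighbour operators $h_x^{i,i+1}$ survive, with $\Tr(h_x^{i,i+1}\rho)=-i/4$. The corresponding basis elements $T_{i,i+1}^+$ (related to $h_x^{i,i+1}$ by the factor $\sqrt{2}$) each contribute $|\sqrt{2}(-i/4)|^2=1/8$, so summing the $n-1$ adjacent pairs gives $\lVert\rho_\g\rVert_{\textup{F}}^2=(n-1)/(8I_\phi)=(n-1)/2^{n+1}$. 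Substituting into the variance formula produces $\textup{GradVar}=n(n-1)/[16(n^2-1)^2]=n/[16(n-1)(n+1)^2]=\Theta(1/n^2)$, as claimed. The main technical obstacle is merely bookkeeping of the various normalization conventions linking abstract Lie-algebra norms, Frobenius norms, and Killing norms through the indices $I_\phi$ and $I_{\textup{Ad}}$; the decisive qualitative observation is that the $X$-polarized initial state overlaps only with the nearest-neighbour hopping generators, producing a linear-in-$n$ numerator that combines with $d_\g^2\sim n^4$ to give the announced $1/n^2$ scaling.
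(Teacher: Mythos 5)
Your proposal is correct and follows essentially the same route as the paper's proof: both apply the simple-group variance formula with $d_\g = n^2-1$, $\lVert o\rVert_\g^2 = \lVert h\rVert_\g^2 = 1/2$, $I_{\textup{Ad}} = 2n$, and the same key observation that on $|+\rangle^{\otimes n}$ only the $n-1$ nearest-neighbour $h_x^{i,i+1}$ generators have nonzero overlap (the $h_y$, $h_z$, and longer-range $h_x$ generators vanish due to $\sigma_y$, $\sigma_z$, or Jordan--Wigner strings), yielding the identical exact value $\textup{GradVar} = n(n-1)/[16(n^2-1)^2]$. The only cosmetic difference is that you compute $I_\phi = 2^{n-2}$ explicitly via Dynkin indices of the $\wedge^k\mathbb{C}^n$ components, whereas the paper uses the rearranged form of the variance (Equation~\eqref{eq:var_alter}) in which $I_\phi$ cancels, making that computation unnecessary.
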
}

Thus, we also have no BP with the initial state being the uniform superposition.
We numerically verified the predictions for the various initial states as shown in Fig.~\ref{fig:su_compound}.

Finally, we see how the result can be extended to cover single-qubit measurements. %

{\begin{corollary}\label{cor:compound_non_lasa_single}
    For a quantum compound ansatz with an observable that is composed of single qubit measurements, and if the initial state is a computational basis state or the uniform superposition of all computational basis states, then the following holds:
    \begin{align}
        \textup{GradVar} \in \Omega\left(\frac{1}{\textup{poly}(n)}\right).
    \end{align}
\end{corollary}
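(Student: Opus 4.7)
The plan is to reduce the non-LASA case to the LASA case by peeling off the component of $\O$ that commutes with the entire dynamical group. Concretely, if one can decompose $\O = \O_{\text{LASA}} + \O_{\text{comm}}$ with $i\O_{\text{LASA}} \in d\phi(\g)$ and $[\O_{\text{comm}}, d\phi(x)] = 0$ for all $x \in \g$, then since the dynamical group is connected one has $\U_{g^{+}}\O_{\text{comm}}\U_{g^{+}}^{\dagger} = \O_{\text{comm}}$ and $[\H, \O_{\text{comm}}] = 0$, so the commutant piece contributes nothing to the abstracted gradient. Hence $\textup{GradVar}[\O] = \textup{GradVar}[\O_{\text{LASA}}]$, and since the compound ansatz has simple DLA $\mathfrak{su}(n)$, Theorem \ref{thm:simple_group} applies directly to $\O_{\text{LASA}}$.

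Next I would carry out the decomposition explicitly for a single-qubit measurement observable $\O = \sigma_z^i$. Writing
\[
    \sigma_z^i \;=\; \frac{1}{n}\sum_{j=1}^{n}(\sigma_z^i - \sigma_z^j) \;+\; \frac{1}{n}\sum_{j=1}^{n}\sigma_z^j,
\]
the first summand lies in $d\phi(\g)$ because $\sigma_z^i - \sigma_z^j = 2\,d\phi(Z^{(ij)})$, while the second is the rescaled total magnetization, a function of Hamming weight that commutes with the Hamming-weight-preserving compound circuit. Orthogonality of the two pieces under the Frobenius inner product is immediate from Pauli-string disjointness against the $X^{(kl)}$- and $Y^{(kl)}$-type generators, together with the vanishing overlap $\sum_{j,k}\Tr((\sigma_z^i - \sigma_z^j)\sigma_z^k) = 0$ for the $Z$-type generators. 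A short Pauli-trace calculation then gives $\|\O_{\text{LASA}}\|_{\textup{F}}^{2} = (1 - 1/n)\,2^n$, and by linearity the same decomposition handles any linear combination of single-qubit $\sigma_z^j$'s.

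Plugging into Theorem \ref{thm:simple_group} with $\|\H\|_{\textup{K}}^{2} = \Theta(n)$ for a DLA generator such as $h_z^{ij}$ and with $d_{\g} = n^2 - 1$ yields
\[
    \textup{GradVar} \;=\; \Theta\!\left(\frac{2^{n}\,\|\rho_{\g}\|_{\textup{F}}^{2}}{n^{3}}\right).
\]
The lower bound $\|\rho_{\g}\|_{\textup{F}}^{2} = \Omega(2^{-n})$ is precisely what is extracted in the proofs of Theorems \ref{thm:compound_basis_state} and \ref{thm:compound_uniform_sup}, valid both for Hamming-weight computational basis states and for $|+\rangle^{\otimes n}$. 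Combining gives $\textup{GradVar} = \Omega(1/n^{3})$ and hence $\Omega(1/\poly(n))$. The main obstacle is bookkeeping: verifying cleanly that $\O_{\text{comm}} = \frac{1}{n}\sum_{j}\sigma_z^{j}$ really is the Frobenius-orthogonal complement of the projection onto $d\phi(\g)$ (so that the abstracted-gradient reduction matches the hypotheses of Theorem \ref{thm:simple_group} without stray scalar factors), and extracting the explicit $\Omega(2^{-n})$ lower bound on $\|\rho_{\g}\|_{\textup{F}}^{2}$ from the earlier proofs rather than from their stated conclusions.
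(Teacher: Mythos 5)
Your proof is correct, and it takes a genuinely different route from the paper's. The paper never decomposes the observable: it takes the already-computed LASA variance for $\O = \sigma_i^z - \sigma_j^z$, expands it as $2\,\textup{Var}[\partial\langle \sigma_i^z\rangle] - 2\,\textup{Cov}[\partial\langle \sigma_i^z\rangle, \partial\langle \sigma_j^z\rangle]$ using the fact that a qubit-swapping permutation lies in the compound group (so the two single-qubit gradients are identically distributed), and then applies Cauchy--Schwarz, $\textup{Var} \geq \lvert\textup{Cov}\rvert$, to conclude $\textup{Var}[\partial\langle \sigma_i^z\rangle] \geq \frac{1}{4}\textup{Var}[\partial\langle \sigma_i^z - \sigma_j^z\rangle] = \Omega(1/\textup{poly}(n))$. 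That argument is shorter but yields only a one-sided bound with the covariance left unknown; indeed Fig.~\ref{fig:single_z} explicitly records that the resulting zero-covariance prediction disagrees with numerics. Your commutant-peeling reduction is sharper: since $\sum_j \sigma_z^j$ is a function of the Hamming-weight projectors, it is fixed by conjugation with every element of the dynamical group and commutes with $\H$, so the abstracted gradient of $\sigma_z^i$ equals that of $\O_{\textup{LASA}} = \frac{1}{n}\sum_j(\sigma_z^i - \sigma_z^j)$ \emph{pointwise} in $(g^+, g^-)$, and Theorem~\ref{thm:simple_group} then gives the \emph{exact} variance (in the paper's normalization, $\frac{4(n-1)k(n-k)}{n(n^2-1)^2}$ for a Hamming-weight-$k$ basis state, using $\lVert o \rVert_\g^2 = 4(n-1)/n$). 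This not only implies the corollary but also determines the covariance, $\textup{Cov} = -\frac{4k(n-k)}{n(n^2-1)^2}$, quantitatively explaining both the mismatch and the large-$n$ convergence visible in Fig.~\ref{fig:single_z}; it is in effect the simplest instance of the paper's beyond-LASA framework in the \emph{Supplementary Information}, where the complement component of $\O$ lies entirely in trivial irreducible components and hence contributes nothing.

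Two small caveats. First, your ``by linearity'' extension has a degenerate exception: for $\O \propto \sum_j \sigma_z^j$ (all coefficients equal) you get $\O_{\textup{LASA}} = 0$ and $\textup{GradVar} = 0$, so the stated $\Omega(1/\textup{poly}(n))$ fails for that combination --- though this looseness afflicts the corollary's wording in the paper equally, since its proof also only treats a single $\sigma_z^i$. Second, your bookkeeping via Frobenius norms silently carries the representation index $I_\phi = \Theta(2^n)$ inside both $\lVert \O_{\textup{LASA}}\rVert_{\textup{F}}^2 = (1-1/n)2^n$ and $\lVert \rho_\g \rVert_{\textup{F}}^2 = \Omega(2^{-n})$; these cancel exactly, but the cleaner route is the $I_\phi$-free form of the variance used in the proofs of Theorems~\ref{thm:compound_basis_state} and~\ref{thm:compound_uniform_sup}, where the earlier computations hand you $I_\phi \lVert \rho_\g \rVert_{\textup{F}}^2 = k(n-k)/n$ and $(n-1)/8$ directly, with no exponential factors to track.
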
}
We verify these predictions in Fig.~\ref{fig:single_z}.
\begin{figure}
    \centering
    \includegraphics[width=0.45\textwidth]{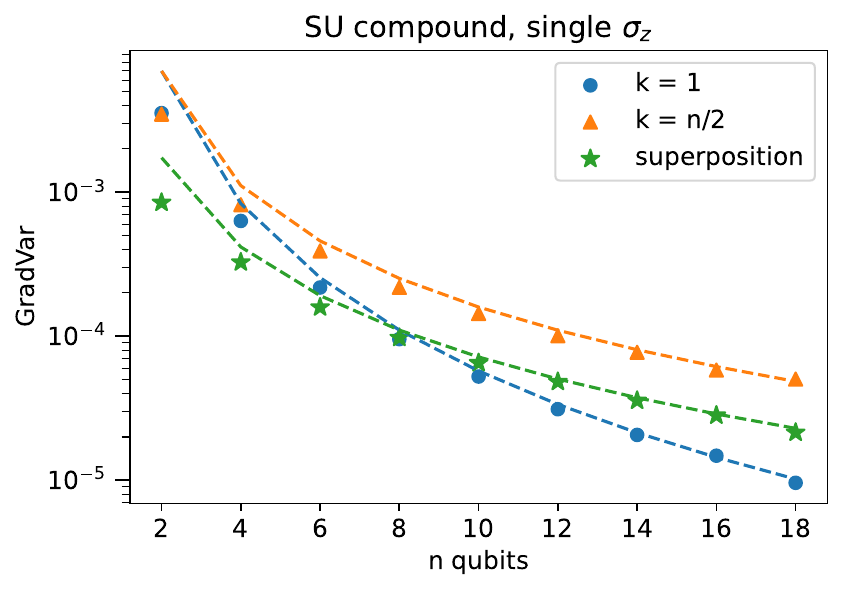}
    \caption{Gradient variance scaling for SU compound layers, non-LASA. The setup is identical to the LASA case, except that here the measurement operator is a single $\sigma_z/4$. We show the analytical prediction derived from the LASA case as explained in the text, and therefore we see a disagreement with numerics, implying that the covariance term is nonzero. Still the scaling is similar, and additionally the numerics converge to the prediction at larger system sizes. {The error bars are too small to plot. Additional information on the numerics is in the \emph{Supplementary Information}.}}
    \label{fig:single_z}
\end{figure}

This answers an open question proposed in Ref.~\cite{cherrat2023quantum}. As a final note, even though $\sigma_i^z$ does not lie in the DLA, single-qubit expectations of observables with respect to the compound ansatz starting from a product state are still know to be classically simulatable~\cite{Brod_2016}.

{Lastly, we present another setting in which  the observable does not lie in the DLA, but this time, the quantum compound ansatz has a BP.
\begin{theorem}
\label{thm:compound_projector}
    For the quantum compound ansatz if the initial state is a computational basis state with Hamming-weight $\frac{n}{2}$ and the observable is a rank-one projector onto another computational basis state in this space, then
    \begin{align}
        \textup{GradVar} \in \mathcal{O}\left({n \choose n/2}^{-1}\right).
    \end{align}
\end{theorem}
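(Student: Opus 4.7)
The plan is to work directly with the second-moment integral and exploit that the compound ansatz preserves Hamming weight, so both $\rho = |a\rangle\langle a|$ and $\O = |b\rangle\langle b|$ are supported on the weight-$n/2$ sector $W$ with $\dim W = D = \binom{n}{n/2}$, on which the dynamical group $\textup{SU}(n)$ acts via the irreducible fundamental representation $\wedge^{n/2}\mathbb{C}^{n}$. This irreducibility will let us invoke a single Schur average, which is enough to close the argument. Theorem~\ref{thm:compact} does not apply directly because $i\O \notin \g$.

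First I would reduce the abstracted gradient to a simple bilinear quantity. Setting $|\phi\rangle := \U_{g^+}|b\rangle$ and $|\psi\rangle := \U_{g^-}^{\dagger}|a\rangle$, substituting the rank-one projectors into the definition of $\partial\langle\O\rangle_\rho$ and using anti-Hermiticity of $\H$ (i.e., $\langle\phi|\H|\psi\rangle = -\overline{\langle\psi|\H|\phi\rangle}$) to combine the two commutator terms, a short calculation gives
\begin{equation*}
\partial\langle\O\rangle_\rho = 2\,\textup{Re}(\alpha\bar\beta), \qquad \alpha := \langle\psi|\H|\phi\rangle, \quad \beta := \langle\psi|\phi\rangle.
\end{equation*}
Squaring and applying Cauchy--Schwarz twice yields
\begin{equation*}
(\partial\langle\O\rangle_\rho)^2 \leq 4|\alpha|^2|\beta|^2 \leq 4\,\|\H\|_{\textup{op}}^2\,|\beta|^2,
\end{equation*}
since $|\phi\rangle$ and $|\psi\rangle$ are unit vectors and every compound generator $X^{(ij)}, Y^{(ij)}, Z^{(ij)}$ acts with operator norm $O(1)$ on $W$.

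The final step is Schur's lemma. Because both $|\phi\rangle$ and $|\psi\rangle$ lie in $W$ and $\wedge^{n/2}\mathbb{C}^{n}$ is irreducible for $\textup{SU}(n)$, averaging over $g^{+}$ with $g^{-}$ fixed gives
\begin{equation*}
\mathbb{E}_{g^{+}}[|\beta|^{2}] = \langle\psi|\,\mathbb{E}_{g^{+}}[|\phi\rangle\langle\phi|]\,|\psi\rangle = \frac{1}{D}.
\end{equation*}
The mean of the abstracted gradient vanishes (as already noted in the paper), so the variance equals the second moment and the estimate closes as $\textup{GradVar} \leq 4\|\H\|_{\textup{op}}^{2}/D = O(\binom{n}{n/2}^{-1})$.

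The main subtlety is bookkeeping: producing the $2\,\textup{Re}(\alpha\bar\beta)$ form with the correct sign from anti-Hermiticity of $\H$, and verifying that $\U_{g^{\pm}}$ preserves $W$ so that Schur's lemma is invoked inside $W$ rather than on the ambient $2^n$-dimensional Hilbert space. No Weingarten calculus or explicit decomposition of $\wedge^{n/2}\mathbb{C}^{n}\otimes\wedge^{n/2}\mathbb{C}^{n}$ is needed; Cauchy--Schwarz sacrifices just enough precision that only a first-order Schur average remains.
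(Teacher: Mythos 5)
Your proposal is correct, and it reaches the stated bound by a genuinely different and considerably more elementary route than the paper. Your reduction $\partial\langle\O\rangle_\rho = 2\,\textup{Re}(\alpha\bar\beta)$ checks out (skew-Hermiticity of $\H$ gives $\langle\phi|\H|\psi\rangle = -\overline{\langle\psi|\H|\phi\rangle}$, so the two commutator terms combine), and the rank-one structure of \emph{both} $\rho$ and $\O$ lets you get away with only the first-moment twirl $\int_G \U_{g^+}|b\rangle\langle b|\U_{g^+}^{\dagger}\,dg^+ = \1_W/D$ on the single irreducible block $W=\wedge^{n/2}\mathbb{C}^{n}$, together with $\|\H\|_{\textup{op}}\le 1/2$ for a compound generator restricted to $W$ (and note $\textup{Var}\le\mathbb{E}[X^2]$ makes even the vanishing-mean lemma dispensable for an upper bound). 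The paper instead attacks the second moment head-on: it decomposes $\wedge^{n/2}\mathbb{C}^{n}\otimes\wedge^{n/2}\mathbb{C}^{n}$ via the Pieri rule, shows through the Gelfand--Cetlin/SSYT description that $|S\rangle\otimes|S\rangle$ --- hence both $\rho^{\otimes 2}$ and $\O^{\otimes 2}$ --- lies in the single irreducible component $V_{\lambda_0}$ with $\lambda_0=(2,\dots,2)$, evaluates $\dim V_{\lambda_0}=\binom{n}{n/2}^2\frac{n+1}{(n/2+1)^2}$ by Jacobi--Trudi, and after SWAP-invariance manipulations obtains $\textup{GradVar}\le 2\|\H\|_2^2/\dim V_{\lambda_0}$. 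The trade-off: your argument needs no Schur--Weyl duality or tensor-square bookkeeping and delivers exactly the stated $\mathcal{O}\bigl(\binom{n}{n/2}^{-1}\bigr)$, whereas the paper's heavier calculation yields the strictly stronger $\mathcal{O}\bigl(n\binom{n}{n/2}^{-2}\bigr)$ --- which is precisely why the caption of Fig.~\ref{fig:su_compound_proj} calls the stated bound ``very loose.'' Your Cauchy--Schwarz step $|\alpha|^2\le\|\H\|_{\textup{op}}^2$ is exactly where that extra factor of roughly $\binom{n}{n/2}/n$ is surrendered, as your closing remark anticipates.
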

We verify the scaling in Fig.~\ref{fig:su_compound_proj}.}

\begin{figure}
    \centering
    \includegraphics[width=0.45\textwidth]{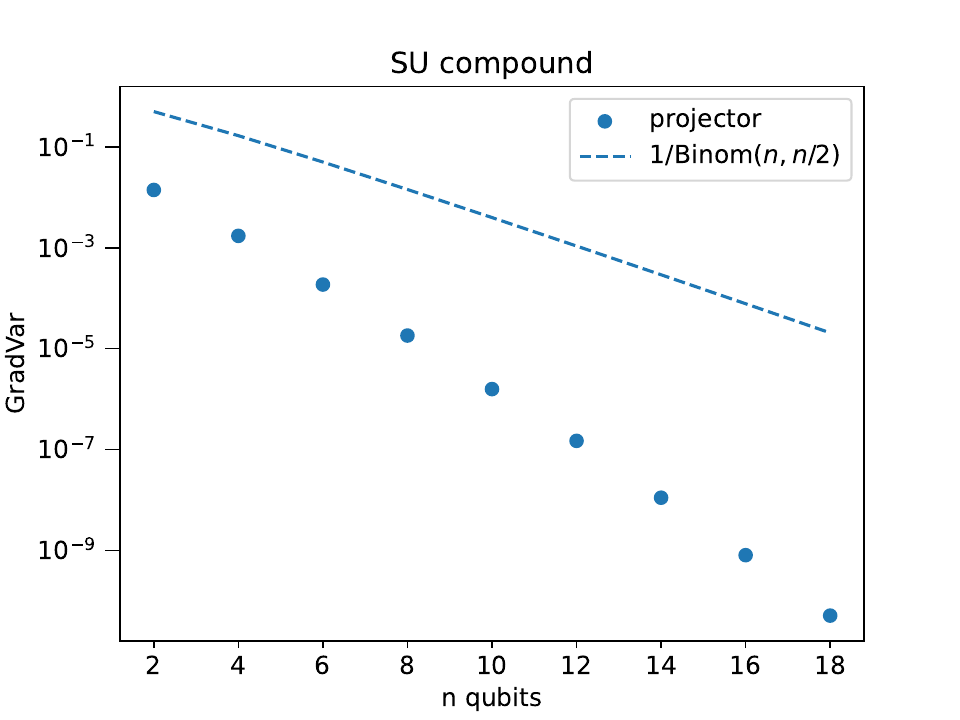}
    \caption{Numerics for gradient variance of SU compound layer, with input state is a computation basis state of Hamming weight $n/2$ and the observable is a projection onto the same state. The resulting algorithm is not a LASA, and indeed displays a BP. We also show the upper bound of ${n \choose n/2}^{-1}$ which appears to be very loose. {The error bars are too small to plot. Additional information on the numerics is in the \emph{Supplementary Information}.}}
    \label{fig:su_compound_proj}
\end{figure}

{Intuitively, the above decay comes from the fact our choice of observable and initial state are rank one projectors, and thus the overlap of traceless parts of both operators will spread across an exponentially large subset of  $\mathfrak{su}$.}
{Theorem \ref{thm:compound_projector} is interesting because the compound ansatz is not very expressive and the depth of the circuit exceeds the shallow regime of $O(\log(n))$~\cite{Cerezo_2021}. We note that the cost function we choose is still global.}

{The details of how the numerical results were obtained are described in the \emph{Supplementary Information}.} %

\subsection{Comparison with Previous Approaches}
\label{sec:comparison}

{As mentioned in the Introduction, previous approaches have taken a state-first or Schr\"{o}dinger picture view point. Specifically, under the action of $G$, the quantum state space $V$ will decompose into invariant subspaces:
\begin{equation}
    V = \bigoplus_{\kappa}V_{\kappa},
\end{equation}
each of which is acted upon by the subrepresentation $\phi_\kappa(G)$. This decomposition is in line with the symmetries that the ansatz obeys, i.e. its commutant~\cite{nguyen2022theory}. If the initial state $\rho \in V_{\kappa}$, then since $G$ preserves this space, the variance calculation restricts to integrating over $\phi_\kappa(G)$. If the restriction of the DLA $d\phi_{\kappa}(\g)$ to the invariant subspace is isomorphic to $\mathfrak{su}(\dim V_{\kappa})$, then one says that PQC is \emph{subspace controllable} on $V_{\kappa}$, otherwise, it is \emph{subspace uncontrollable}. The calculation is possible in the subspace-controllable setting via the Schur-Weyl duality~\cite{Larocca2022diagnosingbarren}, but the subspace uncontrollable setting poses significant obstacles to the calculation of the second moment (Equation~\eqref{eq:sec_moment}) using this approach.}

{In our setting we are instead using the Heisenberg picture and, assuming LASA, considering the action of $d\phi(\g)$ on itself via conjugation, so $V = d\phi(\g)$ in this case and $d\phi$ is the adjoint representation.
Notice that if the DLA is reductive (Equation~\eqref{eqn:splitting}) and $\phi$ is faithful (injective), the decomposition of $V$ respects the decomposition into simple ideals:
\begin{align}
\label{eqn:spliting}
d\phi(\g) = d\phi(\bigoplus_{\kappa} \g_{\kappa}) = \bigoplus_{\kappa} d\phi(\g_{\kappa}) = \bigoplus_{\kappa} d\phi(\g)_{\kappa}.
\end{align}
Thus, the Lie algebra being reductive implies that the adjoint representation splits into irreducible invariant subspaces, which are precisely the simple ideals $d\phi(\g_{\kappa})$. As detailed in \emph{Methods}, this is sufficient to calculate the second moment for any compact Lie group.}

{So in this setting, we always know the invariant subspaces and the representation acting on them, namely the $d\phi(\g_{\kappa})$ and the corresponding adjoint representation. This is a significant simplification from the Schr\"odinger picture approach and enables us to completely circumvent the obstacles posed by the subspace uncontrollable setting. Notice finally that now the invariant subspaces $d\phi(\g_{\kappa})$ reflect the symmetries that are preserved by the evolution of observable instead of the state, so while related this is a different concept of PQC symmetry than the one prior work had explored.}

{Lastly, we would like to emphasize that DLA does not always split into a direct sum over the decomposition of  $V$ into $V_{k}$ for an arbitrary unitary representation. However, this does hold if $d\phi(\g)_{\kappa}$ for subspace $V_{\kappa}$ is simple, like it is for the adjoint and in Ref.~\cite{schatzki2022theoretical}. More specifically, the condition implies that $d\phi(\g)_{\kappa}$ must then be $d\phi(\g_{\alpha})$ for some simple ideal $\g_{\alpha}$.}

\section{Discussion}
In this work we present a general framework for diagnosing the BP phenomenon in Lie algebra supported ans\"{a}tze, which include popular PQCs, such as HVA, QAOA and various equivariant QNNs. Our main contribution is a method that explains the previously mysterious connection between the dimension of the DLA and the rate at which gradients decay. This method has enabled us to analyze the gradient variance for subspace uncontrollable circuits, such as the quantum compound ans\"{a}tze, which was not previously possible with existing techniques from literature.

We note that the kinds of circuits where the simulatability results of Ref.~\cite{goh2023liealgebraic} apply are exactly LASAs. In fact, many of the techniques employed here are similar. As the aforementioned paper links the dimension of the DLA to the performance of the classical simulation of expectations via their algorithm $\g$-sim, we see that at least for LASAs there is a connection between vanishing gradients and simulatability, in the sense that a LASA with polynomial DLA does not see BPs but is simulatable. We note that our results could be applied to the DLAs that have been classified by Ref.~\cite{wiersema2309classification}.

Regarding general VQAs, {when the observable has support outside of the DLA,  we show in the \emph{Supplementary Information} that the same techniques used in the LASA setting can be used to obtain gradient variance expression for general ansatz. Unfortunately,} it can be challenging to determine gradient variance scaling from these expressions in general. Characterizing the gradient variance in this setting would potentially allow for constructing ans\"{a}tze that both do not have BPs and do not have classically simulated expectations. Existing literature has already shown that when the observable lies in the DLA and the DLA has polynomially-growing dimension, then the computation of expectation values can be classically simulated. Potentially, the gradient variance can be shown to still scale inversely with the DLA dimension when the observable has only some small support outside of the DLA, as we have shown for the quantum compound ansatz (Corollary \ref{cor:compound_non_lasa_single}).

Lastly, BPs only correspond to one of two issues that plague VQAs. As mentioned earlier, like BPs, the convergence of VQAs has also only been theoretical characterized in the subspace-controllable setting~\cite{you2022convergence}. Potentially, the framework we have developed can be applied to understanding the projected gradient dynamics that occurs in the uncontrollable setting.
\\

\emph{Note on  Ref.}~\cite{ragone2023unified}:
During the writing of the manuscript, we became aware through a comment in~\cite{wiersema2309classification} that Michael Ragone \textit{et al.} have independently obtained a proof of an extension of the conjecture in Ref.~\cite{Larocca2022diagnosingbarren}. This was later released in Ref.~\cite{ragone2023unified}. We encourage the reader to review both papers for a richer picture of the solution, however we summarize here the most important differences between our works. The main one is that the work of Ragone \textit{et al.} focuses on cost function concentration as opposed to concentration of the partial derivatives. The authors mention, by citing Ref.~\cite{Arrasmith_2022}, that loss function concentration implies concentration of the partial derivatives, and thus provide bounds. However, in our case, we obtain exact expressions for the variance of the partial derivatives, thus revealing the connection between the gradient variance scaling and the Killing norm of the generators. In addition, we include explicit formulae for the gradient variance for the quantum compound ansatz in commonly-used settings, which leads to the novel prediction that it can avoid BPs under Haar initialization. Lastly, we include a discussion on the application of our techniques to observables that lie outside of the DLA. The work by Ragone \textit{et al.} however does include a broader discussion that links BPs in symmetric ans\"atze to other known causes of BPs, including cost-function-induced~\cite{cerezo2021cost} and noise-induced~\cite{wang2021noise}, and thus places the result into a wider context.

{\section{Methods}
In this section, we formally derive the connection between the DLA dimension and the gradient variance, leading to our theory of BPs. Specifically, we present the proofs of the majority of the theorems shown in the \emph{Results} section, the rest are left to the \emph{Supplementary Information}. The main tools that we utilize are the concepts of the adjoint representation and Schur orthogonality. }

{\subsection{The Adjoint Representation Connection}
We start by providing some explanation as to why the connection between the DLA dimension and BPs that agrees with existing numerical evidence is not obvious. It will be the adjoint representation that makes the relationship clear and allow for exact computation of the gradient variance that agrees with existing numerics.}

{As in earlier parts of the text,  the dynamical group $\G$ associated to a periodic ansatz is a unitary representation of some other Lie group $G$. Thus the representation $\phi : G \rightarrow \text{SU}(2^n)$ corresponds to $G$ acting on the $n$-qubit Hilbert space $V$ and $\phi(G) = \G$. Let $\mathcal{M}(\mathbb{C}, 2^{2n})$ denote the set of $2^{2n} \times 2^{2n}$ complex matrices.}

{Before proceeding we make a small note on the compactness of  the dynamical group.} While the dynamical group $\G$ is obviously connected, {it may not be compact (due to lack of closure).} An example is the irrational flow on a torus that occurs when the generators $\tilde{\H}_{i}$ have at least two eigenvalues whose ratio is irrational. The action of these generators will lead to non-periodic orbits. Notice that such non-periodic ans\"atze can occur in principle, for example in QAOA on graphs with random weights. {However, since any Lie subalgebra of $\mathfrak{su}(2^n)$ must be the direct sum of compact simple Lie subalgebras and its center~\cite{wiersema2309classification, knapp1996lie}, ignoring  the center also leads to a compact, connected dynamical subgroup. Thus if $\G$ is not closed, this will be the compact dynamical subgroup we consider. Note that it is harmless to ignore the center, since the component of  the observable in the center of $\g$ does not evolve (in a Heisenberg sense) anyways.}

The variance of the gradient, under Haar initialization, relies on the so-called ``second-moment operator'':
\begin{align}\label{eq:twirl}
   \mathcal{T}: \A \mapsto \int_{G}(\U_{g}\otimes \U_{g})\A(\U_{g}^{\dagger} \otimes \U_{g}^{\dagger})\,dg,
\end{align}
which orthogonally projects onto the set of commuting operators (i.e., commutant) of $\{\U_g \otimes \U_g :  \forall g \in G\}$. {Commutation implies that $\forall \A \in\mathcal{M}(\mathbb{C}, 2^{2n}), \mathcal{T}(\A)$ }must respect the  decomposition of $V^{\otimes 2}$ into irreducible components (invariant subspaces). If $V^{\otimes 2}$ has the following decomposition into irreducible components (not grouping by multiplicity)
\begin{align}
    V^{\otimes 2} = \bigoplus_{\lambda} V_{\lambda},
\end{align}
then
\begin{align}\label{eq:moment_decomp}
    \int_{G}(\U_{g}\otimes \U_{g}) \A (\U_{g}^{\dagger} \otimes \U_{g}^{\dagger})\, dg =\sum_{\lambda}\frac{\Tr[\A P_{\lambda}]}{\dim V_{\lambda}}P_{\lambda},
\end{align}
for orthogonal projectors $P_{\lambda}$ onto $V_{\lambda}$. This projection can also be expressed in terms of the well-known Weingarten function~\cite{Collins_2006, collins2021weingarten}. Notice that the Lie algebra appears to play no role in this discussion. In addition, the inverse scaling of dimension of each $V_{\lambda}$ is apparent. Furthermore, while a general theory of such integrals exists~\cite{diez2022expectation}, they are quite challenging to tackle in practice. Most results in quantum information restrict to the case where $G = \text{SU}(2^n)$, where the commutant is easy to characterize. {Specifically, this leads to the well-known result that approximate $2$-designs for $\text{SU}(2^n)$ have BPs~\cite{mcclean2018barren, Larocca2022diagnosingbarren}.}

Fortunately, the integrals appearing in the theory of VQAs turn out to have substantial simplifications, which furnishes the connection to the dimension of the DLA in certain settings. Our results shed much needed light on this apparently unintuitive phenomenon observed in practice. The first key insight is that $\A$ is always a tensor product of two operators, i.e. if $\O$ is the observable in the quantum circuit, then we get second-moment integrals with $\A = i\O \otimes i\O$, that is,
\begin{align}
\label{eqn:adjoint_int}
    &\int_{G}(\U_{g} i\O \U_{g}^{\dagger}) \otimes (\U_{g} i\O \U_{g}^{\dagger})\, dg \\
    &= \int_{G}\text{Ad}_{g}(i\O) \otimes \text{Ad}_{g}(i\O)\, dg ,
\end{align}
where the relation to the well-known adjoint representation, of $G$, i.e. $\text{Ad}_{g}(i\O) = \U_gi\O \U_{g}^{\dagger}$, is apparent when $i\O$ lies in $\g$. This simple observation is critical in enabling concise expressions for the variance of the gradient%
{,  revealing} the inverse dependence on the dimension of the DLA. {Specifically}, given that the dimension of the adjoint representation is $d_{\g}$ the reason for the scaling becomes more plausible. 

{Note that to connect back to \eqref{eq:moment_decomp}, this can also be viewed as a projection of the subspace \begin{align}
S :=\text{span}_{\mathbb{C}}\{ i\O \otimes i\O : i\O \in d\phi(\g)\} \subset \mathcal{M}(\mathbb{C}, 2^{2n})
\end{align} onto the commutant via an operator called the Casimir.}

{The (split quadratic) \emph{ Casimir operator}, $\K$, for representation $\phi$ is defined as:
\begin{equation} \label{eq:casimir}
    \K = I_\phi^{-1} \sum_i \E_i \otimes \E_i,
\end{equation}
where $\{e_i\}$ is an orthonormal basis under the standard norm for $\g$ and $\E_i = d\phi(e_i)$.
We can also use the Casimir to define an orthogonal projector, $P_{\g}$, from  the space of skew-Hermitian operators on $V$, i.e. $\mathfrak{u}(V)$, onto the subspace $d\phi(\g)$, which is useful when we are dealing with objects not completely supported on the Lie algebra:
\begin{align} \label{eq:projector}
\X_{\g} &:= P_\g \X = -\Tr_1((\X\otimes\1)\K) \\
&= -I_\phi^{-1} \sum_i \Tr(\X\E_i)\E_i,\\
\|\X_{\g}\|_{\text{F}}^2 &= -\Tr((\X \otimes \X)\K) = -I_\phi^{-1} \sum_i \Tr^2(\X\E_i)
\end{align}
where $\X \in \mathfrak{u}(V)$ and $\Tr_1$ is the partial trace over the first subspace. One can check that as expected $P_\g d\phi(a) = d\phi(a)$.}

\subsection{Proof of Theorem \ref{thm:simple_group}}

{The following Lemma is fundamental to our main theorem, it may also be  of independent interest. 
The proof can be found in the \emph{Supplementary Information}.
\begin{lemma}
\label{lemma:main}
  Let $G$ be a compact simple Lie group with Lie algebra $\g$.  Suppose $V$ is a finite-dimensional inner product space, $\phi : G \rightarrow \mathcal{U}(V)$ is a unitary representation of $G$, and $\U_g = \phi(g)$. In addition, $a \in \g$, $\mathbf{A} = d\phi(a)$.
Then the following holds:
    \begin{equation}
        \int_G (\U_g \mathbf{A} \U_g^{\dagger})^{\otimes 2}\; dg %
        = \frac{\|\A\|^2_{\text{F}}}{d_{\g}} \K.
    \end{equation}
\end{lemma}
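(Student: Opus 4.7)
The plan is to reduce the integral, which \emph{a priori} lives in $\text{End}(V)^{\otimes 2}$, to a much simpler integral in $\g\otimes\g$ via the intertwining of $d\phi$ with the adjoint action, then use Schur's lemma together with the simplicity of $\g$ to identify the answer up to a single scalar, which is determined by pairing against an invariant bilinear form.

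First I would invoke the standard intertwining relation $d\phi\circ\Ad_g=\Ad_{\phi(g)}\circ d\phi$, which follows from $\phi$ being a Lie group homomorphism, to write $\U_g\A\U_g^{\dagger}=d\phi(\Ad_g(a))$. Pulling the linear map $d\phi\otimes d\phi$ outside the Haar integral then rewrites the left-hand side as $(d\phi\otimes d\phi)(M)$, where
\begin{equation*}
M:=\int_G\Ad_g(a)\otimes\Ad_g(a)\,dg\in\g\otimes\g.
\end{equation*}
By translation invariance of the Haar measure, $M$ is $(\Ad_h\otimes\Ad_h)$-invariant for every $h\in G$. Because $G$ is compact and simple, the adjoint representation on $\g$ is irreducible and of real type, so by Schur's lemma the $\Ad$-invariants in $\g\otimes\g$ form a one-dimensional real subspace, spanned by $C:=\sum_i e_i\otimes e_i$ where $\{e_i\}$ is an orthonormal basis under the standard form. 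Hence $M=cC$ for a single scalar $c\in\mathbb{R}$.

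To pin down $c$, I would pair both sides with $C$ using the inner product on $\g\otimes\g$ induced by $\langle x,y\rangle_{\g}=-\Tr(xy)$. This form is $\Ad$-invariant by cyclicity of the trace, so the left side evaluates to $\int_G\|\Ad_g(a)\|_\g^{2}\,dg=\|a\|_\g^{2}$, while the right side equals $c\,d_\g$, yielding $c=\|a\|_\g^{2}/d_\g$. Applying $d\phi\otimes d\phi$ to the identity $M=cC$ and invoking the relations $\sum_i \E_i\otimes \E_i=I_\phi\K$ and $\|\A\|_{\text{F}}^{2}=I_\phi\|a\|_\g^{2}$ from Section~\ref{sec:review} then collapses everything to the claimed formula.

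The main delicacy is the appeal to Schur's lemma over the reals: for a real irreducible representation of a compact group, the endomorphism algebra could in principle be $\mathbb{R}$, $\mathbb{C}$, or $\mathbb{H}$, so irreducibility alone does not force the one-dimensionality of $(\g\otimes\g)^G$. I would resolve this either by noting that the adjoint representation of a compact simple Lie algebra is always of real type (the Killing form supplies an invariant non-degenerate symmetric bilinear form, precluding the complex and quaternionic cases), or equivalently by complexifying to $\g_{\mathbb{C}}$, which is complex simple, so that complex Schur applies directly and the invariant subspace descends to a one-dimensional real subspace.
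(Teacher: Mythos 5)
Your proposal is correct, and it reaches the paper's formula by a route that is coordinate-free where the paper's is computational. The paper expands $a=\sum_i a_i e_i$, writes the conjugated generators in terms of the matrix coefficients $[\Ad_g]_{ji}$ of the adjoint representation, and evaluates $\int_G [\Ad_g]_{ij}[\Ad_g]_{i'j'}\,dg = \delta_{ii'}\delta_{jj'}/d_\g$ by Schur orthogonality, which directly produces $\frac{1}{d_\g}\sum_i a_i^2 \sum_j \E_j\otimes\E_j = \frac{\|a\|_\g^2}{d_\g}I_\phi\K$. You instead observe that $M=\int_G \Ad_g(a)\otimes\Ad_g(a)\,dg$ is an $(\Ad\otimes\Ad)$-invariant element of $\g\otimes\g$, argue via Schur's lemma that the invariants are one-dimensional and spanned by $C=\sum_i e_i\otimes e_i$, and fix the scalar by pairing against $C$ with the $\Ad$-invariant trace form. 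The two arguments are cousins --- Schur orthogonality is proved by exactly your kind of invariance argument --- but your version makes the structural content (twirling projects onto a one-dimensional commutant spanned by the Casimir) explicit, which is the observation the paper only extracts afterwards as a remark following the lemma; the paper's version, by contrast, requires no discussion of the real endomorphism algebra beyond noting that the matrix coefficients are real.

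One caveat: of your two justifications for real type, the first is wrong as stated. The existence of an invariant non-degenerate symmetric bilinear form does \emph{not} preclude complex or quaternionic endomorphism algebras for a \emph{real} irreducible representation of a compact group --- every real representation of a compact group carries an invariant inner product (e.g., the rotation action of $\text{SO}(2)$ on $\mathbb{R}^2$ has an invariant inner product yet endomorphism algebra $\mathbb{C}$, and in the complex-type case the space of invariants in $\g\otimes\g$ would be two-dimensional, breaking your argument). The Frobenius--Schur criterion you are implicitly invoking classifies \emph{complex} irreducibles. Your fallback is the correct fix and is precisely what the paper does: since $\g$ is compact simple, its complexification $\g_{\mathbb{C}}$ is simple, so the complexified adjoint representation is irreducible (the adjoint is of real type), and complex Schur then yields the one-dimensionality, which descends to the real invariants. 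With that justification in place, your proof is complete.
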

From Lemma~\ref{lemma:main}, it can be seen that the commutant is the one-dimensional subspace spanned by the Casimir operator, i.e.
\begin{equation}
    \mathcal{T}(R) = \frac{\Tr(R^{\dagger}\K)}{d_\g}\K\;\;\; \forall R \in \mathcal{S}.
\end{equation}
We are also going to frequently use the following identity. Let $a \in \g$ and $\A := d\phi(a)$. Also let $\E_i := d\phi(e_i)$ be a basis for the Lie algebra orthonormal under the standard norm. Then
\begin{equation}\label{eq:proj_norm}
    \|\A\|_{\text{F}}^2 = -I_\phi^{-1}\sum_i \Tr^2(\A\E_i),
\end{equation}
which is important as when working with a quantum circuit one often has access to the representation basis $\{\E_i\}$ but not directly to $\{e_i\}$, so it is a convenient shortcut to calculate $\|a\|_\g^2$.}

\begin{proof}[Proof of Theorem \ref{thm:simple_group}]
{As was shown in the \emph{Results} section, we can assume $\textup{GradVar} = \mathbb{E}_{{g^{+}, g^{-}}\sim \mu^{\otimes 2}}[(\partial\<\O\>_{\rho})^{2}]$.}
Let us write the integral for the second moment in full, and rearrange terms appropriately:

\begin{align}
&\mathbb{E}_{{g^{+}, g^{-}}\sim \mu^{\otimes 2}}[(\partial\<\O\>_{\rho})^{2}]\\&=\iint_{G} (\text{Tr}(\U_{g^{-}}i\rho \U_{g^{-}}^{\dagger}[\H, \U_{g^{+}}i\O \U_{g^{+}}^{\dagger}]))^{2} \, dg^+ dg^- \\
&= \iint_G \Tr\bigg\{(i\rho)^{\otimes 2} \; \U_{g^-}^{\otimes 2} ([\H, \U_{g^+} i\O \U_{g^+}^{\dagger}] \nonumber\\ &\otimes [\H, \U_{g^+} i\O \U_{g^+}^{\dagger}]) \U_{g^{-}}^{\dagger\otimes 2} \bigg\} \; dg^+ dg^-.
\end{align}

{Suppose $\X_{+} := \int_{G}(\U_{g^+}i\O \U_{g^+}^{\dagger})^{\otimes 2}\,dg^+$.} Let us ignore the trace and $\rho$, and expand out the commutators:
\begin{align}
&\iint_G  \U_{g^{-}}^{\otimes 2} (\H\U_{g^+} i\O \U_{g^{+}}^{\dagger}-\U_{g^+} i\O \U_{g^+}^{\dagger}\H) \nonumber\\&\otimes (\H\U_{g^+} i\O \U_{g^+}^{\dagger}-\U_{g^+} i\O \U_{g^+}^{\dagger}\H) \U_{g^-}^{\dagger\otimes 2} \; dg^+ dg^-\\
&=\int_{G}  \U_g^{-\otimes 2}\H^{\otimes 2}\X_{+} \U_{g^-}^{\dagger\otimes 2} \;  dg^-\\
&+ \int_{G}  \U_{g^-}^{\otimes 2}\X_{+}\H^{\otimes 2} \U_{g^-}^{\dagger\otimes 2} \; dg^-\\
&- \int_{G}  \U_{g^-}^{\otimes 2}(\H\otimes \1)\X_{+}(\1 \otimes \H) \U_{g^-}^{\dagger\otimes 2} \; dg^-\\
&- \int_G  \U_g^{-\otimes 2}(\1\otimes \H)\X_{+}(\H\otimes \1) \U_{g^-}^{\dagger\otimes 2} \; dg^-.
\end{align}

We end up with four similar terms. Starting with the common inner integral, since $G$ is compact, we can apply Lemma~\ref{lemma:main} and write 
\begin{align}
\X_{+} = \int_{G}(\U_{g^+} i\O \U_{g^+}^{\dagger})^{\otimes 2}\,dg^+ = \frac{\|\O\|_{\text{F}}^{2}}{d_{\g}} \K.
\end{align}
We can plug this expression back into the earlier expression without the trace and $\rho$, and rearranging terms and using $\K := I_\phi^{-1} \sum_i \E_i\otimes\E_i$ gives:
\begin{equation}
    \frac{\|\O\|_{\text{F}}^{2}}{I_\phi d_{\g}} \sum_{k=1}^{d_{\g}}\int_{G} \U_{g}[\H, \E_k]\U_{g}^{\dagger} \otimes \U_{g}[\H, \E_k]\U_{g}^{\dagger}\,dg.
\end{equation}
Now applying the Lemma again, noting that $\H = \sum_{q}h_{q}\E_{q}$, we have:
\begin{align}
&\frac{\|\O\|_{\text{F}}^{2}}{I_\phi d_{\g}^2} \sum_{j,k=1}^{d_{\g}} \|[\H, \E_k]\|_{\text{F}}^2 \K \\
&=\frac{\|\O\|_{\text{F}}^{2}}{I_\phi d_{\g}^2} \sum_{j,k=1}^{d_{\g}}\frac{\Tr([\H, \E_k]\E_{j})\Tr([\H, \E_k]\E_{j})}{I_\phi}\K\\
&=\frac{\|\O\|_{\text{F}}^{2}}{d_{\g}^2} \sum_{q,r, j,k=1}^{d_{\g}} h_{q} h_{r} \frac{\Tr([\E_{q}, \E_k]\E_{j})\Tr([\E_{r}, \E_k]\E_{j})}{I_\phi^2} \K\\
&=\frac{\|\O\|_{\text{F}}^{2}}{d_{\g}^2} \sum_{q,r, j,k=1}^{d_{\g}}h_{q} h_{r} f_{qk}^j f_{rk}^j \K\\
&=\frac{\|\O\|_{\text{F}}^{2}}{d_{\g}^2} \sum_{q,r=1}^{d_{\g}} h_{q} h_{r} \left(-\sum_{j,k=1}^{d_{\g}} f_{qk}^j f_{rj}^k \right) \K\\
&=\frac{\|\O\|_{\text{F}}^{2}}{d_{\g}^2} \sum_{q,r=1}^{d_{\g}} h_{q} h_{r} \left(-g_{qr} \right) \K\\
&=\frac{\|\O\|_{\text{F}}^{2} \|\H\|_{\text{K}}^{2}}{d_{\g}^{2}} \K,
\end{align}
where we have used anti-symmetry of the commutator braket to reveal that  the inner sum is the Killing form (since $\mathfrak{g}$ is a compact simple Lie algebra, the negative of the Killing form is a valid inner product). Note that $f_{qk}^j = \Tr([\E_q, \E_k]\E_j)$ are the structure constants.%

Now, we can reintroduce the trace and $\rho$ to get:
\begin{align}
&\mathbb{E}_{{g^{+}, g^{-}}\sim \mu^{\otimes 2}}[(\partial\<\O\>_{\rho})^{2}] \\
&=\frac{\|\H\|_{\text{K}}^{2}\|\O\|_{\text{F}}^{2} }{d_{\g}^{2}} \Tr((i\rho)^{\otimes 2} \K)\\
&=\frac{\|\H\|_{\text{K}}^{2}\|\O\|_{\text{F}}^{2}  \|\rho_{\g}\|_{\text{F}}^2}{d_{\g}^{2}}.
\end{align}
\end{proof}

\subsection{Proof of Theorem \ref{thm:compact}}
{The following is a generalization of Lemma \ref{lemma:main} to outside the simple group setting. 
The proof can be found in the \emph{Supplementary Information}.}
\begin{lemma}\label{lemma:main_compact}
\label{lem:compact_gen_int}
Let $G$ be a compact and connected Lie group with Lie algebra $\g$.  Suppose $V$ is a finite-dimensional inner product space, $\phi : G \rightarrow \mathcal{U}(V)$ is a unitary representation of $G$, and $\U_g = \phi(g)$. In addition, $a \in \g$, $\mathbf{A} = d\phi(a)$.
Then the following holds:
\begin{align}
\int_{G}(\U_{g}\A\U_{g}^{\dagger})^{\otimes 2} dg %
= \sum_{\alpha}\frac{\|\A_{\g_{\alpha}}\|^2_{\text{F}}}{d_{\g_{\alpha}}} \K_{\g_{\alpha}} + \A_{\mathfrak{c}}^{\otimes 2},
\end{align}
where %
$\A_{\g_{\alpha}}$ is the image of the component of $a$ in $\g_\alpha$ under $d\phi$. Likewise, $\K_{\g_{\alpha}}$ is the Casimir in the subalgebra $\g_\alpha$.
\end{lemma}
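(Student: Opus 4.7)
The plan is to lift Lemma \ref{lemma:main} to the general compact case by exploiting the reductive splitting $\g = \bigoplus_{\alpha}\g_{\alpha} \oplus \mathfrak{c}$. I begin by decomposing $a = \sum_\alpha a_\alpha + a_{\mathfrak{c}}$ so that $\A = \sum_\alpha \A_{\g_\alpha} + \A_{\mathfrak{c}}$. Because $[\g_\alpha, \g_\beta] = 0$ for $\alpha \neq \beta$ and $[\mathfrak{c}, \g] = 0$, the images $d\phi(\g_\alpha)$ pairwise commute and all of them commute with $\A_{\mathfrak{c}}$. Consequently, the conjugation of $\A_{\g_\alpha}$ by any unitary generated by $\g_\beta$ (with $\beta \neq \alpha$) or by $\mathfrak{c}$ is trivial, and $\U_g\A_{\mathfrak{c}}\U_g^\dagger = \A_{\mathfrak{c}}$ for every $g \in G$.

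Next, I invoke the structure theorem for compact connected Lie groups: there is a finite covering $\pi : \tilde{G}_1 \times \cdots \times \tilde{G}_r \times C \twoheadrightarrow G$, where each $\tilde{G}_\alpha$ is the compact simply connected simple group with Lie algebra $\g_\alpha$ and $C$ is the torus with Lie algebra $\mathfrak{c}$. This covering transports the Haar measure on $G$ to a constant multiple of the product Haar measure on the cover, and since the integrand depends on $g$ only through the conjugation action, it descends to a well-defined function on the quotient. Thus the integral over $G$ reduces to a product integral over $\tilde{G}_1 \times \cdots \times \tilde{G}_r \times C$, along which we can write $g = g_1 \cdots g_r g_c$ with the factors acting independently on the different summands of $\A$.

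With this factorization in hand, I expand
\begin{align}
(\U_g\A\U_g^\dagger)^{\otimes 2} = \Bigl(\textstyle\sum_\alpha \U_{g_\alpha}\A_{\g_\alpha}\U_{g_\alpha}^\dagger + \A_{\mathfrak{c}}\Bigr)^{\otimes 2}
\end{align}
and integrate term by term. Off-diagonal terms $\alpha \neq \beta$ and the cross terms involving $\A_{\mathfrak{c}}$ each contain a factor of the form $\int_{\tilde{G}_\alpha}\U_{g_\alpha}\A_{\g_\alpha}\U_{g_\alpha}^\dagger\, dg_\alpha$, which is the orthogonal projection of $\A_{\g_\alpha}$ onto the trivial subrepresentation of the adjoint action of $\tilde{G}_\alpha$; since the adjoint representation of a compact simple Lie group is irreducible and nontrivial, this vanishes. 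The surviving diagonal terms $\alpha = \beta$ are each handled by Lemma \ref{lemma:main} applied to $\tilde{G}_\alpha$, yielding $\frac{\|\A_{\g_\alpha}\|_{\text{F}}^2}{d_{\g_\alpha}}\K_{\g_\alpha}$. Combining with the pure center contribution $\A_{\mathfrak{c}}^{\otimes 2}$, which is invariant and so integrates to itself, gives the claimed formula.

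The main obstacle is the technical justification of the Haar measure factorization and of the vanishing of the first moments when $G$ is only a finite quotient of the product cover rather than a direct product. One must check that the finite central kernel of $\pi$ introduces no fixed vectors in the adjoint representation and that averaging on the cover agrees with averaging on $G$; both follow from standard facts about compact semisimple groups, so the heart of the proof reduces to the tensor-product bookkeeping sketched above together with one application of Lemma \ref{lemma:main} per simple ideal.
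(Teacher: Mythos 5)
Your proof is correct, and it reaches the result by a genuinely different route than the paper's. The paper never leaves $G$: it observes that under the adjoint action of the connected group $G$ the simple ideals $\g_\alpha$ are pairwise non-isomorphic irreducible $G$-modules (each ideal's annihilator contains all the other ideals, so no two can be isomorphic) while the center $\mathfrak{c}$ is a sum of trivial modules, and then invokes Schur orthogonality for matrix coefficients of non-isomorphic irreducibles to annihilate every cross term of $\int_G(\U_g\A\U_g^{\dagger})^{\otimes 2}\,dg$ at once, applying Lemma~\ref{lemma:main} to each surviving diagonal block. You instead pass to the finite cover $\tilde{G}_1\times\cdots\times\tilde{G}_r\times C$, push Haar measure through the covering map, and kill the cross terms by factorizing the product integral into independent first moments $\int_{\tilde{G}_\alpha}\U_{g_\alpha}\A_{\g_\alpha}\U_{g_\alpha}^{\dagger}\,dg_\alpha$, each of which vanishes because a nontrivial irreducible representation has no invariant vectors; the ``obstacle'' you flag at the end is indeed a non-issue, since the pushforward of Haar measure under a surjective homomorphism is Haar by uniqueness and the finite central kernel acts trivially under conjugation, so no invariance or descent problem can arise. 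What your route buys: the cross-term cancellation is reduced to the elementary first-moment fact, bypassing Schur orthogonality between distinct irreducibles. What it costs: you must import the structure theorem for compact connected groups and the measure-pushforward bookkeeping, which the paper's argument avoids entirely; note also that this detour is genuinely load-bearing for you, since on $G$ itself the cross integral $\int_G \Ad_g(a_\alpha)\otimes\Ad_g(a_\beta)\,dg$ does not factor into a product of two first moments without the independent product coordinates the cover provides. Past that point the two proofs coincide: one application of Lemma~\ref{lemma:main} per simple ideal, with the central component passing through unchanged as $\A_{\mathfrak{c}}^{\otimes 2}$.
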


The above result implies that we expect contributions to the variance from the various subalgebras. {Indeed, the final expression} for the variance is remarkably simple, since all the cross terms between different subalgebras vanish, and the abelian subalgebras do not contribute.

\begin{proof}[Proof of Theorem \ref{thm:compact}]
The proof largely follows the strategy of that for simple groups. Define the shorthand $i\O_{\g_{\alpha}} := P_{\mathfrak{g}_{\alpha}}i\O$ and $i\O_{\mathfrak{c}} :=  P_{\mathfrak{c}}i\O$. Like before, we expand the commutator but this time use Lemma~\ref{lemma:main_compact}:
\begin{equation}
\int_{G}(\U_{g^{+}}i\O\U_{g^{+}}^{\dagger})^{\otimes 2} dg^{+} = \sum_{\alpha}\frac{\|\O_{\g_{\alpha}}\|^2_{\text{F}}}{d_{\g_{\alpha}}} \K_{\g_{\alpha}} + \O_{\mathfrak{c}}^{\otimes 2}.
\end{equation}
Now, after applying the commutator and taking the integral over $\U_{g^{-}}$ %
, we find the result is still a summation over $\alpha$ only. This is because, since the subalgebras are ideals, if $\E_k \in d\phi(\g_\alpha)$ then $[\H, \E_k]\in d\phi(\g_\alpha)$, and therefore $\|P_{\g_\beta}[\H, \E_k]\|_{\text{F}} = 0$ if $\beta \neq \alpha$. Thus the cross terms vanish. The contribution from the center also vanishes upon taking the commutator. Thus the result follows.
\end{proof}

\subsection{Proof of Theorem \ref{thm:compound_basis_state}}

Using the identities in Section~\ref{sec:review} we can get forms of the theorems that are practically useful. For example, in the simple group case,
\begin{equation}\label{eq:var_alter}
\textup{GradVar}
=\frac{I_{\text{Ad}} \lVert o \rVert_{\g}^{2}\lVert h\rVert_{\g}^{2}}{d_{\g}^{2}} \sum_i \Tr^2(i\rho \E_i),
\end{equation}
where $\E_{i} = d\phi(e_i)$ for orthonormal basis $\{e_i\}$ for $\g$. This turns out to be the most useful form of the result for the examples below because we will have explicit knowledge of the representation $\phi$. In addition, the representation index, $I_{\phi}$, drops out. 

\begin{proof}[Proof of Theorem \ref{thm:compound_basis_state}] For $\mathfrak{su}(n)$, $d_\g=n^2 - 1$ and the Dynkin index of the adjoint representation is $I_\Ad = 2n$.
Now we work out the state's projected norm.
Choose $\rho$ to be a computational basis state, where it can be shown that $\rho \otimes \rho$ lies in an irreducible subrepresentation of the tensor product representation $\phi \otimes \phi$ (see \emph{Supplementary Information}). Then we only need to focus on the simultaneously diagonal elements of the Lie algebra, that is, the Cartan subalgebra $\mathfrak{h}$. To calculate the Casimir eigenvalue we need to find an orthogonal basis  $\mathscr{H}$ for $\mathfrak{h}$, which cannot be $\{h^{ij}_z\}_{i\neq j}$ since the elements are not linearly independent. 

We can construct a suitable basis for $\mathfrak{h}$ using the formula
\begin{align}
\mathscr{H} &= \frac{i}{2} \bigcup_{m=1}^{n-1} \frac{1}{\sqrt{m(m + 1)}} \left\{m\sigma^z_{m+1} - \sum_{i=1}^m \sigma^z_j \right\} \\
&= \frac{i}{4}\{\sqrt{2}(\sigma^z_2 - \sigma^z_1), \frac{\sqrt{2}}{\sqrt{3}}(2\sigma^z_3 - \sigma^z_2 - \sigma^z_1), \nonumber\\& \frac{1}{\sqrt{3}}(3\sigma^z_4 - \sigma^z_3 - \sigma^z_2 - \sigma^z_1), ...\}
\end{align}
even though this is expressed more cleanly with Pauli $z$s, each element can be obtained as a linear combination of the $\{h^{ij}_z\}$ generators. One can check that the elements are all orthogonal and the norm of their pullback on $\g$ is 1, and the resulting subalgebra has the correct dimension: $\dim \mathfrak{h} = \text{rank}\; \mathfrak{su}(n) = n-1$.
With this, one can explicitly calculate the diagonal part of $I_\phi\K$ for any $n$,
\begin{equation}\label{eq:diag_1}
    I_\phi\text{diag}(\K) = \sum_{\H_i \in \mathscr{H}} \H_i \otimes \H_i
\end{equation}
however the calculation is unwieldy. Fortunately, we can directly infer the final form from symmetry arguments, since by inspection: $\text{diag}(\K)$ is composed of sums of tensor products of two Pauli $z$s, it is symmetric around the tensor product, and furthermore since $\text{SWAP}_{ij}\otimes \text{SWAP}_{ij} \in \phi(G) \otimes \phi(G)$ it must be invariant upon any simultaneous permutation of the qubit indices on the subspaces. Thus,
\begin{equation}\label{eq:diag_2}
    I_\phi\text{diag}(\K) = A \sum_{i=1}^n \sigma^z_i \otimes \sigma^z_i + B \sum_{i\neq j} \sigma^z_i \otimes \sigma^z_j.
\end{equation}
To find the value of $A$, evaluate $\text{diag}(\K)$ on the state $|\Psi\> = |+...+0\>^{\otimes 2}$ using Eqs.~\eqref{eq:diag_1} and \eqref{eq:diag_2}:
\begin{align}
&I_\phi\<\Psi|\text{diag}(\K)|\Psi\> = -\frac{1}{4n(n-1)} (n-1)^2 \<\Psi|\sigma^z_n \otimes \sigma^z_n|\Psi\> \\&= A \<\Psi|\sigma^z_n \otimes \sigma^z_n|\Psi\> \implies A = -\frac{n-1}{4n},
\end{align}
and for $B$, on $|\Psi'\> = |+...+0\> \otimes |+...+0+\>$:
\begin{align}
&I_\phi\<\Psi'|\text{diag}(\K)|\Psi'\> \\&= \frac{1}{4n(n-1)} (n-1)\<\Psi'|(\sigma^z_n \otimes \sigma^z_{n-1})|\Psi'\> \\&= B \<\Psi'|\sigma^z_n \otimes \sigma^z_{n-1}|\Psi'\> \implies B = \frac{1}{4n}.
\end{align}
Now we use this to evaluate the expectation value of $\K$ on a computational basis state of Hamming weight $k$. The first summation in Eq.~\eqref{eq:diag_2} will be constant and equal to $n$, while the second summation will be equal to the number of distinct bits of equal value minus those of different value, $k(k-1) + (n-k)(n-k-1) - 2k(n-k) = (n-2k)^2 - n$. So overall
\begin{align}
&I_\phi\|\rho_{\g}\|_{\text{F}}^2 = \sum_{\H_i \in \mathscr{H}} \Tr^2(i\rho \H_i) \\&=  \frac{n-1}{4} - \frac{(n-2k)^2 - n}{4n} = \frac{k(n-k)}{n}.
\end{align}
Choosing $\O = i h_z^{12}$ and $\H$ any generator, $\|o\|_\g^2 = 1/2 = \|h\|_\g^2$, and the final result is 
\begin{align}
&\textup{GradVar} = \frac{2n(1/2)^2}{(n^2-1)^2}\frac{k(n-k)}{n} \\&= \frac{k(n-k)}{2(n^2-1)^2} \in \Omega\left(\frac{1}{n^3}\right).
\end{align}
\end{proof}

\subsection{Proof of Theorem \ref{thm:compound_uniform_sup}}
\begin{proof}[Proof of Theorem \ref{thm:compound_uniform_sup}] For the uniform superposition of computational basis states, $|\psi\> = |+\>^{\otimes n}$, then $\forall i,j,  \<\psi|h_{ij}^y|\psi\> = \<\psi|h_{ij}^z|\psi\> = 0$. The only nonzero terms involve the Pauli-$x$ type generators.
We can form the corresponding orthogonal generators normalized in $\g$ by $\H_{ij}^x = \sqrt{2} h_{ij}^x$. However, even though there are $\binom{n}{2}$, only the $n-1$ with $j=i+1$ do not annihilate on $|\psi\>$ since the others have $\sigma_z$'s in their definition. For these generators, $\<\psi|\H_{ij}^x|\psi\> = -\frac{i}{2\sqrt{2}}$, giving
\begin{equation}
    I_\phi \|P_\g \rho\|_{\text{F}}^2 = -\sum_{i=1}^{n-1} \lvert\<\psi|\H_{i(i+1)}^x|\psi\>\rvert^2 = \frac{1}{8}(n-1),
\end{equation}
and so
\begin{equation}
     \text{GradVar} = \frac{2n(1/2)^2}{(n^2-1)^2}\frac{(n-1)}{8} = \frac{n(n-1)}{16(n^2-1)^2} \in \Theta\left(\frac{1}{n^2}\right).
\end{equation}
\end{proof}

\subsection{Proof of Corollary \ref{cor:compound_non_lasa_single}}
\begin{proof}[Proof of Corollary \ref{cor:compound_non_lasa_single}]
We expand the variance term for the computational basis state case:
\begin{align}
&\frac{2k(n-k)}{(n^2-1)^2} = \text{Var}_{(g_{+}, g_{-}) \sim \mu^{\otimes 2}}[\partial\langle \sigma_i^{z} - \sigma_j^{z} \rangle]
    \\&=\text{Var}_{(g_{+}, g_{-}) \sim \mu^{\otimes 2}}[\partial\langle \sigma_i^{z}  \rangle]
    + \text{Var}_{(g_{+}, g_{-}) \sim \mu^{\otimes 2}}[\partial\langle \sigma_j^{z} \rangle]\\
    &- 2\text{Cov}_{(g_{+}, g_{-}) \sim \mu^{\otimes 2}}[\partial\langle \sigma_i^{z} \rangle , \partial\langle \sigma_j^{z} \rangle] .
\end{align}
Note since a permutation swapping qubit $i$ with $j$ is a valid compound  $\text{SU}$ matrix, we have that $\partial\langle \sigma_i^{z}  \rangle$ and $\partial\langle \sigma_j^{z}  \rangle$ are identically distributed. Thus,
\begin{align}
&\frac{2k(n-k)}{(n^2-1)^2} = 2\text{Var}_{(g_{+}, g_{-}) \sim \mu^{\otimes 2}}[\partial\langle \sigma_i^{z}  \rangle] \nonumber\\&- 2\text{Cov}_{(g_{+}, g_{-}) \sim \mu^{\otimes 2}}[\partial\langle \sigma_i^{z} \rangle , \partial\langle \sigma_j^{z} \rangle].
\end{align}
Due to the above equality and Cauchy--Schwarz, i.e. $\text{Var}_{(g_{+}, g_{-}) \sim \mu^{\otimes 2}}[\partial\langle \sigma_i^{z}  \rangle] \geq |\text{Cov}_{(g_{+}, g_{-}) \sim \mu^{\otimes 2}}[\partial\langle \sigma_i^{z} \rangle , \partial\langle \sigma_j^{z} \rangle]|$ (recall the variances are equal), we can conclude that $\text{Var}_{(g_{+}, g_{-}) \sim \mu^{\otimes 2}}[\partial\langle \sigma_i^{z}  \rangle]$ must only be polynomially vanishing in $n$, which implies no BP for any $k$ and any single qubit $\sigma_z$ measurement. A similar result can be shown to hold for the uniform superposition state.
\end{proof}

\section*{Ackowledgements}
We thank Iordanis Kerenidis for early discussions on BPs in quantum compound ans\"{a}tze, and Aram Harrow for helpful discussions and feedback on the manuscript. We thank Marco Cerezo and Martin Larocca for discussions on the basics of equivariant QNNs and the role of the DLA. We also thank the members of Global Technology Applied Research at JPMorgan Chase for comments and feedback throughout the project. 

\section*{Data Availability}
We make all the data presented in this paper available online at \url{https://doi.org/10.5281/zenodo.10720106}.

\section*{Code Availability}
We make the code required to reproduce the figures presented in this paper available online at \url{https://doi.org/10.5281/zenodo.10720106}.

\bibliography{main}

\appendix

\onecolumngrid
\section{Introduction to Lie Groups and Representation Theory}
\label{sec:intro_to_lie}
This section presents a short intro to the representation theory of  Lie groups and provides sufficient background understanding the results of the paper and their proofs. The only prerequisite is knowledge of some concepts from algebra and topology. For a more detailed introduction the reader is directed to any one of the following fabulous texts \cite{knapp1996lie, hall2013lie, fulton2013representation, goodman2009symmetry, humphreys2012introduction}.

\subsection{Lie Groups, Lie Algebras, Representations}

A \emph{Lie group}, $G$, is a topological group that is also a smooth manifold. The  \emph{Lie algebra}, $\g$, associated with $G$ is the tangent space at the group's identity element and forms a non-associative algebra with the  Lie bracket operation, denoted $[\cdot, \cdot]$. Specifically, the Lie bracket obeys the Jacobi identity: $\forall h, k, j \in \g,$
\begin{align}
 [h, [k, j]] + [k, [j, h]] + [j, [h, k]] = 0,
\end{align}
and is additionally bilinear and skew symmetric.

Our focus will be \emph{compact Lie groups} (or subgroups of compact Lie groups), which are Lie groups whose topology is compact. In the compact setting, it is without loss of generality to restrict our attention to groups of matrices, called matrix Lie groups (see the Peter-Weyl theorem \cite[Corollary 4.22]{knapp1996lie}). This restriction also has the benefit of simplifying some of the more abstract notions mentioned above. First, one can now view the Lie bracket as the matrix commutator. Second, the matrix exponential maps elements of $\g$ onto analytic curves in neighborhood of the identity, i.e. $\forall t \in \mathbb{R}, \gamma(t) = e^{t\cdot \H} \in G$ with $\H \in \g$ being the tangent vector. Third, \emph{compact Lie algebras}, those associated with a compact Lie group, can be assumed to only contain skew-Hermitian matrices. If the topology of $G$ is both compact and connected, then the matrix exponential from $\g$ to $G$ is surjective, i.e. for any $g \in G, \exists h \in \g$ s.t. $g = e^{h}$.

Since we will be only considering compact Lie groups, we can assume the existence of a finite Haar measure \cite{tao2014hilbert, mele2023introduction}, enabling integration over the whole group. This a uniform measure, $\mu$, that is invariant under left and right translation by group elements, i.e. for some fixed $\g$, $\mu(gh) = \mu(hg), \forall h \in G$.  This enables one to integrate over the group and compute various stastical moments. Throughout the paper, all integration, e.g. $\int_{G} f(g) dg$, is with respect to the Haar measure

A \emph{simple Lie algebra} is one that has dimension greater than one and no non-trivial ideals and is \emph{semi-simple} if $\g$ can be decomposed into a direct sum of simple Lie algebras. Note that  (semi-)simple Lie groups have (semi-)simple Lie algebras. 

Let $V$ denote a real or complex, finite-dimensional inner product space, $\mathcal{U}(V)$ denote the group of isometries on $V$, and $\mathfrak{u}(V)$ denote the algebra of skew-Hermitian operators on $V$. Without loss of generality one could take of course this to be $\mathbb{C}^n$ for some $n$, up to isomorphism.  A unitary \emph{Lie group representation} of $G$ is the following smooth homomorphism $\phi : G \rightarrow \mathcal{U}(V)$, and $\phi$ is called \emph{faithful} if it is injective. The representation makes the space $V$ into a $G$-module. 

A \emph{subrepresentation} of $\phi$ is an invariant subspace of $V$, and a representation is \emph{irreducible} if it has no non-trivial subrepresentations.  Compact Lie groups are guaranteed to have finite-dimensional unitary representations, which actually implies that any representation is \emph{reducible}, i.e. can be decomposed as a direct sum of irreducible subrepresentations (also called irreducible components). More specifically, an the vector space on which $G$ is acting can be equipped with an inner product that makes the action unitary. In addition, representations of simple Lie algebras are always faithful or trivial. We will frequently use the notation $\U_{g}$ to denote the element $\phi(g) \in \mathcal{U}(V)$ for some $g \in G$ when the representation $\phi$ and $V$ are clear from the context.

A map $f$ between representations ${\phi}$ and ${\psi}$ is called \emph{equivariant} if $\forall g \in G, f\circ \phi(g) = \psi(g) \circ f$, and if $f$ is bijective, then the representations are \emph{isomorphic as $G$-modules}. \emph{Schur's lemma} states that any equivariant map between irreducible representations is either an isomorphism or the zero map. Furthermore, if  the representation space is complex, then $f$ must be a multiple of the identity and any two equivariant maps are scalar multiplies of one another.

The group representation induces a \emph{Lie algebra representation} $d\phi : \g \rightarrow \mathfrak{u}(V)$, which is the differential of the smooth map $\phi$.  %
Just as group representations respect the group operation (i.e. are homomorphims), Lie algebra representations respect addition and the Lie bracket (the commutator):
\begin{equation}
\label{eqn:commutator}
    [d\phi(x), d\phi(y)] = d\phi([x, y]),
\end{equation}
specifically it is an algebra homomorphism.

Given an orthornomal basis $\{ v_j \}_{j}$ for $V$, for $i, j$, the associated \emph{matrix coefficient} is defined as $\phi_{i,j}(g) = \langle \phi(g)v_i, v_j\rangle$. For example, the matrix coefficients of the adjoint representation are the structure constants with respect to the basis $\{ v_j \}_{j}$ for $\g$. \emph{Schur orthogonality} states that for non-isomorphic irreducible representations $\phi$ and $\psi$: 
\begin{align}
    \int_{G}\phi_{ij}(g)\overline{\psi_{kl}(g)}dg = 0,
\end{align} and for the same representation with orthonormal basis $\{ v_j \}_{j}$ for $V_\phi$:   
\begin{align}
    \int_{G}\phi_{ij}(g)\overline{\phi_{kl}(g)}dg = \frac{\delta_{ik}\delta_{jl}}{\dim V_{\phi}}.
\end{align}

\subsection{Useful Representations and Norms}

There are a few types of representations that we will refer to frequently. If a compact Lie group $G$ contains $n \times n$ unitary matrices,  its \emph{standard representation} is the natural action on $V = \mathbb{C}^n$. The Lie algebra standard representation follows similarly but replacing unitary with skew-Hermitian. The \emph{adjoint representation} consists of $G$ acting on its Lie algebra by conjugation, i.e. $\phi$ is defined by $\phi(g)h =ghg^{-1}, $ $\forall h \in \g$, and $V_{\phi} = \g$. The associated Lie algebra adjoint representation is defined by $d\phi(h)k = [h , k]$  $\forall k \in \g$, and  $V_\phi = \g$. Lastly, the tensor power of a Lie group representation $\phi$, denoted by $\phi \otimes \phi$, is defined as $\forall g \in G, (\phi \otimes \phi)(g) = \phi(g)\otimes \phi(g)$ and acts on $V^{\otimes 2}$. The tensor power of the associated Lie algebra representation, denoted $d\phi \otimes d\phi$, is defined as $\forall h \in \g, (d\phi \otimes d\phi)(h) = d\phi(h) \otimes \1 + \1 \otimes d\phi(h)$ for identity operator $\1$.

Let $\Tr(\cdot)$ denote the standard trace for linear operators. Given a  Lie algebra $\g$, we can set an orthonormal basis for the Lie algebra $\{e_i\}$, where $e_i$ are associated with the standard representation of $\g$, using the \emph{standard trace form}: %
\begin{equation}
    -\Tr(e_ie_j) = \delta_{ij},
\end{equation}
where the negative is to ensure that the form is positive definite. Each representation also has an associated trace form defined in a similar manner, i.e. by multiplying and taking traces of the matrices involved. Specifically, the \emph{Killing form} is the trace form associated with the adjoint representation.

Note that the Lie algebra representation does not, necessarily, preserve these forms, meaning that $d\phi(e_i)$ is not normalized, w.r.t. the trace form for $\phi$, in general. Still, it is a well-known result that for simple Lie algebras any trace form is a scalar multiple of the (nondegenerate) Killing form. 
Furthermore, for compact Lie algebras the trace forms are always real and definite. Thus every two trace forms are a \emph{real} scalar multiple of each other. Thus, for any representation $\phi$ of a compact simple Lie group, we define a scaling constant $I_\phi$ that we call the \textit{index of the representation} (w.r.t. the standard representation) such that:
\begin{equation}
    -\Tr (d\phi(e_i)d\phi(e_j)) = I_\phi \delta_{ij}.
\end{equation}
This is the same as (twice) the Dynkin index for irreducible representations \cite{fuchs1995affine}. We will omit the subscript in the trace when it is obvious which space it is taken in. 
The above discussion also implies that for compact simple Lie algebra, the trace forms induce valid inner produces and norms \cite{knapp1996lie}.

The first norm, which we call the \emph{standard norm}, is one induced by the standard trace form. For any $a \in \g$,
\begin{equation}
    \|a\|_\g^2 = -\Tr(a^2).
\end{equation} 
We call the norm induced by the Killing form, the \emph{Killing norm} and denote it by $\lVert \cdot \rVert_{\text{K}}$. In general, if we are working in a representation $\phi$ then we can define the Frobenius norm of $d\phi(a), a \in \g$:
\begin{equation}\label{eq:frobenius}
    \|d\phi(a)\|_{\text{F}}^2 = -\Tr (d\phi(a)^2).
\end{equation}
All norms are related via the representation index
\begin{align}
&\lVert d\phi(a) \rVert_{\text{F}} = I_{\phi}\lVert a \rVert_{\g}^2\\
&\lVert d\phi(a) \rVert_{\text{K}}^{2} = \lVert a \rVert_{\text{K}}^{2}=I_{\text{Ad}}\lVert a \rVert_{\g}^2 = \frac{I_{\text{Ad}}}{I_{\phi}}\lVert d\phi(a) \rVert_{\text{F}}^2.
\end{align}
Note that the compact Lie algebras are completely classified, and $I_{\text{Ad}}$ is  $\Theta(\sqrt{d_{\g}})$ for all non-exceptional classes (i.e. the dual Coxeter numbers) \cite{habereigenvalues}. In addition, the equality $\lVert d\phi(a) \rVert_{\text{K}}^{2} = \lVert a \rVert_{\text{K}}^{2}$ follows from the faithfulness of representations of simple Lie algebras.

\subsection{Casimir Operators}

As mentioned earlier, the commutator bracket is non-associative product. However, a Lie algebra can be embedded in a larger algebra is associative and equiped with an additional product. This algebra, denoted by $\mathcal{U}(\g)$ is called the \emph{universal enveloping  algebra} of $\g$. The embedding map $i :\g \rightarrow \mathcal{U}(\g)$ satisfies the additional important properly: $\forall h, k \in \g$
\begin{align}
    [i(h) , i(k)] = i(h)i(k) - i(k)i(h),
\end{align}
where juxtaposition represents the additional associative product of $\mathcal{U}(\g)$. Thus, $\mathcal{U}(\g)$ is an associative algebra where the Lie bracket acts as the commutator. We will from now on drop the map $i$ and just use $\cdot$ denote this new associative product.

The algebra $\mathcal{U}(\g)$ is formed by quotienting the tensor algebra of $\g$ so that above desired relation is satisfied. In addition, representations of $\mathcal{U}(\g)$ and $\g$ are one-to-one with each other and respect the new product when extended. 

There is an important element that lies in the center of $\mathcal{U}(\g)$ called the \emph{quadratic Casimir operator} defined as:
\begin{align}
    c = \sum_{j} e_j \cdot e_j,
\end{align}
where $\{e_j\}$ forms a basis for $\g$. With respect to algeba representations, the quadratic Casimir maps to:
\begin{align}
\label{eqn:quadr_casimir_defn}
    \boldsymbol{C}_{\phi} = \sum_{j} d\phi(e_j) \cdot d\phi(e_j),
\end{align}
and under the tensor power of a representation it maps to
\begin{align}
(d\phi \otimes d\phi)(c) &= \sum_{j} (d\phi(e_j) \otimes \1 + \1 \otimes d\phi(e_j)) \cdot (d\phi(e_j) \otimes \1 + \1 \otimes d\phi(e_j)) \\
&= 2\sum_{j} d\phi(e_j) \otimes d\phi(e_j) + (\boldsymbol{C}_{\phi} \otimes \1+ \1 \otimes \boldsymbol{C}_{\phi}).
\end{align}
The first term in the second equality is called the \emph{split quadratic Casimir operator}:
\begin{align}
\label{eqn:split_casimir_defn}
    \K_{\phi} = \sum_{j} d\phi(e_j) \otimes d\phi(e_j).
\end{align}

One may note that Supplementary Equation~\eqref{eqn:quadr_casimir_defn} and \eqref{eqn:split_casimir_defn} have a similar appearance. In fact, as mentioned earlier, the associative product  on $\mathcal{U}(\g)$ was obtained by quotienting the tensor algebra (i.e. $\otimes \mapsto \cdot$). Thus, in the tensor algebra, before quotienting, these two operators are formally identical. In fact, even Ref.~\cite{ragone2023unified} called what we refer to as the split Casimir the quadratic Casimir. 

However, traditionally Supplementary Equation~\eqref{eqn:quadr_casimir_defn}, as written, is what one refers to as the quadratic Casimir and from a representation theoretic point-of-view there are some technical differences between the two. Specifically, $\boldsymbol{C}_{\phi}$ is associated with the representation $\phi$ while $\K_{\phi}$ makes use of the notion of tensor power representation. As such (also shown in~\cite{diez2022expectation}), given irreducible $\phi$, the corresponding split Casimir operator is proportional to the identity on the irreducible components of $\phi \otimes \phi$ but in general is not on $\phi$. 
The version of the split Casimir considered in the main text contains an additional normalization by the representation index $I_{\phi}$. Specifically, the $\K$ used in the main text is defined as $\K = I_{\phi}^{-1}\K_{\phi}$. 

\subsection{Cartan Subalgebras, Weights, and Roots}

There is a particular convenient basis for $\g$ consisting of elements that lie in its complexification, i.e. the algebra $\g^{\mathbb{C}} := \g + i\g$. In addition, any representation of $d\phi$ can be uniquely linearly extended to a representation for the complexification, in the natural way, which we denote by $d\phi_{\mathbb{C}}$. This basis is called the Cartan--Weyl basis, which we denote as
\begin{align}
    \{H_i\}_{i=1}^{r} \cup \{E_{\vec{\alpha}}, E_{-\vec{\alpha}}\}_{\alpha \in \Delta^{+}},
\end{align}
where $E_{\vec{\alpha}}$ and  $E_{-\vec{\alpha}}$ can be chosen to be dual under the Killing form.

The set of elements $\{H_i\}_{i=1}^{r}$ are mutually commuting, which form a subalgebra called a \emph{Cartan subalgebra}. Hence,  if we consider a representation of $\g$, $d\phi_{\mathbb{C}}$, then $\{d\phi_{\mathbb{C}}(H_i)\}_{i=1}^{r}$ are simultaneously diagonalizable linear operators. Thus to each simultaneous eigenvector $v \in V$ we can associate a real-valued linear functional $\vec{\omega}_{v}$, called a weight, that satisfies 
\begin{align}
\label{eqn:egval_eqn}
d\phi_{\mathbb{C}}(H_i)v = \vec{\omega}_v(H_i)v
\end{align}
for $1 \leq i \leq r$. It is known that the weights can be viewed as $r$-dimensional vectors on a lattice. 
The set $\Delta^{+}$ consists of a particular basis (as a lattice) for the weights called the positive simple weights.  The other elements indexed by this set $E_{\vec{\alpha}}, E_{-\vec{\alpha}}$ satisfy the following fundamental property:
if $v$ is a weight vector with weight $\omega$, then $\forall i$.
\begin{align}
d\phi_{\mathbb{C}}(H_i)d\phi_{\mathbb{C}}(E_{\pm\vec{\alpha}})v = (\vec{\omega} \pm \vec{\alpha})(H_i)d\phi_{\mathbb{C}}(E_{\pm\vec{\alpha}})v.
\end{align}
Physicists may recognize these as raising and lowering operators. Then in this basis, the split Casimir is expressed as:
\begin{align}
\K = I_{\phi}^{-1}\left(\sum_{i=1}^{r} d\phi_{\mathbb{C}}(H_i)\otimes d\phi_{\mathbb{C}}(H_i) + \sum_{\vec{\alpha} \in \Delta^+} \left[d\phi_{\mathbb{C}}(E_{\vec{\alpha}})\otimes d\phi_{\mathbb{C}}(E_{-\vec{\alpha}}) + d\phi_{\mathbb{C}}(E_{-\vec{\alpha}})\otimes d\phi_{\mathbb{C}}(E_{\vec{\alpha}})\right]\right).
\end{align}
Lastly, if $V$ is an irreducible representation, it can be expressed as a direct sum of simultaneous, orthogonal eigenspaces of the $\{H_i\}_{i=1}^{r}$ called weight spaces. Specifically, if $V_{\vec{\omega}}$ is the weight space associated to weight $\vec{\omega}$ then $\forall v \in V_{\vec{\omega}}$ and $\forall i$:
\begin{align}
d\phi_{\mathbb{C}}(H_i)v = \vec{\omega}(H_i)v,
\end{align}
generalizing Supplementary Equation~\eqref{eqn:egval_eqn}. 

As mentioned earlier the Killing form is positive definite on $\mathfrak{g}$ and thus can be used to induce a valid inner product on its dual space (which contains roots). More specifically, if $T_{ij}$ is the Killing form, then for any two roots $\vec{\alpha}, \vec{\beta} \in \Delta^{+}$:
 \begin{equation}
        (\vec{\alpha}, \vec{\beta}) = [T^{-1}]_{ij} \alpha_i \beta_j.
    \end{equation}

We begin with a very brief review some of the relevant concepts from the representation theory of complex semisimple Lie algebras and direct the interested reader to standard references for more details \cite{knapp1996lie, hall2013lie, fulton2013representation, goodman2009symmetry, humphreys2012introduction}.

Let $\{H_i\}_{i=1}^{r}$ denote a basis for $\mathfrak{h}$ that is orthonormal with respect to the Killing form (the Killing form is positive definite on $\mathfrak{h}$). Let $d\phi : \g \rightarrow \mathfrak{u}(V)$ denote a finite-dimensional representation and $d\phi' : \g^{\mathbb{C}} \rightarrow \mathfrak{gl}(V)$ the unique complex-linear extension. Since each $d\phi'(H_i)$ can be simultaneous diagonalized, to each simultaneous eigenvector $v \in V$ we associate a real-valued linear functional $\vec{\omega}_{v}$, called a weight, that satisfies $d\phi'(H_i)v = \vec{\omega}_v(H_i)v$. The vector $v$ is called a weight vector, and the space of all vectors associated with the weight, is called a weight space. When $d\phi$ is the adjoint representation the functionals $\vec{\omega}_v$ are called roots.

There is a partial ordering on the weights, i.e. $\mu \geq \tau $ if $\mu - \tau$ has all positive coefficients in its expansion in terms of  $\Delta_{+}$. A consequence is that if $|\vec{\Lambda}\rangle$ is a highest weight vector (unique up to scaling), then $d\phi'(E_{\vec{\alpha}})|\vec{\Lambda}\rangle = 0$. 

For a simple root $\vec{\alpha}_i$, its coroot is defined as
\begin{align}
    \alpha^{\vee}_i = \frac{2\vec{\alpha}^{(i)}}{(\vec{\alpha}^{(i)}, \vec{\alpha}^{(i)})},
\end{align}
which is an element of the double dual space and hence identifiable with elements of $\mathfrak{h}$. Then the associated Gram matrix is \cite{fuchs2003symmetries}:
\begin{align}
    G_{{ij}} = (\alpha^{\vee}_i, \alpha^{\vee}_j) = \frac{2}{(\vec{\alpha}^{(i)}, \vec{\alpha}^{(i)})}\frac{2(\vec{\alpha}^{(i)}, \vec{\alpha}^{(j)})}{(\vec{\alpha}^{(j)}, \vec{\alpha}^{(j)})} = \frac{2}{(\vec{\alpha}^{(i)}, \vec{\alpha}^{(i)})}A_{ij},
\end{align}
where $A_{ij}$ is called the Cartan matrix. The dual basis to the set of coroots is called the Dynkin basis or set of fundamental weights. Thus if $\vec{\lambda}$ is a weight expressed in the Dynkin basis, then its norm can be expressed as
\begin{equation}
    (\vec{\lambda}, \vec{\lambda}) 
    = \sum_{ij} G^{-1}_{ij} n_i n_j,
\end{equation}
where $\{n_i\}$ are the integer components of $\vec{\lambda}$ in the Dynkin basis.

\section{Verifying Theory Reproduces Existing Results}

In this section, we apply the results from the main text to three cases where the expression for the gradient variance is already known in literature. 

\subsection{$\text{SU}(d)$}
We should be able to replicate the result for ordinary barren plateaus, that is, the fully controllable case where we have the group $\textup{SU}(d)$ acting via the standard representation. In that case, the basis is the normalized multi-qubit basis giving $I_\phi = 1$. The Lie algebra $\mathfrak{su}(d)$ of dimension $d^2-1$. Since this is a complete orthonormal basis for traceless Hermitian matrices, for a general Hermitian $\A$, $\|P_\g \A\|_{\text{F}}^2 = \Tr(\A_T^2)$. By assumption  $\O$ is traceless and Hermitian and $\H$ is traceless and skew-Hermitian so $\lVert\O\rVert_{\g}^2 = \Tr(\O^2)$ and $\lVert\H\rVert_{\g}^2 = -\Tr(\H^2)$. Finally, $I_\Ad = 2d$ for $\mathfrak{su}(d)$. So
\begin{equation}
   \textup{GradVar}
    = \frac{2d}{(d^2-1)^2}\Tr(\O^2)\Tr(\H^2)\Tr(\rho_T^2).
\end{equation}
This agrees exactly with the subspace-controllable result, see Eq. 13 in \cite{Larocca2022diagnosingbarren}.

\subsection{$\textup{SU}(d/2) \times \textup{SU}(d/2)$}

The quantum circuit realizing the standard representation of the semisimple $G = \textup{SU}(d/2) \times \textup{SU}(d/2) \subset \textup{SU}(d)$ is an example of a subspace-controllable system. The subalgebra is semisimple: $\g = \g_1 \oplus \g_2$ where each subalgebra is isomorphic to $\mathfrak{su}(d/2)$. Since the system is embedded in $\textup{SU}(d)$, $I_{\phi_1} = I_{\phi_2} = 1$. The projector onto each subalgebra is thus equivalent to taking the partial trace of the (traceless) operator on the other subsystem, i.e. $\H = \H_1 \otimes \1 + \1 \otimes \H_2$, and $P_{\g_i}\H = \H_{i}$. For a general Hermitian $\A$, $\|P_{\g_i} A\|_{\text{F}}^2 = \Tr([\Tr_{i^c}(A_T)]_T^2)$. Since we are in a LASA, we can decompose $\O$ and $\H$ into parts supported on each subalgebra. Thus finally
\begin{equation}
   \textup{GradVar} 
     = \sum_{i=1,2} \frac{2(d/2)}{((d/2)^2-1)^2} \Tr(\H_i^2)\Tr(\O_i^2) \Tr([\Tr_{i^c}(\rho_T)]_T^2).
\end{equation}
For states supported on either subalgebra, this agrees with the aforementioned subspace-controllable result. If the traceless part of the state is a simple sum of parts on the subspaces, i.e. if it is unentangled between subspaces, then the variance is the sum of the variances over the subspaces. However, for a state that is maximally entangled the partial trace on each subspace will be proportional to the identity, and therefore $[\Tr_{i^c}(\rho_T)]_T = 0$, giving a zero total variance. This is consistent with the basic quantum information theory notion that local operators cannot distinguish between maximally-entangled subsystems.

\subsection{Spin--$\frac{d-1}{2}$  representation of $\text{SU}(2)$}

Here we consider the case of the spin$-\frac{d-1}{2}$ irreducible representation, $\phi_{d}$, of $\mathfrak{su}(2)$ in an $n$-qubit quantum system, $d=2^n$.
We have $d_\g = 3$. Let us consider a normalized basis for the standard (spin $1/2$) representation of the Lie algebra: $\{e_1, e_2, e_3\} = \{-\frac{i}{\sqrt{2}}\sigma_x, -\frac{i}{\sqrt{2}}\sigma_y, -\frac{i}{\sqrt{2}}\sigma_z\}$.
Since these have commutation relations $[e_i, e_j] = \sqrt{2}\epsilon_{ijk}e_k$, we relate them to the canonical Hermitian spin$-\frac{d-1}{2}$ generators $\{S_x, S_y, S_z\}$ with commutation $[S_x, S_y] = iS_z$ via $\phi_{d}(e_1) = -i\sqrt{2}S_x$, etc. Recall that, as discussed in the main text, we can use the representation index to express the variance as
\begin{equation}\label{eq:var_alter}
\textup{GradVar}
=\frac{I_{\text{Ad}} \lVert o \rVert_{\g}^{2}\lVert h\rVert_{\g}^{2}}{d_{\g}^{2}} \sum_i \Tr^2(i\rho \E_i).
\end{equation}

It is a well-known result that $I_{\text{Ad}} = 2n$ for $\mathfrak{su}(n)$, and thus we have $I_\Ad = 4$ for this case.
Now for the calculation of the norms. Choosing $\H = -iS_x$, we have $\|h\|_{\g}^2 = \frac{1}{2}$. With $\O = S_x+S_y+S_z$, we get $\|o\|_{\g}^2 = \frac{3}{2}$. 
Finally, $\rho = |m\>\<m|$, giving 
\begin{equation}
    \sum_i \Tr^2(i\rho \E_i) = 2\Tr^2(|m\>\<m| S_z ) = 2(\<m| S_z |m\>)^2 = 2m^2.
\end{equation}
Overall using Eq.~\eqref{eq:var_alter} we get the result
\begin{equation}
   \textup{GradVar} = \frac{2m^2}{3},
\end{equation}
which agrees with the result for the same system found in \cite{Larocca2022diagnosingbarren}.

\section{Proofs of Technical Lemmas}

In this section, we provide the proofs of the technical lemmas that were necessary to derive Theorems~\ref{thm:simple_group} and~\ref{thm:compact} of the main text.

The first lemma shows that the first Haar moment is zero for simple groups, which allowed us to only focus on the second moment when computing the variance. This is a commonly observed regarding the first Haar moment of the gradient. However, we include a proof for the general case for completeness.
\begin{lemma}[Vanishing mean]
\label{lemma:vanish}
Let $G$ be a compact Lie group, and $\phi: G \rightarrow \mathcal{U}(V)$ a representation. Then for any $\O, \A \in \mathfrak{gl}(V)$:
\begin{align}
&\mathbb{E}_{{g^{+}, g^{-}}\sim \mu^{\otimes 2}}[\partial\<\O\>_{\A}]=0.
\end{align}
\end{lemma}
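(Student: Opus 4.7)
The plan is to swap the expectation with the trace and commutator using Fubini, and then exploit the invariance of the Haar measure to show that the inner average is a $G$-intertwiner, which by differentiation must commute with every element of $d\phi(\g)$ and in particular with $\H$.

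First I would write, using linearity of the trace and of the commutator together with Fubini (valid because $G$ is compact so $\mu$ is finite, $V$ is finite-dimensional, and the integrand is continuous hence bounded),
\begin{equation}
\mathbb{E}_{(g^+,g^-)\sim\mu^{\otimes 2}}[\partial\langle\O\rangle_\A]
= \textup{Tr}\!\left\{\left(\int_G \U_{g^-}^{\dagger}\A\U_{g^-}\,dg^-\right)\left[\H,\,\int_G \U_{g^+}\O\U_{g^+}^{\dagger}\,dg^+\right]\right\}.
\end{equation}
Denote $\bar{\O} := \int_G \U_g \O \U_g^{\dagger}\,dg$. By the left-invariance of the Haar measure, for every $h \in G$,
\begin{equation}
\U_h \bar{\O} \U_h^{\dagger} = \int_G \U_{hg}\O\U_{hg}^{\dagger}\,dg = \bar{\O},
\end{equation}
so $\U_h \bar{\O} = \bar{\O}\U_h$ for all $h \in G$.

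Since $G$ is a Lie group, I would then choose $h = e^{tk}$ for $k \in \g$, so that $\U_{e^{tk}} = e^{t\,d\phi(k)}$. Differentiating the relation $e^{t\,d\phi(k)}\bar{\O} = \bar{\O}\,e^{t\,d\phi(k)}$ at $t=0$ yields $[d\phi(k),\bar{\O}] = 0$ for every $k \in \g$. Taking $k = h$ gives $[\H,\bar{\O}] = 0$, and therefore the entire expression inside the trace vanishes, proving $\mathbb{E}[\partial\langle\O\rangle_\A] = 0$.

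The argument is short and I do not foresee a significant obstacle: the only points requiring care are (i) justifying the exchange of the two Haar integrals with the trace and the commutator, which follows immediately from compactness of $G$ and finite-dimensionality of $V$, and (ii) passing from group-level commutation with every $\U_h$ to Lie-algebra-level commutation with $\H$, which is standard via differentiation of a one-parameter subgroup. Note that $\bar{\A}$ need not be analyzed at all; the vanishing of the inner commutator is already sufficient.
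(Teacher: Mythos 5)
Your proposal is correct and follows essentially the same route as the paper's proof: both factor the double integral so that the twirl $\bar{\O} = \int_G \U_g \O \U_g^{\dagger}\,dg$ sits inside the commutator, use Haar invariance to show $\bar{\O}$ commutes with every $\U_h$, and then differentiate a one-parameter subgroup at the identity to conclude $[\H,\bar{\O}]=0$. Your added care with the Fubini exchange and the observation that $\bar{\A}$ requires no analysis are fine refinements, but not a different argument.
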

Thus we have that for compact Lie groups the variance of the gradient equals its second moment, and so we only need focus on the latter.

\begin{proof}
In full, the first moment is
\begin{equation}
\mathbb{E}_{{g^{+}, g^{-}}\sim \mu^{\otimes 2}}[\partial\<\O\>_{\A}] 
= \int_{G}\text{Tr}\{\U_{g^{-}}\A \U_{g^{-}}^{\dagger}[\H, \U_{g^{+}}\O \U_{g^{+}}^{\dagger}]\}dg^{+}dg^{-}
= \int_{G}\text{Tr}\left\{\U_{g^{-}}\A \U_{g^{-}}^{\dagger}\left[\H, \int_{G} \U_{g^{+}}\O \U_{g^{+}}^{\dagger} dg^{+}\right]\right\}dg^{-}.
\end{equation}
We can show that it is zero by proving that the commutator is zero for any $\O$. Since twirling projects onto the commutant of the group, $e^{\theta \H}$ commutes with the integral. 
This follows from the invariance of the Haar measure.

Let $\U_h = \phi(h)$ for $h \in G$. Then
\begin{align}
    \U_h \left(\int_{G} \U_{g} \O \U_{g}^{\dagger} dg \right)
        = \int_{G} \U_{(hg)} \O \U_{g}^{\dagger} dg
        = \int_{G} \U_{g'} \O \U_{(h^{-1}g')}^{\dagger} dg'
        = \left(\int_{G} \U_{g'} \O \U_{g'}^{\dagger} dg' \right) \U_h,
\end{align}
where we changed the variable of integration to $g' = hg$ (which leaves  the integral unchanged due to Haar translation invariance) and used the fact that $\U_{g}^\dagger = \U_{g^{-1}}$.

Since differentiation commutes with linear operators, and the commutator is linear in each argument, this implies that 
\begin{equation}
    0 = \frac{d}{d\theta} \left[e^{\theta \H},  \int_{G} \U_{g^{+}}\O \U_{g^{+}}^{\dagger} dg^{+}\right] \Big|_{\theta=0} = \left[\frac{de^{\theta \H}}{d\theta}\Big|_{\theta=0},  \int_{G} \U_{g^{+}}\O \U_{g^{+}}^{\dagger} dg^{+}\right] = \left[\H, \int_{G} \U_{g^{+}}\O \U_{g^{+}}^{\dagger} dg^{+}\right].
\end{equation}
\end{proof}

The next lemma was essential for computing the gradient variance for the simple group case (Theorem~\ref{thm:simple_group} in the main text):

\begin{lemma}[Main Text Lemma \ref{lemma:main}]
  Let $G$ be a compact simple Lie group with Lie algebra $\g$.  Suppose $V$ is a finite-dimensional inner product space, $\phi : G \rightarrow \mathcal{U}(V)$ is a unitary representation of $G$, and $\U_g = \phi(g)$. In addition, $a \in \g$, $\mathbf{A} = d\phi(a)$.
Then the following holds:
    Then
    \begin{equation}
        \int_G (\U_g \mathbf{A} \U_g^{\dagger})^{\otimes 2}\; dg %
        = \frac{\|\A\|^2_{\text{F}}}{d_{\g}} \K,
    \end{equation}
where $\K$ is the split Casimir.
\end{lemma}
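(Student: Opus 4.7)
The plan is to reduce this $G$-invariant integral on $V^{\otimes 2}$ to an integral on $\g^{\otimes 2}$ via the adjoint representation, apply Schur orthogonality (which applies because the adjoint representation of a simple Lie algebra is irreducible), and then pass back through $d\phi \otimes d\phi$ using the representation index to identify the Casimir.

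First, I would use the key identity $\U_g d\phi(a) \U_g^{\dagger} = d\phi(\text{Ad}_g a)$, which follows from differentiating the homomorphism property of $\phi$ along one-parameter subgroups. This rewrites the integrand as $(d\phi \otimes d\phi)(\text{Ad}_g a \otimes \text{Ad}_g a)$, pulling the entire $G$-action back onto $\g$ itself. Since $d\phi \otimes d\phi$ is linear and the Haar measure is a probability measure, it suffices to compute the integral $\int_G \text{Ad}_g a \otimes \text{Ad}_g a\, dg$ inside $\g \otimes \g$.

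Next, expand $a = \sum_j a_j e_j$ in an orthonormal basis with respect to the standard trace form, so that $\text{Ad}_g a = \sum_{j,k} a_j [\text{Ad}_g]_{kj} e_k$. The central input is that for a simple Lie algebra the adjoint representation is irreducible, since any invariant subspace is precisely a nonzero ideal, contradicting simplicity. Applying Schur orthogonality to the matrix coefficients of $\text{Ad}$ on the inner product space $(\g, -\Tr(\cdot\,\cdot))$ gives
\begin{equation}
\int_G [\text{Ad}_g]_{kj}[\text{Ad}_g]_{lm}\, dg = \frac{\delta_{kl}\delta_{jm}}{d_\g},
\end{equation}
so the double integral collapses to $\frac{\|a\|_\g^2}{d_\g} \sum_k e_k \otimes e_k$. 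Finally, applying $d\phi \otimes d\phi$ and using the definitions $\K = I_\phi^{-1}\sum_k \E_k \otimes \E_k$ and $\|\A\|_{\text{F}}^2 = I_\phi \|a\|_\g^2$ yields exactly $\frac{\|\A\|_{\text{F}}^2}{d_\g}\K$.

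The main subtlety is Step 3: Schur orthogonality is most cleanly stated for complex unitary irreps, whereas $\text{Ad}$ is naturally a real orthogonal representation on $\g$. The cleanest fix is to note that for a compact real simple Lie algebra the complexification $\g_{\mathbb{C}}$ is a simple complex Lie algebra, hence the complexified adjoint representation has no nontrivial ideals and remains irreducible over $\mathbb{C}$; complex Schur orthogonality then applies, and since the matrix coefficients of $\text{Ad}$ with respect to a real basis are real, the complex conjugation in Schur's formula is trivial. Everything else in the argument is bookkeeping with bases and the representation index.
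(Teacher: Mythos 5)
Your proposal is correct and follows essentially the same route as the paper's own proof: rewriting $\U_g \A \U_g^{\dagger} = d\phi(\Ad_g a)$, invoking irreducibility of the adjoint representation of a simple Lie algebra, applying Schur orthogonality to its (real) matrix coefficients, and identifying $\sum_j \E_j \otimes \E_j = I_\phi \K$. You even resolve the real-versus-complex Schur orthogonality subtlety exactly as the paper does, by noting that the complexification of a compact simple $\g$ is simple, so the complexified adjoint representation stays irreducible and the conjugation is trivial on real coefficients.
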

\begin{proof}
Since differentiation commutes with linear maps, one can show that %
$\U_g \mathbf{A}\U_g^\dagger = d\phi(\Ad_g(a))$. %
Let $\{e_j\}$ be a basis for $\mathfrak{g}$, then $\{\E_{j}\}$, where each $\E_{j} = d\phi(e_j)$ is skew-Hermitian, is a basis for $d\phi(\g)$. Then, by linearity we have
\begin{equation}
    \U_g \mathbf{A} \U_g^\dagger = \Ad_g(\mathbf{A}) = \sum_{ji} a_i [\Ad_g]_{ij} \E_j,
\end{equation}
where 
\begin{equation}
\U_g \E_i \U_g^\dagger = \sum_{j}[\Ad_g]_{ji}\E_j = d\phi\left(\sum_{ji}[\Ad_g]_{ji}e_j\right).
\end{equation}
Explicitly, $\{[\Ad_g]_{ji}\}$ are the matrix coefficients for the adjoint representation of $G$, and are real since we are dealing with a real Lie algebra. 

Since $a \in \g$, we have that $a = \sum_i a_i e_i$. Then the LHS becomes
\begin{equation}
    \int_G \Ad_g(\mathbf{A}) \otimes \Ad_g(\mathbf{A}) \; dg
    = \sum_{ii'jj'} a_i a_{i'} \int_G [\Ad_g]_{ij}[\Ad_g]_{i'j'}\; dg \;\; (\E_j \otimes \E_{j'}).
\end{equation}

Now let's use the fact that the adjoint representation is irreducible for simple Lie groups.
This allows us to use  Schur orthogonality \cite[Corollary 4.10]{knapp1996lie} to write
\begin{equation}
\label{eqn:adj_orthogonal}
    \int_G [\Ad_g]_{ij}[\Ad_g]_{i'j'}\; dg = \delta_{ii'} \delta_{jj'} \frac{1}{d_{\mathfrak g}}.
\end{equation}
Note, the theorem requires conjugation of one of the terms, however all the coefficients in the adjoint representation are real so we can ignore this. In addition, since $\g$ is compact, its complexification is simple and thus the complex extension of the adjoint representation is irreducible, allowing us to apply Schur orthogonality.
This finally gives
\begin{equation}
    \int_G (\U_g \mathbf{A} \U_g^\dagger)^{\otimes 2} \; dg = \frac{1}{d_{\mathfrak g}} \sum_i a_i^2 \; \sum_j \E_j \otimes \E_j.
\end{equation}
By definition of $\K$, we have that $\sum_j \E_j \otimes \E_j = I_\phi \K$.
\end{proof}

The following is a generalization of Lemma~\ref{lemma:main} to outside the simple group setting and was needed for proving Theorem~\ref{thm:compact} of the main text.
\begin{lemma}[Main Text Lemma \ref{lemma:main_compact}]
Let $G$ be a compact and connected Lie group with Lie algebra $\g$.  Suppose $V$ is a finite-dimensional inner product space, $\phi : G \rightarrow \mathcal{U}(V)$ is a unitary representation of $G$, and $\U_g = \phi(g)$. In addition, $a \in \g$, $\mathbf{A} = d\phi(a)$.
Then the following holds:
\begin{align}
\int_{G}(\U_{g}\A\U_{g}^{\dagger})^{\otimes 2} dg %
= \sum_{\alpha}\frac{\|\A_{\g_{\alpha}}\|^2_{\text{F}}}{d_{\g_{\alpha}}} \K_{\g_{\alpha}} + \A_{\mathfrak{c}}^{\otimes 2},
\end{align}
where %
$\A_{\alpha}$ is the image of the component of $a$ in $\g_\alpha$ under $d\phi$. Likewise, $\K_{\alpha}$ is the split Casimir in the subalgebra $\g_\alpha$.
\end{lemma}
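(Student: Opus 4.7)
The plan is to reduce the compact case to a sum of simple-group cases already handled by Lemma~\ref{lemma:main}, by exploiting the reductive decomposition $\g=\bigoplus_{\alpha}\g_{\alpha}\oplus\mathfrak{c}$ guaranteed by compactness. I would first write $\A=\sum_{\alpha}\A_{\g_{\alpha}}+\A_{\mathfrak{c}}$ using the Frobenius-orthogonal projections onto the images $d\phi(\g_{\alpha})$ and $d\phi(\mathfrak{c})$. Two structural facts make this decomposition behave well under conjugation: since each $\g_{\alpha}$ is an ideal, $d\phi(\g_{\alpha})$ is invariant under conjugation by $\U_{g}$, so $\U_{g}\A_{\g_{\alpha}}\U_{g}^{\dagger}\in d\phi(\g_{\alpha})$; and since $\mathfrak{c}$ is the center and $G$ is connected, $\U_{g}\A_{\mathfrak{c}}\U_{g}^{\dagger}=\A_{\mathfrak{c}}$ for every $g$.

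Next, I would expand the tensor square $(\U_{g}\A\U_{g}^{\dagger})^{\otimes 2}$ into four families of terms: diagonal within a single ideal $\g_{\alpha}$, cross terms between two distinct ideals $\g_{\alpha}$ and $\g_{\beta}$, mixed terms pairing $\g_{\alpha}$ with $\mathfrak{c}$, and the pure central term $\A_{\mathfrak{c}}^{\otimes 2}$. Using the block structure of $\text{Ad}_{g}$ along the decomposition, each piece of the integrand becomes a product of matrix coefficients of $\text{Ad}|_{\g_{\alpha}}$ (or trivial coefficients from $\mathfrak{c}$), so Schur orthogonality can be applied termwise, just as in the proof of Lemma~\ref{lemma:main}.

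The crux of the argument is showing that the cross terms vanish. For each $\alpha$, the representation $\text{Ad}|_{\g_{\alpha}}$ of $G$ is irreducible, because a $G$-invariant subspace of $\g_{\alpha}$ is an ideal of $\g_{\alpha}$ and $\g_{\alpha}$ is simple. Moreover, for $\alpha\neq\beta$, the two subrepresentations $\text{Ad}|_{\g_{\alpha}}$ and $\text{Ad}|_{\g_{\beta}}$ are non-isomorphic as $G$-modules: any $\exp(b)$ with $b\in\g_{\beta}$ acts as the identity on $\g_{\alpha}$ (distinct ideals commute) while acting non-trivially on $\g_{\beta}$, so the kernels of the two representations differ. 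Schur orthogonality then kills the cross terms between distinct ideals, and also kills the mixed $\g_{\alpha}$--$\mathfrak{c}$ terms, since $\int_{G}\U_{g}\A_{\g_{\alpha}}\U_{g}^{\dagger}\,dg$ is a projection onto the trivial subrepresentation, which is absent from the irreducible non-trivial $\g_{\alpha}$.

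What remains are the diagonal pieces. The central contribution is immediately $\A_{\mathfrak{c}}^{\otimes 2}$ because $\A_{\mathfrak{c}}$ is pointwise fixed. For each fixed $\alpha$, repeating the matrix-coefficient calculation from Lemma~\ref{lemma:main} with the irreducible adjoint representation $\text{Ad}|_{\g_{\alpha}}$ of dimension $d_{\g_{\alpha}}$ yields
\begin{align}
\int_{G}(\U_{g}\A_{\g_{\alpha}}\U_{g}^{\dagger})^{\otimes 2}\,dg \;=\; \frac{\lVert\A_{\g_{\alpha}}\rVert_{\text{F}}^{2}}{d_{\g_{\alpha}}}\,\K_{\g_{\alpha}},
\end{align}
where $\K_{\g_{\alpha}}$ is the split Casimir built from an orthonormal basis of $d\phi(\g_{\alpha})$. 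Summing over $\alpha$ and adding the central contribution produces the claimed formula. The main subtlety I anticipate is the non-isomorphism argument: simple ideals can be isomorphic as abstract Lie algebras (as in $\mathrm{SU}(2)\times\mathrm{SU}(2)$), so one must argue at the level of $G$-modules rather than $\g_{\alpha}$-modules, which is where the commutation between distinct ideals is essential.
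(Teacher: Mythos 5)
Your proposal is correct and follows essentially the same route as the paper's proof: decompose via the reductive splitting $\g=\bigoplus_{\alpha}\g_{\alpha}\oplus\mathfrak{c}$, kill the cross terms by Schur orthogonality using the fact that distinct simple ideals are non-isomorphic as $G$-modules (with connectedness of $G$), and apply Lemma~\ref{lemma:main} to each diagonal ideal while the center contributes $\A_{\mathfrak{c}}^{\otimes 2}$ as a pointwise-fixed term. Your explicit kernel argument for non-isomorphism (an $\exp(b)$, $b\in\g_{\beta}$, acting trivially on $\g_{\alpha}$ but not on $\g_{\beta}$) is a welcome detail that the paper merely asserts.
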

\begin{proof}
Since $\g$ is reductive, the algebra's adjoint representation, $\text{ad}$ breaks into a direct sum of irreducible representations, i.e. the simple ideals of  $\g$ and its center $\mathfrak{c}$. The simple ideals $\g_{\alpha}$ must correspond to non-isomorphic simple $\g$-modules, and since $G$ is connected, they correspond to  non-isomorphic simple $G$-modules. Furthermore, the center $\mathfrak{c}$ breaks up into a direct sum of trivial representations. Thus, the Schur orthogonality relations imply that cross terms are zero, and the integral breaks up:

\begin{align}
\int_G (\U_g \mathbf{A} \U_g^{\dagger})^{\otimes 2}\; dg =  \left(\sum_{\alpha} \int_G \text{Ad}_{\mathbf{U}_{g_{\alpha}}}(\mathbf{A}_{\g_{\alpha}})^{\otimes 2} \; dg_{\alpha}\right) + \mathbf{A}_{\mathfrak{c}}^{\otimes 2} =   \sum_{\alpha}\frac{\|\A_{\g_{\alpha}}\|^2_{\text{F}}}{d_{\g_{\alpha}}} \K_{\g_{\alpha}} + \mathbf{A}_{\mathfrak{c}} ^{\otimes 2},
\end{align}
where the last equality follows from applying Lemma~\ref{lemma:main} to the components of the direct sum. In addition $\A_{\g_{\alpha}} \in d\phi(\g_{\alpha}), \A_{\mathfrak{c}} \in d\phi(\mathfrak{c})$ and $\A = \sum_{\alpha} \A_{\g_{\alpha}} + \A_{\mathfrak{c}}$.
\end{proof}

The above result implies that we expect contributions to the variance from the various subalgebras. %

Lastly, for completeness, we also proof  the following simple fact used in Section \ref{sec:interpretation}.
\begin{lemma}
For any $\H$ in Lie algebra $\g$, we have
\begin{align}
    \lVert \H \rVert_{\text{K}}^2 \leq 2d_{\g}\lVert \H \rVert_{\textup{F}}^2
\end{align}
\end{lemma}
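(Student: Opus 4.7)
The plan is to route the proof through the adjoint action of $\H$. First I would use the definition of the Killing form to rewrite
\begin{equation}
\|\H\|_{\text{K}}^{2} = -\text{Tr}_{\g}(\text{ad}_{\H}^{2}) = \|\text{ad}_{\H}\|_{\text{F}}^{2},
\end{equation}
where on the right $\text{ad}_{\H}$ is viewed as an endomorphism of the $d_{\g}$-dimensional inner-product space $\g$ (equivalently $d\phi(\g)$, since the relevant matrix elements coincide). The elementary bound $\|A\|_{\text{F}}^{2} \le (\dim W)\,\|A\|_{\text{op}}^{2}$ for any operator $A$ on a finite-dimensional space $W$ then gives $\|\H\|_{\text{K}}^{2} \le d_{\g}\,\|\text{ad}_{\H}\|_{\text{op}}^{2}$.

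Next I would bound $\|\text{ad}_{\H}\|_{\text{op}}$ by the spectrum of $\H$ itself. Because $\H$ is skew-Hermitian, its eigenvalues are purely imaginary, $\{i\lambda_{k}\}$ with $\lambda_{k}\in\mathbb{R}$. Viewing $\text{ad}_{\H}$ as an operator on the full space $\mathfrak{gl}(V)\supseteq d\phi(\g)$, the outer products $|j\rangle\langle k|$ form an eigenbasis with eigenvalues $i(\lambda_{j}-\lambda_{k})$. Since $d\phi(\g)$ is an invariant subspace under $\text{ad}_{\H}$ (it is a Lie subalgebra) and restriction to a subspace cannot increase the operator norm, one has
\begin{equation}
\|\text{ad}_{\H}\|_{\text{op}} \;\le\; \max_{j,k}|\lambda_{j}-\lambda_{k}| \;=\; \lambda_{\max}-\lambda_{\min}.
\end{equation}

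Finally, I would apply the elementary inequality $(|a|+|b|)^{2} \le 2(a^{2}+b^{2})$ with $a=\lambda_{\max}$ and $b=-\lambda_{\min}$ to obtain $(\lambda_{\max}-\lambda_{\min})^{2} \le 2(\lambda_{\max}^{2}+\lambda_{\min}^{2}) \le 2\sum_{k}\lambda_{k}^{2} = 2\|\H\|_{\text{F}}^{2}$. Chaining the three inequalities yields $\|\H\|_{\text{K}}^{2} \le 2\,d_{\g}\,\|\H\|_{\text{F}}^{2}$, as claimed.

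I do not anticipate any serious obstacle; the argument is entirely elementary. The only point requiring mild care is the passage from $\text{ad}_{\H}$ restricted to $d\phi(\g)$ to $\text{ad}_{\H}$ acting on the full space $\mathfrak{gl}(V)$ when invoking the spectral bound. This is harmless precisely because restriction to an invariant subspace cannot increase the operator norm, so the global spectral characterization transfers at no cost.
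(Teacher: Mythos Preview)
Your proof is correct. Both you and the paper identify $\lVert \H\rVert_{\text{K}}^{2}$ with $\sum_{k}\lVert[\H,\E_{k}]\rVert_{\text{F}}^{2}=\lVert\text{ad}_{\H}\rVert_{\text{F}}^{2}$, but from there the paper simply asserts the final inequality in one step, whereas you route through $\lVert\text{ad}_{\H}\rVert_{\text{F}}^{2}\le d_{\g}\lVert\text{ad}_{\H}\rVert_{\text{op}}^{2}$ and then bound $\lVert\text{ad}_{\H}\rVert_{\text{op}}$ spectrally via the eigenvalue differences of $\H$. This is a genuine, if modest, difference: a naive term-by-term commutator estimate $\lVert[\H,\E_{k}]\rVert_{\text{F}}\le 2\lVert\E_{k}\rVert_{\text{op}}\lVert\H\rVert_{\text{F}}\le 2\lVert\H\rVert_{\text{F}}$ only yields $4d_{\g}\lVert\H\rVert_{\text{F}}^{2}$ after squaring and summing, so your spectral argument is what actually pins down the constant $2$ claimed in the statement. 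The extra care you flag about restricting $\text{ad}_{\H}$ from $\mathfrak{gl}(V)$ to the invariant subspace $d\phi(\g)$ is handled correctly.
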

\begin{proof}
Let $\{\E_k\}_{k=1}^{d_{\g}}$ be an orthonormal basis for $\g$, then 
 \begin{align}
\lVert \H \rVert_{\text{K}}^2 = \sum_{j,k=1}^{d_{\g}}\Tr([\H, \E_{k}]\E_{j})^2 = \sum_{k=1}^{d_{\g}} \lVert [\H, \E_{k}] \rVert^2_{\text{F}} \leq 2d_{\g}\lVert \H \rVert_{\text{F}}^2.
 \end{align}
\end{proof}

\section{Proof of Main Text Theorem~\ref{thm:compound_projector}}
As mentioned in the main text, the compound $\text{SU}$ layers can also be viewed as the direct sum of the alternating representations of $\text{SU}(n)$. The $k$-th alternating representation is $\phi_{k} : \text{SU}(n) \rightarrow \mathcal{U}(\bigwedge^{k}\mathbb{C}^{n})$ and is irreducible. The direct sum is then obviously $\phi: \text{SU}(n) \rightarrow \mathcal{U}(\bigoplus_{k=1}^{n}\bigwedge^{k}\mathbb{C}^{n})$. Let $\{e_k\}$ denote the standard basis for $\mathbb{C}^{n}$. The mapping between the qubit state space and $\bigwedge^{k}\mathbb{C}^{n}$ can be explicitly seen by mapping a computational basis state $|S\rangle \mapsto \bigwedge_{i \in [n] | S_{i}=1 } e_{i}$.
We will restrict our analysis to the $k=n/2$ subspace (if $n$ is not even take $k = n/2 +1$ or $k=n/2-1$), which has dimension exponential in $n$, i.e. $\binom{n}{n/2} =\Omega(2^{n/2})$. Since $\phi_{n/2}$ is faithful, the dimension of Lie algebra of $\phi_{n/2}(\text{SU}(n))$ is the same as $\mathfrak{su}(n)$, i.e. $n^2 -1$.

Since we will not be able to use the adjoint representation trick, we need to tackle  computing the second-moment operator
\begin{align}
\mathcal{T}: \A \mapsto \int_{\text{SU}(n)}(\U_{g}\otimes \U_{g}) \A (\U_{g}^{\dagger} \otimes \U_{g}^{\dagger}) dg 
\end{align}
directly using Schur--Weyl duality \cite[Theorem 6.3]{fulton2013representation}. Recall that $\mathcal{T}$ must respect the decomposition of the tensor product representation, i.e. $ \wedge^{n/2} \mathbb{C}^{n} \otimes \wedge^{n/2} \mathbb{C}^{n}$, into irreducible components. For the current setting, the Pierri formula \cite[Exercise 6.16]{fulton2013representation} implies that the decomposition into irreducible components is 
\begin{align}
    \wedge^{n/2} \mathbb{C}^{n} \otimes \wedge^{n/2} \mathbb{C}^{n} = \bigoplus_{a \in [n/2+1]} V_{\lambda_a},
\end{align}
where $\lambda_a$ denotes the partition of the integer $n$ that has $n/2-a$ 2's and  $2a$ 1's. Specifically, the $\lambda_a$ index the Young diagrams on $2n$ boxes of shape $\lambda_a$. Furthermore, due to the form of the integral that appears when computing the variance of the gradient, we only need to consider even $a$. This is because if $\rho$ is the initial state lying in the Hamming-weight $n/2$ subspace, we are considering the inner product between the integral and a symmetric tensor $\rho \otimes \rho \in  \text{Sym}^{2}(\wedge^{n/2} \mathbb{C}^{n})$. Thus,
\begin{align}
    \text{Sym}^{2}(\wedge^{n/2} \mathbb{C}^{n}) = \bigoplus_{a \in [n/2+1]~\&~ a~\text{is even}} V_{\lambda_a}.
\end{align}

As consequence of Schur--Weyl duality for $\text{SU}(n)$, there is  a basis for $V_{\lambda_{a}} \subset  \wedge^{n/2} \mathbb{C}^{n} \otimes \wedge^{n/2} \mathbb{C}^{n}$, known as the Gelfand--Cetlin basis \cite[Cor 8.1.7]{goodman2009symmetry}, whose elements are in one-to-one correspondence with the \emph{semistandard Young Tableau} (SSYT) of shape $\lambda_{a}$. While this is not an orthogonal basis and thus is not the basis that diagonalizes the HYOs, it will suffice for reasoning about which irreps the tensor $|S\rangle \otimes |S\rangle$  has support on when $|S\rangle $ is a computational basis state. This leads to the following lemma.

\begin{lemma}\label{lemma:comp_basis}
    If $|S\rangle$ is a computational basis state of Hamming weight $k$ then $|S \rangle \otimes |S\rangle$ lies in an irreducible subrepresentation of $\wedge^{k} \mathbb{C}^n \otimes \wedge^{k} \mathbb{C}^n$, specifically $V_{\lambda_0}$.
\end{lemma}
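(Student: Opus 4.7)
The plan is to identify $V_{\lambda_0}$ as the unique irreducible component of $\wedge^k \mathbb{C}^n \otimes \wedge^k \mathbb{C}^n$ carrying the top weight $2\omega_k$, pin down its highest-weight vector explicitly as $v_0 \otimes v_0$ where $v_0 = e_1 \wedge \cdots \wedge e_k$, and then use the permutation action of $\mathrm{SU}(n)$ to transport the result to every other $|S\rangle \otimes |S\rangle$.

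First I would fix the standard Cartan subalgebra $\mathfrak{h} \subset \mathfrak{su}(n)$ of traceless diagonal matrices with the usual ordering $\epsilon_1 > \cdots > \epsilon_n$. Under the identification $|S\rangle \mapsto \bigwedge_{i \in S} e_i$, every Hamming-weight $k$ computational basis state is a weight vector of $\wedge^k \mathbb{C}^n$ of weight $\sum_{i \in S}\epsilon_i$; these weight vectors form a basis in which every weight appears with multiplicity one. In particular the unique highest-weight vector is $v_0$, with weight $\omega_k = \epsilon_1 + \cdots + \epsilon_k$.

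Next I would use this to pin $v_0 \otimes v_0$ to $V_{\lambda_0}$. Weights of $\wedge^k \mathbb{C}^n \otimes \wedge^k \mathbb{C}^n$ are sums $\mu + \nu$ of weights of $\wedge^k \mathbb{C}^n$, and the top weight $2\omega_k$ admits only the decomposition $\omega_k + \omega_k$; thus the $2\omega_k$-weight space is one-dimensional and spanned by $v_0 \otimes v_0$. By Pieri's rule, $V_{\lambda_0}$ appears in the Clebsch--Gordan decomposition with highest weight $2\omega_k$ and is the unique summand carrying that weight. Its highest-weight vector therefore lies in the one-dimensional $2\omega_k$-weight space, forcing $v_0 \otimes v_0 \in V_{\lambda_0}$.

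For an arbitrary Hamming-weight $k$ basis state $|S\rangle$, pick a permutation $\pi \in S_n$ with $\pi(\{1,\ldots,k\}) = S$ and let $U_\pi \in \mathrm{U}(n)$ be a signed permutation matrix realising $\pi$, adjusted by a single diagonal sign so that $\det U_\pi = 1$. Then $U_\pi v_0 = \pm |S\rangle$, and hence $(U_\pi \otimes U_\pi)(v_0 \otimes v_0) = |S\rangle \otimes |S\rangle$ because the sign squares to $+1$. Since $V_{\lambda_0}$ is preserved by the diagonal $\mathrm{SU}(n)$-action, $|S\rangle \otimes |S\rangle$ lies in $V_{\lambda_0}$. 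The main point requiring care is the multiplicity-one argument that pins the highest-weight vector to $v_0 \otimes v_0$; once that is in place, the rest is a straightforward application of the permutation symmetry of the basis, and no Schur--Weyl machinery beyond Pieri's rule (already invoked in the text) is needed.
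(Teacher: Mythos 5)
Your proof is correct, but it takes a genuinely different route from the paper's. The paper argues entirely inside the Gelfand--Cetlin/SSYT realization supplied by Schur--Weyl duality: it observes that $|S\rangle \otimes |S\rangle$ corresponds to a semistandard Young tableau of shape $\lambda_0 = (2,\dots,2)$ (two columns, the sorted indices of $S$ repeated across each row), notes that such a tableau is already symmetric across its rows so the row symmetrizer acts as the identity, and concludes membership in $V_{\lambda_0}$ directly --- handling every $S$ at once, with no reduction to a highest-weight vector. You instead run a standard highest-weight argument: since the weights of $\wedge^k \mathbb{C}^n$ all have multiplicity one with top weight $\omega_k$, the $2\omega_k$-weight space of the tensor square is one-dimensional and spanned by $v_0 \otimes v_0$ (the dominance-order argument that $\mu + \nu = 2\omega_k$ forces $\mu = \nu = \omega_k$ is sound); the Pieri decomposition already quoted in the paper makes $V_{\lambda_0}$ the unique summand with highest weight $2\omega_k$, pinning $v_0 \otimes v_0$ to it; and signed permutation matrices in $\mathrm{SU}(n)$ --- your determinant fix is legitimate, and the residual sign indeed squares away in the tensor square --- transport the conclusion to every $|S\rangle \otimes |S\rangle$ by invariance of $V_{\lambda_0}$ under the diagonal action. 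Your route buys independence from the details of the Gelfand--Cetlin construction: the paper's ``one can verify'' step about tableaux and row symmetrizers is precisely what your multiplicity-one computation replaces with an airtight weight-space argument. What the paper's route buys in exchange is that it stays in the tableau language it reuses immediately afterwards in the proof of Theorem~\ref{thm:compound_projector} (e.g.\ the $\mathrm{SWAP}$-invariance of $\mathrm{Im}(P_{\lambda_0})$ is read off from the same row-symmetry of the basis elements), so the two arguments there share one combinatorial picture.
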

\begin{proof}
Given an SSYT of shape $\lambda_{a}$, an element of the Gelfand--Cetlin basis is formed by symmetrizing over the rows of the SSYT, where elements in the same column correspond to antisymmetrized indices. One can verify that if $|S\rangle$ is a computational basis state, $|S\rangle \otimes |S\rangle$ corresponds to an SSYT of shape $\lambda_0 = (2,\dots, 2)$ and weight also $(2,\dots, 2)$, i.e. $k$ rows and two columns. This SSYT is already symmetric across the rows and so the row symmetrizer acts as identity. The conclusion is that the tensor product of a computational basis state with itself lies in $V_{\lambda_0}$.
\end{proof}
With regards to the integral, the result will be that the projections onto $V_{\lambda_a}$ for $\lambda_a \neq \lambda_0$ will not contribute if our initial state is a computational basis state of Hamming weight $k$.

Another consequence of Schur--Weyl duality is that the dimension of $V_{\lambda_0}$ is equal to the Schur polynomial for partition $\lambda_0$ evaluated at all $1$'s, i.e. $S_{\lambda_0}(1, \dots, 1)$.

\begin{lemma}
\label{lem:jacobi-trudi}
    Let $\lambda_0$ denote the partition $(2,\dots, 2)$ of $n$, for some even integer $n$. Then,
    \begin{align}
    \dim V_{\lambda_0}={n \choose n/2}^2 \frac{n+1}{(n/2 + 1)^2}.
    \end{align}
\end{lemma}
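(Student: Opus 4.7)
The plan is to invoke a closed-form dimension formula for irreducible polynomial representations of $\textup{GL}_n$, since the $\text{SU}(n)$-irrep indexed by $\lambda_0$ is obtained by restriction and has the same dimension as the associated $\textup{GL}_n$-irrep. The cleanest route is the Weyl dimension formula
\begin{equation}
\dim V_{\lambda} = \prod_{1 \le i < j \le n} \frac{\lambda_i - \lambda_j + j - i}{j - i},
\end{equation}
with the convention $\lambda_i = 0$ for $i > \ell(\lambda)$. (Equivalently, one could apply the hook-content formula, or compute $s_{\lambda_0}(1,\dots,1)$ via Jacobi--Trudi and a standard determinant evaluation; both yield the same expression after simplification.) Lemma \ref{lemma:comp_basis} and the earlier discussion already identify $V_{\lambda_0}$ with this irreducible representation, so the only task remaining is to evaluate the product.

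The partition $\lambda_0 = (2,2,\dots,2,0,\dots,0)$ has $n/2$ parts equal to $2$ followed by $n/2$ parts equal to $0$. The crucial observation is that the Weyl product collapses on the "diagonal blocks": for $1 \le i < j \le n/2$ one has $\lambda_i - \lambda_j = 0$, so the factor is $1$, and likewise for $n/2 < i < j \le n$. Thus only the cross terms with $i \le n/2 < j$ contribute, and each such term equals $(j - i + 2)/(j - i)$. Reindexing $j = n/2 + j'$ and $k = n/2 - i$ gives
\begin{equation}
\dim V_{\lambda_0} = \prod_{k=0}^{n/2 - 1} \prod_{j'=1}^{n/2} \frac{k + j' + 2}{k + j'} = \prod_{k=0}^{n/2 - 1} \frac{(k + n/2 + 1)(k + n/2 + 2)}{(k+1)(k+2)},
\end{equation}
where the inner product telescopes to a ratio of just two numerator and two denominator factors.

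The final step is to separate the resulting product into two telescoping pieces and evaluate each as a ratio of factorials. The first piece,
\begin{equation}
\prod_{k=0}^{n/2-1} \frac{k + n/2 + 1}{k + 1} = \frac{n!}{(n/2)!^2} = \binom{n}{n/2},
\end{equation}
and the second piece,
\begin{equation}
\prod_{k=0}^{n/2-1} \frac{k + n/2 + 2}{k + 2} = \frac{(n+1)!}{(n/2+1)!^2} = \frac{n+1}{(n/2+1)^2}\binom{n}{n/2},
\end{equation}
multiply to give the claimed value $\binom{n}{n/2}^2 (n+1)/(n/2+1)^2$. The whole argument is elementary once the Weyl formula is written down; the only nuisance is careful bookkeeping of index ranges and making sure the two telescoping products are factored in a way that exhibits the binomial coefficients. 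No step is a genuine obstacle, and alternative proofs via Jacobi--Trudi or the hook-content formula would proceed along the same lines, with the determinant or hook-length bookkeeping replacing the elementary product manipulations above.
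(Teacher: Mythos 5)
Your proof is correct, and it takes a genuinely different route from the paper's. You evaluate $\dim V_{\lambda_0}$ via the Weyl dimension formula for the $\textup{GL}_n$ highest weight $(2^{n/2},0^{n/2})$: the collapse of the product to the cross-block factors $(j-i+2)/(j-i)$, the telescoping of the inner product to $\frac{(k+n/2+1)(k+n/2+2)}{(k+1)(k+2)}$, and the two factorial evaluations all check out and multiply to the stated value. The paper instead evaluates the principal specialization $S_{\lambda_0}(1,\dots,1)$ via the second (dual) Jacobi--Trudi identity, exploiting that the conjugate partition $\lambda_0' = (n/2,n/2)$ has only two rows: the Schur polynomial becomes the $2\times 2$ determinant $e_{n/2}^2 - e_{n/2-1}e_{n/2+1}$, and substituting $e_k(1,\dots,1) = \binom{n}{k}$ together with $\binom{n}{n/2 \pm 1} = \frac{n/2}{n/2+1}\binom{n}{n/2}$ gives the result in two lines. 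The trade-off: the paper's determinant trick is especially economical here precisely because $\lambda_0'$ has length two (a wider conjugate would mean a larger determinant), while your product computation involves more index bookkeeping but needs no symmetric-function identities beyond the dimension formula itself and generalizes transparently to arbitrary rectangular shapes $(c^r, 0^{n-r})$. Both arguments rest on the same input not proved within the lemma --- that $\dim V_{\lambda_0}$ equals the $\textup{GL}_n$ Weyl dimension, equivalently $S_{\lambda_0}(1^n)$, with the $\textup{SU}(n)$ restriction preserving irreducibility and dimension --- which the paper attributes to Schur--Weyl duality, so you are on equal footing there.
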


\begin{proof}
    Note that $\lambda'_0$, the conjugate partition to $\lambda_0$, is given by $(n/2,n/2)$. To evaluate the Schur polynomials, we will use the second form of the Jacobi--Trudi identity~\cite[Equation~A.6]{fulton2013representation} which states that for any partition $\lambda$ of $n$, the Schur polynomial $S_\lambda$ is given by,
    \begin{align}
    \label{eqn:jacobi-trudi-2}
        S_\lambda = \det(e_{\lambda'_i + j -i})_{i,j=1}^{l(\lambda')}
    \end{align}
    where $\lambda'$ is the conjugate partition to $\lambda$, $l(\lambda')$ is its length, and $e_k$ denotes the $k^{th}$ elementary symmetric polynomial on $n$ variables. Recall that the elementary symmetric polynomial $e_{k}$ is the sum of all monomials of total degree $k$, where no individual variable has degree greater than 1.

    Specializing \eqref{eqn:jacobi-trudi-2} to our case, we have that
    \begin{align}
        S_{\lambda_0} = \det\begin{bmatrix}
            e_{n/2} & e_{n/2 + 1} \\
            e_{n/2 - 1} & e_{n/2}
        \end{bmatrix} = e_{n/2}^2 - e_{n/2 - 1}e_{n/2 + 1}.
    \end{align}
    It remains to evaluate $e_{n/2},e_{n/2 - 1},e_{n/2 + 1}$ at the points of interest, which can be done in our case from simple counting arguments.

    To evaluate the polynomials at $(1,1,\dots,1)$, we note that each elementary polynomial $e_k$ contains exactly ${n \choose k}$ monomials, each of which evaluates to $1$ at $(1,1,\dots,1)$. Thus $e_k(1,1,\dots,1) = {n \choose k}$. Noticing additionally that ${n \choose n/2 - 1} = {n \choose n/2 + 1} = \frac{n/2}{n/2 + 1}{n \choose n/2}$, we have,
    \begin{align}
    \label{eqn:schur_dim}
        S_{\lambda_0}(1,1,\dots,1) &= {n \choose n/2}^2 - {n \choose n/2 - 1}{n \choose n/2 + 1} \\
        &= {n \choose n/2}^2 \left(1 - \frac{(n/2)^2}{(n/2 + 1)^2}\right) \\
        &= {n \choose n/2}^2 \frac{n+1}{(n/2 + 1)^2}.
    \end{align}
\end{proof}

We now have tools to prove Theorem~\ref{thm:compound_projector} from the main text, which we restate below. 

\begin{theorem}[Main Text Theorem \ref{thm:compound_projector}]
    For the quantum compound ansatz if the initial state is a computational basis state with Hamming-weight $\frac{n}{2}$ and the observable is a rank-one projector onto another computational basis state in this space, then
    \begin{align}
        \textup{GradVar} \in \mathcal{O}\left({n \choose n/2}^{-1}\right).
    \end{align}
\end{theorem}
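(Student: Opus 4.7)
Because the rank-one observable $\O = |T\rangle\langle T|$ does not lie in the DLA $\mathfrak{su}(n)$, Theorem \ref{thm:compact} cannot be invoked directly. The key structural fact I would exploit is that, even though $\O$ and $\rho$ individually sit outside the DLA, the symmetric tensors $\O\otimes\O$ and $\rho\otimes\rho$ both live entirely inside the \emph{same single} irreducible subrepresentation $V_{\lambda_0}$ of $\wedge^{n/2}\mathbb{C}^n \otimes \wedge^{n/2}\mathbb{C}^n$ by Lemma~\ref{lemma:comp_basis}. This makes the two Haar twirls appearing in $\textup{GradVar}$ collapse to a single projector term via Schur orthogonality, which is enough to reach the claimed bound.

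First I would apply Lemma~\ref{lemma:vanish} to drop the first moment and use the identity $\Tr(XY)^2 = \Tr(X^{\otimes 2}Y^{\otimes 2})$ to write
\begin{align}
    \textup{GradVar} = \iint_G \Tr\bigl\{\tilde\rho^{\otimes 2}\,[\H,\tilde\O]^{\otimes 2}\bigr\}\,dg^+\,dg^-,
\end{align}
with $\tilde\rho = \U_{g^-}\rho\U_{g^-}^{\dagger}$ and $\tilde\O = \U_{g^+}\O\U_{g^+}^{\dagger}$. By Pieri's rule the decomposition $\wedge^{n/2}\mathbb{C}^n \otimes \wedge^{n/2}\mathbb{C}^n = \bigoplus_a V_{\lambda_a}$ is multiplicity free, so Schur orthogonality gives $\int_G (\U_g X \U_g^{\dagger})^{\otimes 2}\,dg = \sum_a \frac{\Tr(P_{\lambda_a}X^{\otimes 2})}{\dim V_{\lambda_a}}P_{\lambda_a}$. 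Lemma~\ref{lemma:comp_basis} places $|S\rangle\otimes|S\rangle$ and $|T\rangle\otimes|T\rangle$ in $V_{\lambda_0}$, so every term with $a\ge 1$ vanishes and both twirls reduce to $\frac{1}{\dim V_{\lambda_0}}P_{\lambda_0}$.

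Performing the $g^-$ integration replaces $\tilde\rho^{\otimes 2}$ by $\frac{1}{\dim V_{\lambda_0}}P_{\lambda_0}$; then expanding $[\H,\tilde\O]^{\otimes 2}$ into the four terms $(\H\otimes\H)(\tilde\O\otimes\tilde\O)$, $(\tilde\O\otimes\tilde\O)(\H\otimes\H)$, $-(\H\otimes\1)(\tilde\O\otimes\tilde\O)(\1\otimes\H)$, $-(\1\otimes\H)(\tilde\O\otimes\tilde\O)(\H\otimes\1)$ and performing the $g^+$ integration replaces each $\tilde\O\otimes\tilde\O$ by $\frac{1}{\dim V_{\lambda_0}}P_{\lambda_0}$. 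Using $P_{\lambda_0}^2 = P_{\lambda_0}$ and cyclicity yields
\begin{align}
    \textup{GradVar} = \frac{2}{(\dim V_{\lambda_0})^2}\Bigl(\Tr\{P_{\lambda_0}(\H\otimes\H)\} - \Tr\{P_{\lambda_0}(\H\otimes\1)P_{\lambda_0}(\1\otimes\H)\}\Bigr).
\end{align}
Bounding each trace by $|\Tr(AB)|\le \|A\|_1\|B\|_{\infty}$, with $\|P_{\lambda_0}\|_1 = \dim V_{\lambda_0}$ and $\|\H\|_{\textup{op}} = \mathcal{O}(1)$ for the single-block FBS generators, gives $\textup{GradVar} \le 4\|\H\|_{\textup{op}}^2/\dim V_{\lambda_0}$. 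Lemma~\ref{lem:jacobi-trudi} then supplies $\dim V_{\lambda_0} = \Theta\!\left(\binom{n}{n/2}^2/n\right)$, so $\textup{GradVar} = \mathcal{O}(n/\binom{n}{n/2}^2) = \mathcal{O}(\binom{n}{n/2}^{-1})$, the last equality holding because $n/\binom{n}{n/2}\to 0$. In fact this delivers a substantially tighter bound than the stated one, consistent with the remark in the main text that the upper bound is ``very loose.''

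The only genuinely nontrivial ingredient is the single-isotypic support claim supplied by Lemma~\ref{lemma:comp_basis}: that both $|S\rangle\otimes|S\rangle$ and $|T\rangle\otimes|T\rangle$ land entirely in $V_{\lambda_0}$ rather than having components in any of the larger $V_{\lambda_a}$. Everything else---the two Haar twirls, the commutator expansion, the elementary trace inequality, and the dimension count---is routine once that fact is granted.
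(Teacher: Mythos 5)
Your proposal is correct and follows essentially the same route as the paper's own proof: both collapse the two Haar twirls to $P_{\lambda_0}/\dim V_{\lambda_0}$ using Lemma~\ref{lemma:comp_basis} together with the multiplicity-free Pieri decomposition, arrive at the identical two-trace expression $\frac{2}{(\dim V_{\lambda_0})^2}\left(\Tr\{P_{\lambda_0}\H^{\otimes 2}\} - \Tr\{P_{\lambda_0}(\H\otimes\1)P_{\lambda_0}(\1\otimes\H)\}\right)$, and conclude via the dimension count of Lemma~\ref{lem:jacobi-trudi}. The only divergence is in the final bounding step, where the paper goes through SWAP-invariance identities and the positivity of $-\H^2$ to get $2\lVert\H\rVert_{\textup{op}}^2/\dim V_{\lambda_0}$, while you shortcut with $|\Tr(AB)|\le\|A\|_1\|B\|_{\infty}$ to get $4\lVert\H\rVert_{\textup{op}}^2/\dim V_{\lambda_0}$ --- a valid simplification yielding the same $\mathcal{O}\left(n\binom{n}{n/2}^{-2}\right)$ scaling, which indeed dominates the stated $\mathcal{O}\left(\binom{n}{n/2}^{-1}\right)$ bound.
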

\begin{proof}
Recall that  $\textup{GradVar} = \mathbb{E}_{{g^{+}, g^{-}}\sim \mu^{\otimes 2}}[(\partial\<\O\>)^{2}]$.
Let us write the integral for the second moment in full, and rearrange terms appropriately:
\begin{align}
&\mathbb{E}_{{g^{+}, g^{-}}\sim \mu^{\otimes 2}}[(\partial\<\O\>)^{2}]=\iint_{G} (\text{Tr}(\U_{g^{-}}\rho \U_{g^{-}}^{\dagger}[\H, \U_{g^{+}}\O \U_{g^{+}}^{\dagger}]))^{2} dg^- dg^+ \\
&= \int_G \Tr\left\{\left(\int_{G}\U_g^{-\otimes 2}\rho^{\otimes 2} \U_g^{-\dagger\otimes 2}dg^{-}\right)  [\H, \U_g^+ \O \U_g^{+\dagger}]^{\otimes 2} \right\} \;dg^+.
\end{align}

Given that $\O$ is a projector onto another computational basis state, i.e. an element of $V_{\lambda_0}$, we can use our above arguments to simplify the integral as follows:

\begin{align}
&\mathbb{E}_{{g^{+}, g^{-}}\sim \mu^{\otimes 2}}[(\partial\<\O\>)^{2}]=\frac{1}{\dim V_{\lambda_0}}\int_G \Tr\left\{P_{\lambda_0}[\H, \U_g^+ \O \U_g^{+\dagger}]^{\otimes 2} \right\} \;dg^+\\
&=\frac{1}{\dim V_{\lambda_0}}\Tr\big\{\int_{G}(\U_{g^+}\O \U_{g^+}^{\dagger})^{\otimes 2}dg^+[P_{\lambda_0}\H^{\otimes 2} - (\H\otimes \1)P_{\lambda_0}(\1\otimes \H) \nonumber \\&- (\1 \otimes \H)P_{\lambda_0}(\H\otimes \1) + \H^{\otimes 2}P_{\lambda_0}]\big\}\\
&=\frac{1}{\dim V_{\lambda_0}^2}\Tr\big\{P_{\lambda_0}[P_{\lambda_0}\H^{\otimes 2} - (\H\otimes \1)P_{\lambda_0}(\1\otimes \H) \nonumber \\&- (\1 \otimes \H)P_{\lambda_0}(\H\otimes \1) + \H^{\otimes 2}P_{\lambda_0}]\big\}\\
&=\frac{1}{\dim V_{\lambda_0}^2}\left(2\Tr[P_{\lambda_0}\H^{\otimes 2}] - \Tr[P_{\lambda_0}(\H\otimes \1)P_{\lambda_0}(\1\otimes \H)] - \Tr[P_{\lambda_0}(\1 \otimes \H)P_{\lambda_0}(\H\otimes \1)]\right).
\end{align}

One can observe that the elements of the Gelfand--Cetlin basis for $V_{\lambda_0}$, i.e. elements in $\text{Im}(P_{\lambda_{0}})$, are invariant under any swaps across the center tensor product, i.e. row swaps on the SSYT.  Furthermore, since any $\U_{g} \otimes \U_{g}$ commutes with $\text{SWAP}$ across the center $\otimes$, we have: $P_{\lambda_0}\text{SWAP} = \text{SWAP}P_{\lambda_0} = P_{\lambda_0}$. This gives

\begin{align}
&\Tr[P_{\lambda_0}(\H\otimes \1)P_{\lambda_0}(\1\otimes \H)]\\
&=\Tr[P_{\lambda_0}(\H\otimes \1)\text{SWAP}P_{\lambda_0}\text{SWAP}(\1\otimes \H)]\\
&=\Tr[P_{\lambda_0}\text{SWAP}(\1\otimes \H)P_{\lambda_0}(\H\otimes \1)\text{SWAP}]\\
&=\Tr[P_{\lambda_0}(\1\otimes \H)P_{\lambda_0}(\H\otimes \1)],
\end{align}

and

\begin{align}
&\Tr[P_{\lambda_0}(\H\otimes \1)P_{\lambda_0}(\1\otimes \H)]\\
&=\Tr[P_{\lambda_0}(\H\otimes \1)P_{\lambda_0}\text{SWAP}(\1\otimes \H)]\\
&=\Tr[P_{\lambda_0}(\H\otimes \1)P_{\lambda_0}(\H\otimes \1)\text{SWAP}]\\
&=\Tr[P_{\lambda_0}(\H\otimes \1)P_{\lambda_0}(\H\otimes \1)].
\end{align}

Note that $\tilde{\H} := \1 \otimes \H + \H \otimes \1$ commutes with $P_{\lambda_0}$ since  $\phi(\text{SU}(n))$ commutes with $P_{\lambda_0}$ and $\tilde{\H} \in d\phi(\mathfrak{su}(n)) \otimes d\phi(\mathfrak{su}(n))$. 

All of the above results imply that:

\begin{align}
&\Tr[P_{\lambda_0}(\1\otimes \H^2 + \H^2 \otimes \1)] + 2\Tr[P_{\lambda_0}\H^{\otimes 2}]\\
&=\Tr[P_{\lambda_0}(\1\otimes \H + \H\otimes \1)^{2}]\\
&=\Tr[P_{\lambda_0}(\1\otimes \H + \H\otimes \1)P_{\lambda_0}(\1\otimes \H + \H\otimes \1)]\\
&=\Tr[P_{\lambda_0}(\H\otimes \1)P_{\lambda_0}(\H\otimes \1)] +
\Tr[P_{\lambda_0}(\H\otimes \1)P_{\lambda_0}(\1\otimes \H)] \\&\quad+
\Tr[P_{\lambda_0}(\1\otimes \H)P_{\lambda_0}(\H\otimes \1)]+
\Tr[P_{\lambda_0}(\1\otimes \H)P_{\lambda_0}(\1\otimes \H)]\\
&=4\Tr[P_{\lambda_0}(\H\otimes \1)P_{\lambda_0}(\H\otimes \1)],
\end{align}

which implies that
\begin{align}
&2\Tr[P_{\lambda_0}\H^{\otimes 2}] - 2\Tr[P_{\lambda}(\H\otimes \1)P_{\lambda_0}(\H\otimes \1)]  =2\Tr[P_{\lambda_0}(\H\otimes \1)P_{\lambda_0}(\H\otimes \1)] - \Tr[P_{\lambda_0}(\1\otimes \H^2 + \H^2\otimes \1)].
\end{align}

Since $\H$ is skew-Hermitian and $P_{\lambda_0}$ is Hermitian, we have that $-\H^2 \succcurlyeq 0 $, and 
\begin{align}
    -\Tr[P_{\lambda_0}(\H\otimes \1)P_{\lambda_0}(\H\otimes \1)] = \lVert P_{\lambda_0} (\H \otimes \1)\rVert_{F}^{2}.
\end{align}

Thus,

\begin{align}
\mathbb{E}_{{g^{+}, g^{-}}\sim \mu^{\otimes 2}}[(\partial\<\O\>)^{2}]&=\frac{2}{\dim V_{\lambda_0}^2}\left(\Tr[P_{\lambda_0}\H^{\otimes 2}] - \Tr[P_{\lambda_0}(\H \otimes \1) P_{\lambda_0}(\H \otimes \1)]\right)
\\&=\frac{1}{\dim V_{\lambda_0}^2}(2\Tr[P_{\lambda}(\H\otimes \1)P_{\lambda_0}(\H\otimes \1)] - \Tr[P_{\lambda_0}(\1\otimes \H^2 + \H^2\otimes \1)] )\\
&=\frac{1}{\dim V_{\lambda_0}^2}(\Tr[P_{\lambda_0}(\1\otimes -\H^2 + -\H^2\otimes \1)]
-2\lVert P_{\lambda_0}(\H \otimes \1) \rVert_{F}^2)\\
&=\frac{2}{\dim V_{\lambda_0}^2}(\Tr[P_{\lambda_0}(\1\otimes -\H^2)P_{\lambda_0}]
-\lVert P_{\lambda_0}(\H \otimes \1) \rVert_{F}^2)\\
&\leq\frac{2}{\dim V_{\lambda_0}^2}\Tr[P_{\lambda_0}(\1\otimes -\H^2)P_{\lambda_0}]\\
&\leq\frac{2\lVert\H\rVert_{2}^{2}}{\dim V_{\lambda_0}},
\end{align}
where the fourth equality follows from invariance of $P_{\lambda_0}$ under $\text{SWAP}$, and the last inequality follows from $\1\otimes -\H^2 \succcurlyeq 0$.

Since $\H$ is effectively the restriction of one of the quantum compound ansatz generators to  the Hamming-weight $n/2$ subspace, the spectral norm is constant.
Thus
\begin{align}
    \mathbb{E}_{{g^{+}, g^{-}}\sim \mu^{\otimes 2}}[(\partial\<\O\>)^{2}] \in \mathcal{O}(1/\dim V_{\lambda_0}).
\end{align}
The result follows by plugging in the result of Supplementary Lemma~\ref{lem:jacobi-trudi}.
\end{proof}

\section{Projected Norm Lower Bound}

There is actually another interpretation of the projected norm $\lVert \rho_{\alpha} \rVert_{\text{F}}^{2}$ in terms of a different norm that has a deeper connection to the simple ideals $\g_{\alpha}$ and leads to a generic lower bound on $\lVert \rho_{\alpha} \rVert_{\text{F}}^{2}$. This section makes use of the representation theory of semisimple Lie algebra (see Supplementary Note~\ref{sec:intro_to_lie} for an introduction).

Let the set $\Delta^{+}$ will denote the collection of positive simple roots.  In addition if $T_{ij}$ is the metric tensor for the Killing form, then for roots $\vec{\alpha}, \vec{\beta} \in \Delta^{+}$ we define the inner product (recall that the Killing form is positive definite when restricted to the Cartan sublagebra for semisimple Lie algebra):
 \begin{equation}
        (\vec{\alpha}, \vec{\beta})_{\text{w}} := [T^{-1}]_{ij} \alpha_i \beta_j,
    \end{equation}
which linearly extends to weights expressed in terms of $\Delta^{+}$. The induced norm will be denoted $\lVert \cdot \rVert_{\text{w}}$.

The following lemma characterizes the action of the split Casimir on weight vectors. While potentially already a well-known result, we could not find an existing reference. Thus we include a short proof for completeness, which is a simple computation.

\begin{lemma}
Let $\g$ be a simple Lie algebra.  Suppose $V$ is a finite-dimensional inner product space and $d\phi : \g \rightarrow \mathfrak{u}(V)$ is a representation of $\g$. If $|\vec{\lambda}\rangle \in V$ is a weight vector with weight $\vec{\lambda}$ and $\K$ is the normalized split Casimir, then
\begin{equation}
    \<\vec{\lambda}|^{\otimes 2}\K|\vec{\lambda}\rangle^{\otimes 2} = \frac{\lVert\vec{\lambda}\rVert_{\text{w}}^2}{I_{\phi}}.
\end{equation}
\end{lemma}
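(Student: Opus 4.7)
The plan is to exploit the Cartan--Weyl expansion of $\K$ already written down in the representation theory appendix and observe that weight-vector expectations of the root-space contributions vanish. Concretely, starting from
\[
\K = I_{\phi}^{-1}\left(\sum_{i=1}^{r} d\phi_{\mathbb{C}}(H_i)\otimes d\phi_{\mathbb{C}}(H_i) + \sum_{\vec{\alpha} \in \Delta^+} \left[d\phi_{\mathbb{C}}(E_{\vec{\alpha}})\otimes d\phi_{\mathbb{C}}(E_{-\vec{\alpha}}) + d\phi_{\mathbb{C}}(E_{-\vec{\alpha}})\otimes d\phi_{\mathbb{C}}(E_{\vec{\alpha}})\right]\right),
\]
I would sandwich each summand between $\langle\vec{\lambda}|\otimes\langle\vec{\lambda}|$ and $|\vec{\lambda}\rangle\otimes|\vec{\lambda}\rangle$ and dispose of the root-space terms.

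The key step is that $d\phi_{\mathbb{C}}(E_{\pm\vec{\alpha}})|\vec{\lambda}\rangle$ is either zero or a vector of weight $\vec{\lambda}\pm\vec{\alpha}$. Since weight spaces associated with distinct weights are orthogonal (they are eigenspaces of the commuting Hermitian operators $d\phi_{\mathbb{C}}(H_i)$, in an appropriate real form), we have $\langle \vec{\lambda}|d\phi_{\mathbb{C}}(E_{\pm\vec{\alpha}})|\vec{\lambda}\rangle = 0$ for every $\vec{\alpha}\in\Delta^+$. Therefore each root-space tensor factor contributes $0$, and only the Cartan diagonal survives:
\[
\langle\vec{\lambda}|^{\otimes 2}\K|\vec{\lambda}\rangle^{\otimes 2} = I_{\phi}^{-1}\sum_{i=1}^{r}\langle\vec{\lambda}|d\phi_{\mathbb{C}}(H_i)|\vec{\lambda}\rangle^{2} = I_{\phi}^{-1}\sum_{i=1}^{r}\vec{\lambda}(H_i)^{2},
\]
using the weight-vector eigenvalue equation $d\phi_{\mathbb{C}}(H_i)|\vec{\lambda}\rangle=\vec{\lambda}(H_i)|\vec{\lambda}\rangle$.

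Finally I would identify the remaining sum with $\lVert\vec{\lambda}\rVert_{\text{w}}^2$. The basis $\{H_i\}$ is Killing-orthonormal by construction, so the Killing metric tensor on $\mathfrak{h}$ in this basis is $T_{ij}=\delta_{ij}$ and thus $[T^{-1}]_{ij}=\delta_{ij}$. Writing $\vec{\lambda}$ in the basis of $\mathfrak{h}^{*}$ dual to $\{H_i\}$, the components are $\lambda_i=\vec{\lambda}(H_i)$, and so $\lVert\vec{\lambda}\rVert_{\text{w}}^{2} = [T^{-1}]_{ij}\lambda_i\lambda_j = \sum_i \vec{\lambda}(H_i)^2$. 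Dividing by $I_{\phi}$ yields the claimed identity.

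The only subtlety (not really an obstacle, but worth being explicit about in the write-up) is the normalization bookkeeping: the $H_i$ appearing inside $\K$ must be the Killing-orthonormal basis in order for the $\mathrm{w}$-norm identification at the end to read off cleanly with metric $\delta_{ij}$, and this is exactly the convention already fixed in the paper's Cartan--Weyl expansion. No further computation is required once the root-space terms are killed by the orthogonality of weight spaces.
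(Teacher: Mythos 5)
Your proposal is correct and follows essentially the same route as the paper's own proof: expand the normalized split Casimir in the Cartan--Weyl basis, kill the $E_{\pm\vec{\alpha}}\otimes E_{\mp\vec{\alpha}}$ terms using the orthogonality of distinct weight spaces, and evaluate the surviving Cartan sum via $d\phi_{\mathbb{C}}(H_i)|\vec{\lambda}\rangle=\vec{\lambda}(H_i)|\vec{\lambda}\rangle$ to obtain $I_{\phi}^{-1}\sum_i\lambda_i^2=\lVert\vec{\lambda}\rVert_{\text{w}}^2/I_{\phi}$. Your added remark that the Killing-orthonormality of $\{H_i\}$ makes the metric tensor $\delta_{ij}$, so that $\sum_i\vec{\lambda}(H_i)^2$ really is the $\text{w}$-norm, is a correct and slightly more explicit account of a step the paper treats implicitly.
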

\begin{proof}

Let us the denote the Cartan--Weyl basis for the complexification of $\g$ by:
\begin{align}
    \{H_i\}_{i=1}^{r} \cup \{E_{\vec{\alpha}}, E_{-\vec{\alpha}}\}_{\vec{\alpha} \in \Delta^{+}},
\end{align}
where $\Delta^{+}$ is a set of positive simple roots, the $H_i$ span the Cartan subalgebra and $E_{\vec{\alpha}}, E_{-\vec{\alpha}}$ are the ladder operators. In addition, let $d\phi_{\mathbb{C}}$ denote the complexification of $d\phi$.

We can express the normalized split Casimir in the Cartan--Weyl basis as:
\begin{align}
\K = I_{\phi}^{-1}\left(\sum_{i=1}^{r} d\phi_{\mathbb{C}}(H_i)\otimes d\phi_{\mathbb{C}}(H_i) + \sum_{\vec{\alpha} \in \Delta^+} \left[d\phi_{\mathbb{C}}(E_{\vec{\alpha}})\otimes d\phi_{\mathbb{C}}(E_{-\vec{\alpha}}) + d\phi_{\mathbb{C}}(E_{-\vec{\alpha}})\otimes d\phi_{\mathbb{C}}(E_{\vec{\alpha}})\right]\right).
\end{align}
 
  \begin{align}
        \<\vec{\lambda}|^{\otimes 2} \K|\vec{\lambda}\>^{\otimes 2}&= I_{\phi}^{-1}\left(\sum_{i=1}^{r}(\langle \vec{\lambda} |d\phi_{\mathbb{C}}(H_i)|\vec{\lambda}\rangle)^2
        + \sum_{\vec{\alpha} \in \Delta^+} 2\<\vec{\lambda}|d\phi_{\mathbb{C}}(E_{\vec{\alpha}})|\vec{\lambda}\> \<\vec{\lambda}|d\phi_{\mathbb{C}}(E_{-\vec{\alpha}})|\vec{\lambda}\>\right) \\
        \\&= I_{\phi}^{-1}\left(\sum_{i=1}^{r} \lambda_i^2
        + \sum_{\vec{\alpha} \in \Delta^+} 2\<\vec{\lambda}|d\phi_{\mathbb{C}}(E_{\vec{\alpha}})|\vec{\lambda}\> \<\vec{\lambda}|d\phi_{\mathbb{C}}(E_{-\vec{\alpha}})|\vec{\lambda}\>\right) \\
        &=I_{\phi}^{-1}\left(\sum_{i=1}^{r} \lambda_i^2\right)\\
        &= \frac{\lVert\vec{\lambda}\rVert_{\text{w}}^2}{I_{\phi}},
    \end{align}

where $d\phi_{\mathbb{C}}(E_{\pm\vec{\alpha}})$ is zero because the $E_{\pm\vec{\alpha}}$ move between orthogonal weight spaces.
\end{proof}

Suppose $\rho = |\psi\rangle\langle\psi|$ for some unit vector $|\psi\rangle \in V$, which corresponds to the pure state case for quantum. If the representation under consideration $\phi$ is not irreducible, then, by unitarity, we can decompose $\phi$ into orthogonal irreducible components:
\begin{align}
    V = \bigoplus_{r}V_{\phi_r}.
\end{align}
Furthermore, each irreducible $V_{\phi_{r}}$ can be represented as a direct sum of, mutually orthogonal, weight spaces $V_{\vec{\lambda}^{(k)}}$ for weight $\vec{\lambda}^{(k)}$:
\begin{align}
    V_{\phi_{r}}= \bigoplus_{t} V_{\vec{\lambda}^{(t)}}.
\end{align}
Thus $|\psi\rangle \in V$ can be uniquely expressed as a linear combination of weight vectors:
\begin{align}
|\psi\rangle = \sum_{k}\beta_k |\vec{\lambda}^{(k)}\rangle.
\end{align}
To every unit vector  $|\psi\rangle \in V$ we can associate the following vector:
\begin{align}
\label{eqn:generalized_weight}
    \vec{\psi} = \sum_{k} \lvert \beta_k \rvert^{2} \vec{\lambda}^{(k)},
\end{align}
which by the unit vector assumption is a convex combination of weights. Note that a generalization to non-unit vectors follows trivially, i.e. $c|\psi\rangle \implies c^2\vec{\psi}$. Since $V$ is a complex vector space, a single $\vec{\psi}$ can be associated with multiple unit vectors.

We can linearly extend the inner product for weights, $(\cdot, \cdot)_{\text{w}}$, to the quantity in Supplementary Equation~\eqref{eqn:generalized_weight} to obtain the norm:
\begin{align}
    \lVert \vec{\psi}\rVert_{\text{w}}^2 = \sum_{k, j} \lvert \beta_k \rvert^{2} \lvert \beta_j \rvert^{2}(\vec{\lambda}^{(k)}, \vec{\lambda}^{(j)})_{\text{w}}.
\end{align}

This is the quantity that lower bounds the projected norm, as put concretely in the following result.
\begin{theorem}[Projected Norm Lower Bound]
\label{thm:project_norm_lower_bound}
Suppose $\phi$ is unitary representation of a simple Lie algebra $\g$, then for any unit vector $|\psi\rangle \in V$ the following holds
\begin{equation}
     \|\rho_{\g}\|_{\textup{F}} \ge  \frac{\lVert \vec{\psi}\rVert_{\textup{w}}^2}{I_{\phi}},
\end{equation} 
where $\rho = |\psi\rangle\langle \psi|$.
\end{theorem}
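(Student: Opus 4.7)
The plan is to reduce the claim to the stronger bound $\|\rho_\g\|_{\text{F}}^2 \ge \|\vec\psi\|_{\text{w}}^2/I_\phi$; the stated first-power version then follows for free, since for a pure state $\|\rho\|_{\text{F}}=1$ and orthogonal projection is non-expansive, so $\|\rho_\g\|_{\text{F}}\le 1$ and hence $\|\rho_\g\|_{\text{F}}\ge\|\rho_\g\|_{\text{F}}^2$. The squared bound is moreover tight whenever $|\psi\rangle$ is a weight vector, by the lemma immediately preceding the theorem, which already suggests that the natural route is to evaluate $\langle\psi|^{\otimes 2}\K|\psi\rangle^{\otimes 2}$ in a Cartan--Weyl basis and show that the Cartan part alone delivers the bound while the root-vector part contributes only non-negative terms.

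Concretely, I would write
\begin{equation*}
\K = I_\phi^{-1}\Bigl(\textstyle\sum_{i=1}^{r} H_i\otimes H_i + \sum_{\vec\alpha\in\Delta^+}\bigl(E_{\vec\alpha}\otimes E_{-\vec\alpha} + E_{-\vec\alpha}\otimes E_{\vec\alpha}\bigr)\Bigr),
\end{equation*}
where I suppress $d\phi_{\mathbb{C}}$ and use the standard compact-form convention that the $H_i$ are Hermitian and $E_{-\vec\alpha}=E_{\vec\alpha}^{\dagger}$. Expanding $|\psi\rangle=\sum_k \beta_k|\vec\lambda^{(k)}\rangle$ in an orthonormal weight basis, $\langle\psi|H_i|\psi\rangle=\sum_k|\beta_k|^2\lambda_i^{(k)}$ by orthogonality of distinct weight vectors, so the Cartan contribution to $\|\rho_\g\|_{\text{F}}^2=\langle\psi|^{\otimes 2}\K|\psi\rangle^{\otimes 2}$ equals
\begin{equation*}
I_\phi^{-1}\sum_i\langle\psi|H_i|\psi\rangle^2 = I_\phi^{-1}\sum_{k,j}|\beta_k|^2|\beta_j|^2(\vec\lambda^{(k)},\vec\lambda^{(j)})_{\text{w}} = \frac{\|\vec\psi\|_{\text{w}}^2}{I_\phi},
\end{equation*}
using that in the Killing-orthonormal basis of the Cartan the induced inner product on weights reduces to $(\vec\alpha,\vec\beta)_{\text{w}}=\sum_i\alpha_i\beta_i$, and the definition of $\vec\psi$ as the convex combination of the weights with coefficients $|\beta_k|^2$.

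For the ladder part, the compact-form adjoint relation $E_{-\vec\alpha}=E_{\vec\alpha}^{\dagger}$ yields $\langle\psi|E_{-\vec\alpha}|\psi\rangle=\overline{\langle\psi|E_{\vec\alpha}|\psi\rangle}$, so the contribution equals
\begin{equation*}
\frac{2}{I_\phi}\sum_{\vec\alpha\in\Delta^+}\langle\psi|E_{\vec\alpha}|\psi\rangle\langle\psi|E_{-\vec\alpha}|\psi\rangle = \frac{2}{I_\phi}\sum_{\vec\alpha\in\Delta^+}\bigl|\langle\psi|E_{\vec\alpha}|\psi\rangle\bigr|^2 \ge 0.
\end{equation*}
Adding the two pieces gives $\|\rho_\g\|_{\text{F}}^2\ge\|\vec\psi\|_{\text{w}}^2/I_\phi$, from which the stated inequality follows by the opening remark. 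The main conceptual obstacle is really just the bookkeeping step of reconciling the decomposition of $|\psi\rangle$ over possibly degenerate weight spaces with the Cartan calculation—one must choose orthonormal bases inside each weight space, after which the cross terms between distinct weight vectors in the Cartan part vanish automatically. Everything else is a one-line separation into a non-negative sum of squares plus the target quantity.
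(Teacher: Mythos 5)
Your proof is essentially the paper's own: both expand the normalized split Casimir in the Cartan--Weyl basis, show the ladder contribution equals $\frac{2}{I_\phi}\sum_{\vec{\alpha}\in\Delta^{+}}\lvert\langle\psi|d\phi_{\mathbb{C}}(E_{\vec{\alpha}})|\psi\rangle\rvert^{2}\ge 0$ using $d\phi_{\mathbb{C}}(E_{-\vec{\alpha}})=d\phi_{\mathbb{C}}(E_{\vec{\alpha}})^{\dagger}$ from unitarity, and identify the Cartan part with $\lVert\vec{\psi}\rVert_{\text{w}}^{2}/I_{\phi}$ after expanding $|\psi\rangle$ in orthonormal weight vectors. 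Your opening reduction via $\lVert\rho_{\g}\rVert_{\text{F}}\le 1$ is a small improvement in rigor, since the paper's proof actually bounds $\lVert\rho_{\g}\rVert_{\text{F}}^{2}=\langle\psi|^{\otimes 2}\K|\psi\rangle^{\otimes 2}$ while the theorem is stated with the first power.
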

\begin{proof}
    We know that by definition $\|P_\g |\psi\>\<\psi|\|_{\text{F}} =\<\psi|^{\otimes 2}\K|\psi\>^{\otimes 2} $. Using the expression for the normalized split Casimir in the Cartan--Weyl basis:
    \begin{align}
    &\<\psi|^{\otimes 2}\K|\psi\>^{\otimes 2} = I_{\phi}^{-1}\left(\sum_{i=1}^{r}(\langle \psi |d\phi_{\mathbb{C}}(H_i)|\psi\rangle)^2
        + \sum_{\vec{\alpha} \in \Delta^+} 2\<\psi|d\phi_{\mathbb{C}}(E_{\vec{\alpha}})|\psi\> \<\psi|d\phi_{\mathbb{C}}(E_{-\vec{\alpha}})|\psi\>\right).
    \end{align}
    
    Note that
    \begin{align}
    \sum_{\vec{\alpha} \in \Delta^{+}} \<\psi|^{\otimes 2} d\phi_{\mathbb{C}}(E_{\vec{\alpha}}) \otimes d\phi_{\mathbb{C}}(E_{-\vec{\alpha}}) |\psi\>^{\otimes 2}
    &= \sum_{\vec{\alpha} \in \Delta^{+}}\<\psi| d\phi_{\mathbb{C}}(E_{\vec{\alpha}}) |\psi\> \<\psi| d\phi_{\mathbb{C}}(E_{-\vec{\alpha}}) |\psi\>\\
    &= \sum_{\vec{\alpha} \in \Delta^{+}}\<\psi| d\phi_{\mathbb{C}}(E_{\vec{\alpha}}) |\psi\> \<\psi| d\phi_{\mathbb{C}}(E_{\vec{\alpha}})^{\dagger} |\psi\>\\
    &= \sum_{\vec{\alpha} \in \Delta^{+}} |\<\psi| d\phi_{\mathbb{C}}(E_{\vec{\alpha}}) |\psi\>|^2 \ge 0
    \end{align}
    since by the unitarity of the representation: $d\phi_{\mathbb{C}}(E_{-\vec{\alpha}}) = d\phi_{\mathbb{C}}(E_{\vec{\alpha}})^\dagger$.
    Therefore:
    \begin{align}
    \<\psi|^{\otimes 2}\K|\psi\>^{\otimes 2}
    &\ge I_{\phi}^{-1}\sum_{kr}\lvert \beta_k \rvert^{2}\lvert \beta_r \rvert^{2}\sum_{i=1}^{r}\vec{\lambda}^{(k)}(d\phi_{\mathbb{C}}(H_i))\cdot\vec{\lambda}^{(r)}(d\phi_{\mathbb{C}}(H_i)) \\ 
    &= I_{\phi}^{-1}\sum_{kr}\lvert \beta_k \rvert^{2}\lvert \beta_r \rvert^{2}(\vec{\lambda}^{(k)}, \vec{\lambda}^{(r)})_{\text{w}}
    \\ &= \frac{\lVert \psi \rVert_{\text{w}}^2}{I_{\phi}}.
    \end{align}
\end{proof}

Note that $\sum_{\vec{\alpha}} |\<\psi| d\phi_{\mathbb{C}}(E_{\vec{\alpha}}) |\psi\>|^2 > 0$ if and only if the vector has support on at least two weight spaces where the weights differ by a root $\vec{\alpha}$. Thus we have equality, for example, if $|\psi\> = |\vec{\lambda}^{(k)}\>$ for some weight $\vec{\lambda}^{(k)}$, however there can be  weights that differ by multiples of a root, which by definition is not a root, so this is not a necessary condition.

We can use Supplementary Theorem~\ref{thm:project_norm_lower_bound} to give a lower bound on %
the variance when the initial state $\rho$ is pure:
\begin{equation}
\textup{GradVar} \geq \frac{I_{\text{Ad}} \lVert o \rVert_{\g}^{2}\lVert h\rVert_{\g}^{2}}{d_{\g}^{2}} \cdot \lVert \vec{\psi}\rVert_{\text{w}}^2,
\end{equation} 
which helps to remove the dependence on the index $I_\phi$. Note that the only $\phi$-dependent quantity %
is $\lVert \vec{\psi}\rVert_{\text{w}}^2$. %
In the semisimple case, i.e. arbitray LASA, the above result can be applied separately to each simple ideal to yield:
\begin{align}
 \textup{GradVar} \geq \sum_{\alpha}\frac{I_{\text{Ad}_{{\alpha}}} \lVert o \rVert_{\g_{\alpha}}^{2}\lVert h\rVert_{\g_{\alpha}}^{2}}{d_{\g_{\alpha}}^{2}} \cdot \lVert \vec{\psi}\rVert_{\text{w}_{\alpha}}^2,
\end{align}
where the subscript $\alpha$ is added to the norm $\lVert\cdot\rVert_{\text{w}}$ to emphasize it is different for each ideal. Specifically, the norm depends on  the restriction of $d\phi$  to $\g_{\alpha}$, which may break $V$ into a different set of irreducible components for different $\alpha$. Since $I_{\text{Ad}} = \Theta(\sqrt{d_{\g_{\alpha}}})$, this implies that
\begin{align}
\label{eqn:lower_bound_grad_Var}
\textup{GradVar} \in \Omega\left(\sum_{\alpha}\frac{\lVert o \rVert_{\g_{\alpha}}^{2}\lVert h\rVert_{\g_{\alpha}}^{2}\lVert \vec{\psi}\rVert_{\text{w}_{\alpha}}^2}{d_{\g_{\alpha}}^{3/2}}\right).
\end{align}
One can contrast this with the upper bound presented in the main text where $\O$ and $\H$ had $\phi$-dependent norms, i.e. were Frobenius norms. Alternatively, the above lower bound has shifted all the $\phi$ dependence to the $\lVert \vec{\psi}\rVert_{\text{w}_{\alpha}}^2$ norms. 

If $\O$ and $\H$ are chosen such that they have mutual alignment on $\g_{\alpha}$'s that do not have an exponentially growing dimension, then Supplementary Equation~\eqref{eqn:lower_bound_grad_Var} gives more insight into how the initial state needs to be chosen to avoid a BP. In this setting, one can view the terms
\begin{align}
\frac{\lVert o \rVert_{\g_{\alpha}}^{2}\lVert h\rVert_{\g_{\alpha}}^{2}}{d_{\g_{\alpha}}^{3/2}}
\end{align}
as coefficients that weigh the different norms $\lVert \vec{\psi}\rVert_{\text{w}_{\alpha}}^2$. Thus, the goal is to select an initial state such that $\lVert \vec{\psi}\rVert_{\text{w}_{\alpha}}^2$ are decaying slowly on the subalgebras which $\O$ and $\H$ agree on. A poor choice of initial state can cause the variance to fall faster than the DLA dimension. Unfortunately, the $\lVert \vec{\psi}\rVert_{\text{w}_{\alpha}}^2$ can be challenging to determine in practice, even at small scales. Thus, at the moment we consider the lower bound to be more of theoretical interest and use it to further enlighten the BP phenomenon in LASA.

\section{Applicability of Theory beyond LASA}

\label{sec:genobservable}
In this section, we present generalizations of the Theorem~\ref{thm:compact} in the main text to the non-LASA case. Specifically, the same tools utilized to derive the results for LASA  can be used to obtain a lower bound on gradient variance for an arbitrary observable, in terms of the LASA component. Unfortunately, the expression is not as concise as the ones in the main text, and the various factors that contribute to a BP can be challenging to compute. However, there are still some interesting observations that can be made. For clarity, the proofs of the results of  this section have been delayed to a separate subsection.

Let us denote the orthogonal complement, under the standard Frobenius inner product, of $d\phi(\g)$ within $\mathfrak{u}(V)$ as $(d\phi(\g))^c$. Then we can decompose an arbitrary element $\O \in \mathfrak{u}(V)$ (a generic skew-Hermitian operator on $V$) as 
\begin{equation}
    \O = \O_{\g} + \O_{\g^c},
\end{equation}
where $\O_{\g} \in d\phi(\g)$ and $\O_{\g^c} \in (d\phi(\g))^c$. 
Via conjugation, $\phi$ induces an action of $G$ on the whole of $\mathfrak{u}(V)$ (\emph{a real unitary representation}), which breaks up into two subrepresentations. One is the usual adjoint representation $\Ad_{G}$ on $d\phi(\g)$, and the other is $\phi(G)$ acting on $(d\phi(\g))^{c}$ via conjugation. The orthogonal complement is also an invariant subspace.
Since $G$ is compact, both $d\phi(\g)$ and its complement will decompose into a direct sum of irreducible components. Thus overall,
\begin{align}
\label{eqn:u_decomp}
    \mathfrak{u}(V) = \bigoplus_{\kappa}W_{\kappa},
\end{align}
where each $W_{\kappa}$ is an irreducible subspace over $\mathbb{R}$ under conjugation by $\phi(G)$.

The space $d\phi(\g)$ breaks into a direct sum of simple ideals, where all methods from the main text apply. Unfortunately, it is not possible to make general statements about the decomposition of the complement besides that there may be an irreducible component in $(d\phi(\g))^{c}$ that is $G$-isomorphic to one of the simple ideals of $\g$.

For an observable that has support on the complement, the cost function will generally split into three terms: the variance on $d\phi(\g)$, the variance on $(d\phi(\g))^c$, and an interaction term (the covariance). Explicitly, it is the following sum:
\begin{align}
   \textup{GradVar}
    =&\text{Var}_{(g_{+}, g_{-}) \sim \mu^{\otimes 2}}[\partial\langle \O_{\g} \rangle_{\rho}]
    + \text{Var}_{(g_{+}, g_{-}) \sim \mu^{\otimes 2}}[\partial\langle \O_{\g^c} \rangle_{\rho}]\\
    &+ 2\text{Cov}_{(g_{+}, g_{-}) \sim \mu^{\otimes 2}}[\partial\langle \O_{\g}\rangle_{\rho}, \partial\langle \O_{\g^c}\rangle_{\rho}].
\end{align}

Interestingly, for the covariance terms, all that matters is whether there is an irreducible component, $W_{\kappa}$, of $(d\phi(\g))^{c}$ that is $G$-isomorphic to a simple ideal of $\g$, i.e. isomorphic to an irreducible component of $d\phi(\g)$. Furthermore, since each of the simple ideals are non-isomorphic, there is only one cross term per $W_{\kappa}$.  Both of the previous statements follow from Schur orthogonality, which will cause cross terms involving non-isomorphic irreducible representations to be annihilated. Thus, we can further split a general observable into three components
\begin{align}
     \O = \O_{\g} + \O_{\g^c_{\cong}} + \O_{\g^c_{\not\cong}},
\end{align}

where $\O_{\g^c_{\cong}}$ denotes the sum of components that are in an irreducible component that is $G$-isomorphic to a simple ideal and $\O_{\g^c_{\not\cong}}$ is the sum of those that are not.  We will term the former as the \emph{ideal complement} component and the later as the \emph{non-ideal complement} component. We will call $\O_{\g}$ the \emph{ideal} component. So in Supplementary Equation~\eqref{eqn:u_decomp} each $W_{\kappa}$ is  either part of  the ideal, ideal complement or non-ideal complement components. For two indices $\kappa, \kappa'$, the notation $\kappa \cong \kappa'$ will imply that $W_{\kappa}$ is $G$-isomorphic to $W_{\kappa'}$.
Lastly, note that depending on the observable and Lie algebra, the ideal or non-ideal complement components could be empty.

As just discussed, Schur orthogonality allows us to express the general variance as
\begin{align}
\label{eqn:gen_variance}
   \textup{GradVar}
=&\text{Var}_{(g_{+}, g_{-}) \sim \mu_{\alpha}^{\otimes 2}}[\partial\langle \O_{\g} \rangle_{\rho}]\\
&+ \text{Var}_{(g_{+}, g_{-}) \sim \mu^{\otimes 2}}[\partial\langle \O_{\g^c_{\cong}} \rangle_{\rho}]\\ &+ \text{Var}_{(g_{+}, g_{-}) \sim \mu^{\otimes 2}}[\partial\langle \O_{\g^c_{\not\cong}} \rangle_{\rho}]\\
&+ 2\text{Cov}_{(g_{+}, g_{-}) \sim \mu^{\otimes 2}}[\partial\langle \O_{\g}\rangle_{\rho}, \partial\langle \O_{\g^c_{\cong}}\rangle_{\rho}].
\end{align}

 The lower bound on the general variance that we present is in terms of the ideal and ideal complement components of Supplementary Equation \eqref{eqn:gen_variance}, which together we call the \emph{DLA component} of the variance. Before proceeding with the result, we start with defining two new quantities that will appear in the lower bound.

\begin{definition}
\label{def:corr_frob}
Suppose $W_1$ and $W_2$ are $G$-isomorphic irreducible components of $\mathfrak{u}(V)$, with an arbitrary $G$-isomorphism $\gamma: W_2 \rightarrow W_1$. Let $i\mathbf{B}, i\mathbf{C} \in \mathfrak{u}(V)$, and let $i\mathbf{B}_1 ~(i\mathbf{B}_2)$ and $i\mathbf{C}_1 ~(i\mathbf{C}_2)$ denote their orthogonal projections (under the Frobenius inner product) onto $W_{1}~(W_{2})$. Then we define:
\begin{align}
    (\mathbf{B}, \mathbf{C})_{1, 2} = \Tr(\mathbf{B}_1\gamma \mathbf{B}_2)\Tr(\mathbf{C}_1(\gamma^{-1})^{\dagger}\mathbf{C}_{2}),
\end{align}
and it is independent of $\gamma$.
\end{definition}
Intuitively, $(\mathbf{B}, \mathbf{C})_{1, 2}$ is a type of ``product of autocorrelations'' of $\mathbf{B}$ and $\mathbf{C}$ w.r.t. their projections onto the two irreducible components.

The next quantity we define is similar to the previous one but will generalize the Killing norm quantity.
\begin{definition}
\label{def:corr_kill}
Suppose $W_1$ and $W_2$ are $G$-isomorphic irreducible components of $\mathfrak{u}(V)$, with an arbitrary $G$-isomorphism $\gamma: W_2 \rightarrow W_1$. Let $i\mathbf{A} \in \mathfrak{u}(V)$, and $P_1$, $P_2$ be orthogonal projectors onto $W_1$ and $W_2$ respectively. Then, for arbitrary orthonormal bases $\{u_j\}$ and $\{v_k\}$ of $W_1$ and $W_2$, respectively, we define:
\begin{align}
    (\A)_{1, 2} := \sum_{j, k}-\langle u_{j}, (\gamma^{-1})^{\dagger}v_{k}\rangle \Tr(P_1([\A, u_j])\gamma P_2([\A, v_k])_2),
\end{align}
and it is independent of $\gamma$.
\end{definition}
If one takes $\A$ to be such that $\A \in d\phi(\g)$, then the operator $\O \mapsto [\A , \O]$ preserves the subspaces $W_1$ and $W_2$, so there is no need for the projectors $P_1$ and $P_2$. If we had $W_1 = W_2 = d\phi(\g_{\alpha})$, then $\gamma$ would be the identity and the above quantity would be the Killing norm.

Since the variance of the complement must also be positive, we can ignore it, which leads to the following lower bound on the variance for general observables $\O$.

\begin{theorem}[Lower Bound by DLA component]
\label{thm:cov_thm}
Let $\O$ be an arbitrary observable, then
    \begin{align}
        \textup{GradVar} &\geq \sum_{\alpha}\frac{1}{d_{\g_{\alpha}}^2}(\lVert  \H_{\g_{\alpha}} \rVert_{\textup{K}}^{2} \lVert  \O_{\g_{\alpha}} \rVert_{\textup{F}}^{2} \lVert \rho_{\g_{\alpha}} \rVert_{\textup{F}}^{2}\\
        &+\sum_{\kappa | \kappa \cong \alpha}(\lVert \O_\kappa\rVert_{\textup{F}}^{2}\lVert \rho_\kappa\rVert_{\textup{F}}^{2}(i\H)_{(\kappa, \kappa)} \\
        &+ \sum_{\{\kappa' | \kappa' \cong \kappa\}\cup\{\alpha\}}(\O, \rho)_{(\kappa, \kappa')}(i\H)_{(\kappa, \kappa')} )),
    \end{align} 
where $\alpha$ indexes the ideal component and $\kappa$ indexes the ideal complement component.
\end{theorem}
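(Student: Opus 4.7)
The plan is to proceed analogously to the proofs of Theorems \ref{thm:simple_group} and \ref{thm:compact}, treating separately the three types of contributions arising from the decomposition $\O = \O_{\g} + \O_{\g^c_{\cong}} + \O_{\g^c_{\not\cong}}$. First I would invoke Supplementary Equation \eqref{eqn:gen_variance}, which already uses Schur orthogonality to kill all cross terms involving $\O_{\g^c_{\not\cong}}$, and then immediately discard the non-negative self-variance $\text{Var}[\partial\langle \O_{\g^c_{\not\cong}}\rangle_{\rho}]$ to obtain a lower bound involving only the ideal and ideal-complement components of $\O$.

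Next I would dispatch the ideal contribution $\text{Var}[\partial\langle \O_{\g}\rangle_{\rho}]$ by direct application of Theorem \ref{thm:compact}, which recovers the first summand $\lVert\H_{\g_{\alpha}}\rVert_{\textup{K}}^{2}\lVert\O_{\g_{\alpha}}\rVert_{\textup{F}}^{2}\lVert\rho_{\g_{\alpha}}\rVert_{\textup{F}}^{2}/d_{\g_{\alpha}}^{2}$. For the two remaining terms, namely the self-variance of $\O_{\g^c_{\cong}}$ and the cross covariance with $\O_{\g}$, I would repeat the two-step integration from the proof of Theorem \ref{thm:simple_group} but with $\A^{\otimes 2}$ replaced by $\mathbf{B} \otimes \mathbf{C}$, where $\mathbf{B}$ and $\mathbf{C}$ now live in (possibly distinct but $G$-isomorphic) irreducible summands $W_{\kappa}, W_{\kappa'}$ of $\mathfrak{u}(V)$. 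The key computation is the inner Haar integral $\int_{G} (\U_g \mathbf{B} \U_g^{\dagger}) \otimes (\U_g \mathbf{C} \U_g^{\dagger})\,dg$: by Schur orthogonality applied to the real unitary representation of $G$ on $\mathfrak{u}(V)$ this vanishes unless $W_{\kappa} \cong W_{\kappa'}$ as $G$-modules, and when they are isomorphic it equals a ``generalized split Casimir'' built from any $G$-isomorphism $\gamma \colon W_{\kappa'} \to W_{\kappa}$, normalized by $d_{\g_{\alpha}}$ for the common simple-ideal index $\alpha \cong \kappa \cong \kappa'$. Contracting this against $\rho^{\otimes 2}$ produces $(\O,\rho)_{(\kappa,\kappa')}$ from Definition \ref{def:corr_frob}, and then carrying out the outer integration over $g^{-}$ against the commutator with $\H$ produces $(i\H)_{(\kappa,\kappa')}$ from Definition \ref{def:corr_kill}, in direct analogy with how $\lVert\H_{\g_{\alpha}}\rVert_{\textup{K}}^{2}$ emerges in the simple case.

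The main obstacle will be twofold. First, each of the quantities above must be shown to be independent of the choice of $G$-isomorphism $\gamma$ between isomorphic irreducible components; this follows from Schur's lemma, since any two such isomorphisms differ by a nonzero scalar that cancels between the $\gamma$ and $(\gamma^{-1})^{\dagger}$ factors entering Definitions \ref{def:corr_frob} and \ref{def:corr_kill}. Second, when $W_{\kappa}$ lies in the ideal complement it is $G$-isomorphic to a simple ideal but is not itself a Lie ideal of $d\phi(\g)$, so $[\H, u_j]$ for $u_j \in W_{\kappa}$ need not remain in $W_{\kappa}$. This is precisely why Definition \ref{def:corr_kill} includes the explicit projectors $P_1, P_2$, and why no Killing-form simplification (as available in the LASA case) collapses $(i\H)_{(\kappa,\kappa')}$ further. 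Once these two points are settled, collecting the ideal, ideal-complement, and cross contributions according to the $G$-isomorphism classes of the $W_{\kappa}$'s yields exactly the expression in the theorem statement, after the non-isomorphic self-variance has been dropped for the lower bound.
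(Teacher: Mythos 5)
Your outline reproduces the paper's own route essentially step for step: you start from Supplementary Equation~\eqref{eqn:gen_variance}, discard the non-negative variance of $\O_{\g^c_{\not\cong}}$, dispatch the ideal part via Theorem~\ref{thm:compact}, and reduce the ideal-complement variance and the cross covariance to the Haar integral $\int_G (\U_g \mathbf{B}\U_g^{\dagger})\otimes(\U_g \mathbf{C}\U_g^{\dagger})\,dg$ over isomorphic but unequal irreducible components, evaluated by a modified Schur orthogonality built from a $G$-isomorphism $\gamma$. This is exactly the content of the paper's Supplementary Lemmas~\ref{lem:mod_schur_orth}, \ref{lem:cross_integral} and \ref{lem:generalized_integral_lem}: the generalized Casimir $\tilde\K_{(\kappa,\kappa')}$, the quantities $(\O,\rho)_{(\kappa,\kappa')}$ and $(i\H)_{(\kappa,\kappa')}$, and the normalization $1/(\dim W_\kappa)^2 = 1/d_{\g_\alpha}^2$ all emerge as you describe.

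Two of your justifications are defective, though neither ultimately invalidates the statement. First, your $\gamma$-independence argument --- ``any two such isomorphisms differ by a nonzero scalar by Schur's lemma'' --- is not automatic here, because the representations in play are \emph{real}: over $\mathbb{R}$, Schur's lemma only gives that the commutant of an irreducible representation is a division algebra ($\mathbb{R}$, $\mathbb{C}$ or $\mathbb{H}$), and the cancellation between the $\gamma$ and $(\gamma^{-1})^{\dagger}$ factors requires commutant $\mathbb{R}$, i.e.\ real type. The paper makes this an explicit hypothesis of Lemma~\ref{lem:mod_schur_orth} (irreducibility of the complexification) and verifies it in the proof of the theorem: every $W_\kappa$ entering the bound is $G$-isomorphic to a simple ideal of a compact Lie algebra, whose adjoint module remains irreducible upon complexification. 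Your sketch needs this observation; as written, the step would fail for complex- or quaternionic-type components. Second, your ``second obstacle'' is a misconception: for $\H \in d\phi(\g)$ the map $[\H,\cdot]$ \emph{does} preserve each $W_\kappa$, since $W_\kappa$ is invariant under conjugation by the connected group $\phi(G)$ and $\ad_{\H}$ is the derivative of that action --- whether $W_\kappa$ is a Lie ideal is irrelevant (it is not even contained in $d\phi(\g)$). The projectors in Definition~\ref{def:corr_kill} exist only so the definition makes sense for arbitrary $\A \in \mathfrak{u}(V)$; the paper states they are unnecessary when $\A \in d\phi(\g)$, and its Lemma~\ref{lem:generalized_integral_lem} explicitly uses this invariance. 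The point is not cosmetic: if the bracket genuinely could leak out of $W_\kappa$, the outer $g^-$ integration would generate additional Schur cross terms pairing the leaked components with every other isomorphic irreducible, the closed forms $(i\H)_{(\kappa,\kappa')}$ would not capture the full second moment, and the inequality as stated would not follow by simply dropping terms (covariance-type contributions carry no definite sign). It is precisely the invariance you deny that closes the computation.
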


While this quantity can seem daunting, the reason for presenting it is to highlight that the decay of the variance with the DLA dimension can still appear for observables outside of the DLA, and that the techniques used to obtain the gradient variance for LASA in the previous subsection actually apply more generally. This quantity also reveals the full extent to which the size of the DLA plays a role in the gradient variance, as the non-ideal complement has no dependence on it.

If one can ensure that $\O_{\g^{c}}$ has no support on $W_{\kappa}$ isomorphic to some ideal, then one has a lower bound on the variance given completely by the ideal component, using the results for LASA.  The proof of Supplementary Theorem~\ref{thm:cov_thm} highlights that the covariance between $d\phi(\g)$ and its complement is
\begin{align}
    \text{Cov}(\g, \g^{c}) = \sum_{\alpha}\sum_{\kappa \cong \alpha} \frac{(\O, \rho)_{(\kappa, \alpha)}(i\H)_{(\kappa, \alpha)}}{d_{\g_{\alpha}}^2}.
\end{align}

Unfortunately, cases where there is support on $W_{\kappa}$ isomorphic to some ideal can easily occur. For example, take a 2-qubit system, and consider the subgroup $\textup{SU}(2)$ acting on the first qubit. Then take as operator $\O = \sigma_{x}\otimes \1 - \sigma_{x}\otimes \sigma_{z}$, which has $i\sigma_{x}\otimes \1 \in d\phi(\g)$ and $i\sigma_{x}\otimes \sigma_{z} \in (d\phi(\g))^c$. If we take the state $|00\>$, since the group acts only on the first qubit, the second qubit gives expectation zero always, so that for any group element $g$
\begin{align}
    \<\O\>_g 
    &= \<00|\U_g \sigma_{x}\otimes(\1-\sigma_{z}) \U_g^\dagger |00\> \\
    &= \<0|\U_g \sigma_{x} \U_g^\dagger |0\>\<0|(\1-\sigma_{z})|0\> = 0.
\end{align}
Since the cost function is zero for all elements of the group, the variance must be zero. Also note that this barren plateau is not caused by the state, since not having the second qubit would give a nonzero variance.

We note that Supplementary Theorem~\ref{thm:cov_thm} can actually be extended to a full expression of the gradient variance for an arbitrary observable. In general, this will consist of contributions from the ideal, ideal complement and non-ideal complement components. However, the contributions from the non-ideal complement can be challenging to interpret and there are a few technical caveats that need to be addressed. We have placed the more general result in Supplementary Note~\ref{sec:var_non_ideal_comp}.

\subsection{Proof of Supplementary Theorem~\ref{thm:cov_thm}}

We start this section by showing that the quantity from Supplementary Definitions~\ref{def:corr_frob} and~\ref{def:corr_kill} are actually well-defined, specifically that they is independent of the choice of $G$-isomorphism $\gamma$. We will do this by looking at a generalized version of these quantities and show how it comes up when computing inner products of matrix coefficients. This is done by using Schur orthogonality for isomorphic yet not equal representations (the following lemma). We were unable to find this result in literature, and thus we have included a proof for completeness.
\begin{lemma}
\label{lem:mod_schur_orth}
Suppose $\phi : G \rightarrow \mathcal{U}(V)$ is  a unitary representation of $G$, and $V_{1}, V_{2}$ are two isomorphic subrepresentations of $V$. Let $\gamma : V_{2} \rightarrow V_{1}$ be an arbitrary $G$-isomorphism. In addition, we assume that either of the following two conditions is satisfied:
\begin{itemize}
    \item if $\psi$ is over $\mathbb{C}$, then $V_1$ and $V_2$ are irreducible,
    \item or if $\psi$ is over $\mathbb{R}$, then $V_1$ and $V_2$ are irreducible and their complexifications are irreducible.
\end{itemize}
Then, in orthonormal bases, the following holds for the matrix coefficients:
\begin{align}
\dim V_1\int_{G}\phi^{(1)}_{j_1, j_2}(g)\overline{\phi^{(2)}_{k_1, k_2}(g)}dg = \langle \gamma v_{k_2}, u_{j_2}\rangle \langle \gamma^{-1}u_{j_1}, v_{k_1}\rangle.
\end{align}
\end{lemma}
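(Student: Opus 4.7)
The plan is to reduce this to a standard Schur's lemma argument by twirling a rank-one operator. Specifically, for any linear map $X : V_2 \to V_1$, I would define
\begin{equation}
T_X := \int_G \phi^{(1)}(g)\, X\, \phi^{(2)}(g^{-1})\, dg,
\end{equation}
and observe that left-invariance of the Haar measure forces $T_X$ to be a $G$-equivariant map from $V_2$ to $V_1$. Under either of the stated hypotheses the space of intertwiners $V_2 \to V_1$ is one-dimensional over the ground field (in the complex case, directly by Schur's lemma for isomorphic irreducibles; in the real case, because irreducibility of the complexifications ensures the real intertwiner space is also one-dimensional, so that no extra quaternionic structure appears). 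Hence $T_X = c(X)\, \gamma$ for some scalar $c(X)$.

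Next I would pin down $c(X)$ by computing $\Tr(\gamma^{-1} T_X)$. Using the intertwining relation $\gamma\,\phi^{(2)}(g) = \phi^{(1)}(g)\,\gamma$ inside the integral, the group element collapses under the trace and one obtains
\begin{equation}
c(X) = \frac{\Tr(\gamma^{-1} X)}{\dim V_1},
\end{equation}
since $\dim V_1 = \dim V_2$. This is the isomorphic analogue of the usual derivation of Schur orthogonality, where the isomorphic case forces $\gamma^{-1}$ to appear in place of an identity.

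Finally, I would specialize $X$ to the rank-one operator $X\cdot = \langle\,\cdot\,, v_{k_2}\rangle\, u_{j_2}$. On the one hand, expanding $\langle T_X v_{k_1}, u_{j_1}\rangle$ using unitarity of $\phi^{(2)}$ (to convert $\phi^{(2)}(g^{-1})$ into $\overline{\phi^{(2)}(g)}$-factors) yields $\int_G \phi^{(1)}_{j_2, j_1}(g)\,\overline{\phi^{(2)}_{k_2, k_1}(g)}\, dg$. On the other hand, $T_X = c(X)\gamma$ gives $c(X)\langle \gamma v_{k_1}, u_{j_1}\rangle$, and a direct computation of the trace yields $\Tr(\gamma^{-1}X) = \langle \gamma^{-1} u_{j_2}, v_{k_2}\rangle$. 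Equating the two expressions and relabeling $j_1 \leftrightarrow j_2$, $k_1 \leftrightarrow k_2$ produces the claimed formula.

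The main obstacle I anticipate is the real case: ensuring that the intertwiner space really is one-dimensional over $\mathbb{R}$ (rather than isomorphic to $\mathbb{C}$ or $\mathbb{H}$), which is exactly what the hypothesis on irreducibility of the complexifications buys us, so this will just require a short justification rather than any serious work. Everything else is bookkeeping with matrix coefficients and conventions.
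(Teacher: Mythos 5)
Your proposal is correct and follows essentially the same route as the paper's proof: twirl a rank-one operator between $V_1$ and $V_2$, invoke Schur's lemma (with the one-dimensionality of the real intertwiner space guaranteed by irreducibility of the complexifications) to conclude the twirl is a scalar multiple of $\gamma$, and fix the scalar by a trace computation using the intertwining relation. The only difference is bookkeeping — you twirl in the direction $V_2 \to V_1$ and normalize via $\Tr(\gamma^{-1}T_X)$, while the paper twirls $V_1 \to V_2$ and composes with $\gamma$ before applying Schur — which has no mathematical significance.
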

\begin{proof}
The proof of the lemma is a modification of the  standard Schur orthogonality proof found in Ref.~\cite[Corollary 4.10]{knapp1996lie}.
Let $\{ u_j \}$ and $\{ v_k \}$ denote orthonormal bases for $V_{1}$ and $V_{2}$, respectively. In addition, $\phi^{(1)}$ and $\phi^{(2)}$ are the corresponding subrepresentations.  We define a linear operator $B_{a,b} : V_{1} \rightarrow V_{2}$ as $B_{a,b}x := \langle x, u_{b}\rangle v_{a}$ for $x \in V_{1}$. We can relate the inner product of matrix coefficients to this operator $B_{a,b}$:
\begin{align}
\int_{G}\phi^{(1)}_{j_1, j_2}(g)\overline{\phi^{(2)}_{k_1, k_2}(g)}dg &= \int_{G} \langle \phi^{(1)}(g)u_{j_1}, u_{j_2}\rangle\langle \phi^{(2)}(g^{-1})v_{k_2},v_{k_1}\rangle dg \\
&= \langle \int_{G} \langle\phi^{(1)}(g)u_{j_1}, u_{j_2}\rangle\cdot\phi^{(2)}(g^{-1})v_{k_2}dg, v_{k_1}\rangle\\
& = \langle \left[\int_{G} \phi^{(2)}(g^{-1})B_{j_2, k_2}\phi^{(1)}(g) dg\right]u_{j_1}, v_{k_1}\rangle\\
& = \langle \mathcal{A}(B_{j_2, k_2})u_{j_1}, v_{k_1}\rangle,
\end{align}

 where for any linear operator $B$, $\mathcal{A}(B)$ is called the twirling operator:
\begin{align}
    \mathcal{A}(B) = \int_{G} \phi^{(2)}(g^{-1})B\phi^{(1)}(g) dg.
\end{align}
The properties of the Haar measure imply that $\mathcal{A}(B)$ is $G$-equivariant for any $B$. Thus the composition of $\mathcal{A}(B)$ and $\gamma$:
\begin{align}
    \gamma \mathcal{A}(B) : V_{1} \rightarrow V_{1}
\end{align}
is $G$-equivariant. 

The hypothesis of the lemma allows us to apply Schur's lemma regardless if the representation complex or real. Specifically, $\gamma \mathcal{A}(B_{j_2, k_2}) = \lambda_{\gamma} \1$, where $\lambda_{\gamma}$ is complex or real depending on which $\psi$ is. In addition, any $G$-isomorphism of $V_{1}$ and $V_{2}$ is a scalar (in the same field that $\psi$ is over) multiple of $\gamma$.

Thus,
\begin{align}
\int_{G}\phi^{(1)}_{j_1, j_2}(g)\overline{\phi^{(2)}_{k_1, k_2}(g)}dg &=  \langle \mathcal{A}(B_{j_2, k_2})u_{j_1}, v_{k_1}\rangle\\
&=\langle \gamma^{-1} \gamma\mathcal{A}(B_{j_2, k_2})u_{j_1}, v_{k_1}\rangle\\
&=\lambda_{\gamma} \langle \gamma^{-1}u_{j_1}, v_{k_1}\rangle.
\end{align}

Let $n:= \dim V_{\psi^{(1)}} = \dim V_{\psi^{(2)}}$. We can use the following to solve for $\lambda_{\gamma}$:
\begin{align}
\lambda_{\gamma} n &= \Tr(\gamma A(B_{j_2, k_2})) \\&= \int_{G}\sum_{j=1}^{n}\langle \gamma \phi^{(2)}(g^{-1})B_{j_2, k_2}\phi^{(1)}(g)u_j, u_j\rangle dg\\
&= \int_{G}\sum_{j=1}^{n}\langle \phi^{(1)}(g^{-1})\gamma B_{j_2, k_2}\phi^{(1)}(g)u_j, u_j\rangle dg\\
&= \int_{G}\sum_{j=1}^{n}\langle \gamma B_{j_2, k_2}\phi^{(1)}(g)u_j, \phi^{(1)}(g)u_j\rangle dg\\
&=\Tr(\gamma B_{j_2, k_2}),
\end{align}

where for $\gamma B_{j_2, k_2}x = \langle x, u_{j_2}\rangle \gamma v_{k_2}$ we have $\Tr(\gamma B_{j_2, k_2}) = \langle \gamma v_{k_2}, u_{j_2}\rangle$. Solving for $\lambda_{\gamma}$ gives:
\begin{align}
\lambda_{\gamma} = \frac{\langle \gamma v_{k_2}, u_{j_2}\rangle}{n}.
\end{align}

We can plug this result back in to obtain an expression for the inner product of matrix coefficients
\begin{align}
\int_{G}\phi^{(1)}_{j_1, j_2}(g)\overline{\phi^{(2)}_{k_1, k_2}(g)}dg = \frac{\langle \gamma v_{k_2}, u_{j_2}\rangle \langle \gamma^{-1}u_{j_1}, v_{k_1}\rangle}{n}.
\end{align}
The $\gamma$-independence of the left-hand side follows from the $\gamma$-independence of  the right-hand side or by noting that any two $G$-equivariant maps will be scalar multiples of each other.
\end{proof}

The following lemma uses the previous result to generalize Lemma~\ref{lemma:main_compact}.

\begin{lemma}
\label{lem:cross_integral}
Suppose $G$ is a compact matrix Lie group with Lie algebra $\g \subseteq \mathfrak{u}(m)$,  where $\mathfrak{u}(m)$ is the Lie algebra of $m \times m$ skew-Hermitian matrices. Suppose  $\psi : G \rightarrow \mathcal{U}(\mathfrak{u}(m))$ corresponds to the real unitary representation of $G$ where it acts via conjugation on $\mathfrak{u}(m)$, and that $\phi$ corresponds to either $\psi$ or its complexification, $\psi_{\mathbb{C}} : G \rightarrow \mathcal{U}(\mathfrak{gl}(m))$. In addition, suppose $i\O \in \mathfrak{u}(m)$ and $V_1$ and $V_2$ are irreducible representations satisfying the hypotheses of Supplementary Lemma~\ref{lem:mod_schur_orth}, and $i\O_1, i\O_2$ orthogonal projections of $\O$ onto $V_1$ and $V_2$, respectively. Then
\begin{align}
\int_{G}\phi^{(1)}(g)i\O_{1}\otimes \phi^{(2)}(g)i\O_{2} dg = \frac{-\Tr(\O_1\gamma \O_2)\tilde{\K}_{(1,2)}}{\dim V_1},
\end{align}

where \begin{align}
\tilde{\K}_{
(1,2)} := \sum_{j, k} u_{j} \otimes v_{k} \langle u_{j}, (\gamma^{-1})^{\dagger}v_{k}\rangle,
\end{align}
for two arbitrary orthonormal bases $\{u_j\}$ and $\{v_k \}$ for $V_1$ and $V_2$ respectively.
\end{lemma}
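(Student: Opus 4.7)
The plan is to expand both operators in the orthonormal bases of their respective irreducible subspaces, reducing the integral to a weighted sum of matrix coefficient integrals, then apply Supplementary Lemma~\ref{lem:mod_schur_orth} to each of those integrals, and finally collect the resulting expression into the two factors $\Tr(\O_1\gamma\O_2)$ and $\tilde{\K}_{(1,2)}$.

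First I would write $i\O_1 = \sum_a \alpha_a u_a$ and $i\O_2 = \sum_b \beta_b v_b$ in the chosen orthonormal bases $\{u_j\}$ of $V_1$ and $\{v_k\}$ of $V_2$. Using the matrix coefficient convention $\phi^{(r)}_{ij}(g) = \langle \phi^{(r)}(g)u_i, u_j\rangle$ from Supplementary Lemma~\ref{lem:mod_schur_orth}, the left-hand side becomes
\begin{align*}
\sum_{a,b,j,k}\alpha_a\beta_b\left(\int_G \phi^{(1)}_{aj}(g)\,\phi^{(2)}_{bk}(g)\,dg\right)\,u_j\otimes v_k.
\end{align*}
Next, I would use the unitarity identity $\phi_{ij}(g^{-1}) = \overline{\phi_{ji}(g)}$ together with the invariance of the Haar measure under $g\mapsto g^{-1}$ to rewrite each integrand in the conjugated form $\phi^{(1)}_{\cdot,\cdot}(g)\overline{\phi^{(2)}_{\cdot,\cdot}(g)}$ required by Supplementary Lemma~\ref{lem:mod_schur_orth}; in the real case ($\phi=\psi$) the matrix coefficients are real and no rewriting is needed.

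Applying Supplementary Lemma~\ref{lem:mod_schur_orth} then replaces each such integral by $\frac{1}{\dim V_1}$ times a product of two $\gamma$-dependent inner products, and the quadruple sum factorizes into two independent double sums. The sum over $a,b$ of $\alpha_a\beta_b$ weighted by $\langle \gamma v_b, u_a\rangle$-type factors reconstructs $-\Tr(\O_1\gamma\O_2)$, once one accounts for the two factors of $i$ in $i\O_1$ and $i\O_2$ together with the identity $\langle X, Y\rangle = -\Tr(XY)$ that governs the Frobenius inner product on skew-Hermitian matrices in $\mathfrak{u}(m)$. The remaining sum over $j,k$ assembles directly into the tensor $\tilde{\K}_{(1,2)} = \sum_{j,k}u_j\otimes v_k\,\langle u_j,(\gamma^{-1})^\dagger v_k\rangle$, using $\langle \gamma^{-1}u_a, v_k\rangle = \langle u_a,(\gamma^{-1})^\dagger v_k\rangle$.

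The main technical hurdle will be keeping all complex-conjugation and sign conventions consistent, especially when treating the real representation $\psi$ and its complexification $\psi_{\mathbb{C}}$ in a unified manner; this is exactly what forces the hypotheses on $V_1$ and $V_2$ that make Supplementary Lemma~\ref{lem:mod_schur_orth} applicable. A useful consistency check is that the $\gamma$-independence of the right-hand side of the identity is automatic from the $\gamma$-independence of the left-hand side, which in turn corroborates the independence statements for $\Tr(\O_1\gamma\O_2)$ and $\tilde{\K}_{(1,2)}$ noted in Supplementary Definitions~\ref{def:corr_frob} and~\ref{def:corr_kill}.
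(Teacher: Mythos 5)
Your proposal matches the paper's proof essentially step for step: expand $i\O_1$ and $i\O_2$ in the orthonormal bases, reduce the left-hand side to a quadruple sum of matrix-coefficient integrals, apply Supplementary Lemma~\ref{lem:mod_schur_orth}, and factor the result into $-\Tr(\O_1\gamma\O_2)$ times $\tilde{\K}_{(1,2)}$. One small caution: the substitution $g\mapsto g^{-1}$ with $\phi_{ij}(g^{-1})=\overline{\phi_{ji}(g)}$ conjugates \emph{both} factors at once and so cannot by itself produce the mixed form $\phi^{(1)}_{\cdot\cdot}(g)\overline{\phi^{(2)}_{\cdot\cdot}(g)}$; the working justification---which the paper also invokes only implicitly by dropping the conjugation---is your fallback observation that the matrix coefficients are real for the bases in question.
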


\begin{proof}
The orthogonal projections of $\O$ can be expressed in the bases defined earlier for $V_1$ and $V_2$ in the previous lemma: $i\O_1 + i\O_2 = \sum_{j} a_ju_j + \sum_{k} b_kv_j$. Let $n= \dim V_1 = \dim V_2$. Thus,
\begin{align}
n\int_{G}\phi^{(1)}(g)\O_{1}\otimes \phi^{(2)}(g)\O_{2} dg =&n\int_{G}\sum_{j}a_j\phi^{(1)}(g)u_j \otimes \sum_{k}b_k\phi^{(2)}(g)v_k dg\\&=n\int_{G}\sum_{j}a_j\sum_{j'}\phi^{(1)}_{j', j}(g)u_{j'} \otimes \sum_{k}b_k\sum_{k'}\phi^{(2)}_{k',k}(g)v_{k'} dg\\
&=n\sum_{j, k, j', k'}a_jb_k\int_{G}\phi^{(1)}_{j', j}(g)\phi^{(2)}_{k',k}(g)dg \cdot u_{j'} \otimes v_{k'} \\
&=n\sum_{j, k, j', k'}a_jb_k \frac{\langle \gamma v_{k}, u_{j}\rangle \langle \gamma^{-1}u_{j'}, v_{k'}\rangle}{n} \cdot u_{j'} \otimes v_{k'}\\
&=\left(\sum_{j, k}a_jb_k \langle u_{j}, \gamma v_{k}\rangle\right) \left(\sum_{j, k} u_{j} \otimes v_{k} \langle u_{j}, (\gamma^{-1})^{\dagger}v_{k}\rangle\right)\\
&=-\Tr(\O_1\gamma \O_2)\left(\sum_{j, k} u_{j} \otimes v_{k} \langle u_{j}, (\gamma^{-1})^{\dagger}v_{k}\rangle\right)\\
&=-\Tr(\O_1\gamma \O_2)\tilde{\K}_{(1, 2)}.
\end{align}

Note that $\tilde{\K}_{(1, 2)}$ must be invariant under the choice of orthonormal bases for $V_1$ and $V_2$ since the left-hand side and $-\Tr(\O_1\gamma \O_2)$ are. 

\end{proof}

The next lemma presents expression for the covariance terms and the variance of the ideal complement.

\begin{lemma}
\label{lem:generalized_integral_lem}
Suppose $G$ is a compact matrix Lie group with Lie algebra $\g \subseteq \mathfrak{u}(m)$,  where $\mathfrak{u}(m)$ is the Lie algebra of $m \times m$ skew-Hermitian matrices. Suppose  $\psi : G \rightarrow \mathcal{U}(\mathfrak{u}(m))$ corresponds to the real unitary representation of $G$ where it acts via conjugation on $\mathfrak{u}(m)$, and that $\phi$ corresponds to either $\psi$ or its complexification, $\psi_{\mathbb{C}} : G \rightarrow \mathcal{U}(\mathfrak{gl}(m))$. In addition, suppose $i\O \in \mathfrak{u}(m)$ and $V_1$ and $V_2$ are irreducible representations satisfying the hypotheses of Supplementary Lemma~\ref{lem:mod_schur_orth}, and $i\O_1, i\O_2$ orthogonal projections of $\O$ onto $V_1$ and $V_2$, respectively. Lastly,  $\forall g \in G, \U_g = \phi(g)$ and $\H$ is arbitrary element of $d\phi(\g)$. Then the following two equalities hold:
\begin{align}
&\iint_{G} \Tr(\U_{g^{-}}i\rho \U_{g^{-}}^{\dagger}[\H, \U_{g^{+}}i\O_1 \U_{g^{+}}^{\dagger}])\Tr(\U_{g^{-}}i\rho \U_{g^{-}}^{\dagger}[\H, \U_{g^{+}}i\O_2\U_{g^{+}}^{\dagger}])dg^+ dg^- = \frac{(\O, \rho)_{(1,2)}(i\H)_{(1, 2)}}{(\dim V_1)^2}\\
&\iint_{G} (\Tr(\U_{g^{-}}i\rho \U_{g^{-}}^{\dagger}[\H, \U_{g^{+}}i\O_1 \U_{g^{+}}^{\dagger}]))^2dg^+ dg^- = \frac{\lVert \O_1\rVert_{\textup{F}}^{2}\lVert \rho_1\rVert_{\textup{F}}^{2}(i\H)_{(1, 1)}}{(\dim V_1)^2},
\end{align}

where $(\O, \rho)_{1,2}$ and $(i\H)_{(1, \cdot)}$ utilize Supplementary Definitions~\ref{def:corr_frob} and~\ref{def:corr_kill} respectively.
\end{lemma}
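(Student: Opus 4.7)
The plan is to mimic the proof of Theorem~\ref{thm:simple_group} step by step, substituting Supplementary Lemma~\ref{lem:cross_integral} for Lemma~\ref{lemma:main} at each group twirl. First, I rewrite the product of two scalar traces as a single trace against $(i\rho)^{\otimes 2}$ of a tensor-product operator depending on $g^{+}$ and $g^{-}$, then evaluate the two Haar integrals sequentially from the inside out.

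For the inner integral over $g^+$, the $\phi(G)$-invariance of $V_1$ and $V_2$ ensures $\U_{g^+}i\O_j\U_{g^+}^{\dagger}\in V_j$, so Supplementary Lemma~\ref{lem:cross_integral} applies directly and gives $\int_G \U_{g^+}i\O_1\U_{g^+}^{\dagger}\otimes \U_{g^+}i\O_2\U_{g^+}^{\dagger}\,dg^+ = -\Tr(\O_1\gamma\O_2)\tilde{\K}_{(1,2)}/\dim V_1$. Expanding the four-term commutator sandwich $[\H,\cdot]\otimes[\H,\cdot]$ and collapsing the result exactly as in the proof of Theorem~\ref{thm:simple_group} reshapes the inner integral into $-\Tr(\O_1\gamma\O_2)/\dim V_1$ multiplied by $\sum_{j,k}\langle u_j,(\gamma^{-1})^{\dagger}v_k\rangle [\H,u_j]\otimes[\H,v_k]$.

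For the outer integral over $g^-$, I use that $\H\in d\phi(\g)$ together with the $\ad(d\phi(\g))$-invariance of each $V_{\alpha}$ to conclude $[\H,u_j]\in V_1$ and $[\H,v_k]\in V_2$; hence the projectors $P_1,P_2$ appearing in Definition~\ref{def:corr_kill} act trivially in this setting. A second application of Supplementary Lemma~\ref{lem:cross_integral} to each summand produces another factor of $\tilde{\K}_{(1,2)}/\dim V_1$, weighted by $-\Tr([\H,u_j]\gamma[\H,v_k])$. Tracing $(i\rho)^{\otimes 2}$ against $\tilde{\K}_{(1,2)}$ and performing the same basis-expansion bookkeeping as in the proof of Supplementary Lemma~\ref{lem:cross_integral} (now applied to $i\rho$ in place of $i\O$) yields $-\Tr(\rho_1(\gamma^{-1})^{\dagger}\rho_2)$. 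Collecting these factors and identifying the summations with Definitions~\ref{def:corr_frob} and~\ref{def:corr_kill} gives precisely $(\O,\rho)_{(1,2)}(i\H)_{(1,2)}/(\dim V_1)^2$. The second identity of the lemma is the specialization $V_1=V_2$ (with $\gamma=\textup{id}$) and $\O_1=\O_2$: then $\tilde{\K}_{(1,1)}$ is the usual split Casimir of $V_1$ and $(\O,\rho)_{(1,1)}=\lVert\O_1\rVert_{\textup{F}}^{2}\lVert\rho_1\rVert_{\textup{F}}^{2}$, so the same computation specializes to the second claimed identity.

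The main obstacle is sign and $\gamma$-dependent bookkeeping. One must check that the four-term commutator expansion telescopes cleanly to $[\H,u_j]\otimes[\H,v_k]$ despite the two factors now living in distinct subspaces with unrelated orthonormal bases, and that the scalar factors of $-1$ and the $\gamma$, $(\gamma^{-1})^{\dagger}$ pairings arising from the two Casimir-type integrals combine into exactly the expressions $(\O,\rho)_{(1,2)}$ and $(i\H)_{(1,2)}$. The $\gamma$-independence of the final expression, inherited from Supplementary Lemma~\ref{lem:cross_integral}, provides a useful consistency check throughout.
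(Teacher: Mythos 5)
Your proposal is correct and follows essentially the same route as the paper's proof: rewrite the product of traces as a single trace of $(i\rho)^{\otimes 2}$ against a twirled tensor operator, apply Supplementary Lemma~\ref{lem:cross_integral} to the inner $g^{+}$ integral to produce $-\Tr(\O_1\gamma\O_2)\tilde{\K}_{(1,2)}/\dim V_1$, expand the commutators as in Theorem~\ref{thm:simple_group}, apply the same lemma again to the $g^{-}$ integral term by term (using, exactly as the paper does, that $[\H,\cdot]$ preserves $V_1$ and $V_2$ so the projectors in Definition~\ref{def:corr_kill} act trivially), and close by tracing against $(i\rho)^{\otimes 2}$ and specializing $V_1=V_2$, $\gamma=\textup{id}$ for the second identity. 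The sign and $\gamma$-bookkeeping you flag resolves just as you anticipate, with the two factors of $-1$ combining into $(\O,\rho)_{(1,2)}$ per Definition~\ref{def:corr_frob}.
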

\begin{proof}

The  first quantity in the emma statement, without taking the trace, is the following integral:
\begin{align}
\label{eqn:one_two_cov}
\text{Moment}_{(1,2)} := \iint_{G} \U_{g^{-}}i\rho \U_{g^{-}}^{\dagger}[\H, \U_{g^{+}}i\O_1 \U_{g^{+}}^{\dagger}] \otimes  \U_{g^{-}}i\rho \U_{g^{-}}^{\dagger}[\H, \U_{g^{+}}i\O_2 \U_{g^{+}}^{\dagger}] dg^+ dg^-.
\end{align}

We can expand the commutator as in the proof of Theorem~\ref{thm:simple_group} of the main text to isolate out an integral of the form:
\begin{align}
\label{eqn:inner_cross_integral}
 \iint_{G} \U_{g^{+}}i\O_1\U_{g^{+}}^{\dagger}\otimes \U_{g^{+}}i\O_2\U_{g^{+}}^{\dagger} dg^+ =  \frac{-\Tr(\O_1\gamma \O_2)\tilde{\K}_{(1,2)}}{\dim W_1},
\end{align}
where we have used Supplementary Lemma~\ref{lem:cross_integral}. If we fix arbitrary orthonormal bases for $W_1$ and $W_2$, $\{ u_j\}$ and $\{v_j\}$ respectively, we can express $\tilde{\K}_{(1,2)}$ as
\begin{align}
\tilde{\K}_{
(1,2)} := \sum_{j, k} u_{j} \otimes v_{k} \langle u_{j}, (\gamma^{-1})^{\dagger}v_{k}\rangle.
\end{align}

If we plug Supplementary Equation~\eqref{eqn:inner_cross_integral} back into Supplementary Equation~\eqref{eqn:one_two_cov} as done in the proof of Theorem~\ref{thm:simple_group}, then we get:
\begin{align}
\text{Moment}_{(1,2)} &= \frac{-\Tr(\O_1\gamma \O_2)}{\dim W_1} \sum_{j, k}\langle u_{j}, (\gamma^{-1})^{\dagger}v_{k}\rangle\int_{G}\U_{g^{-}}[\H, u_j]\U_{g^{-}}^{\dagger}\otimes \U_{g^{-}}[\H, v_k]\U_{g^{-}}^{\dagger} dg^{-}\\&= \frac{-\Tr(\O_1\gamma \O_2)}{(\dim W_1)^2}\left(\sum_{j, k}-\langle u_{j}, (\gamma^{-1})^{\dagger}v_{k}\rangle \Tr(i[\H, u_j]\gamma i[\H, v_k])\right)\left(\sum_{j, k} u_{j} \otimes v_{k} \langle u_{j}, (\gamma^{-1})^{\dagger}v_{k}\rangle\right)\\&=\frac{-\Tr(\O_1\gamma \O_2)(i\H)_{(1, 2)}}{(\dim W_1)^2}\left(\sum_{j, k} u_{j} \otimes v_{k} \langle u_{j}, (\gamma^{-1})^{\dagger}v_{k}\rangle\right)
\end{align}
where we have applied Supplementary Lemma~\ref{lem:cross_integral} again to obtain the second equality. Also, we have utilized Supplementary Definition~\ref{def:corr_kill} and that $[\H, \cdot]$ preserves $W_1$ and $W_2$.

Finally, the overall integral is
\begin{align}
\Tr(i\rho \otimes i \rho \cdot \text{Moment}_{(1,2)}) &= \frac{-\Tr(\O_1\gamma \O_2)(i\H)_{(1, 2)}}{(\dim W_1)^2}\left(\sum_{j, k} \Tr(i\rho u_{j})\Tr(i \rho v_{k}) \langle u_{j}, (\gamma^{-1})^{\dagger}v_{k}\rangle\right)\\&=\frac{\Tr(\O_1\gamma \O_2)\Tr(\rho(\gamma^{-1})^{\dagger}\rho)(i\H)_{(1, 2)}}{(\dim W_1)^2}\\
&=\frac{(\O, \rho)_{(1,2)}(i\H)_{(1, 2)}}{(\dim W_1)^2}.
\end{align}

The second equality in the lemma follows by notating that $\gamma$ becomes the identity map, and $u_j$, $v_k$ are both indexing elements of the same orthonormal basis for $V_1$. The existence of $G$-isomorphic $V_2$ is not needed.

\end{proof}

Now we have all the tools to prove Supplementary Theorem~\ref{thm:cov_thm}.

\begin{proof}[Proof of Supplementary Theorem~\ref{thm:cov_thm}]

Recall that
\begin{align}
  \textup{GradVar}
\geq&\text{Var}_{(g_{+}, g_{-}) \sim \mu_{\alpha}^{\otimes 2}}[\partial\langle \O_{\g} \rangle_{\rho}]
+ \text{Var}_{(g_{+}, g_{-}) \sim \mu^{\otimes 2}}[\partial\langle \O_{\g^c_{\cong}} \rangle_{\rho}] +\nonumber\\
&+ 2\text{Cov}_{(g_{+}, g_{-}) \sim \mu^{\otimes 2}}[\partial\langle \O_{\g}\rangle_{\rho}, \partial\langle \O_{\g^c_{\cong}}\rangle_{\rho}].
\end{align}

We already know the expressions for $\text{Var}_{(g_{+}, g_{-}) \sim \mu_{\alpha}^{\otimes 2}}[\partial\langle \O_{\g} \rangle]$, which follows from Theorem~\ref{thm:compact} of the main text. There are two kinds of terms that will be in $\text{Var}_{(g_{+}, g_{-}) \sim \mu^{\otimes 2}}[\partial\langle \O_{\g^c_{\cong}} \rangle]$. The first kind are just variance terms from the ideal-complement and the second kind are covariances between terms in the ideal-complement that are isomorphic to each other. We can deal with the second class of terms in a similar way to how we deal with $2\text{Cov}_{(g_{+}, g_{-}) \sim \mu^{\otimes 2}}[\partial\langle \O_{\g}\rangle, \partial\langle \O_{\g^c_{\cong}}\rangle]$.

Note that since we assumed $\g$ is a compact Lie algebra, it follows that the second condition of Supplementary Supplementary Lemma~\ref{lem:mod_schur_orth} is satisfied. This is because any irreducible representation isomorphic to a simple ideal of compact real Lie algebra remains irreducible when complexified. Thus all of the previous lemmas apply for the ideal and ideal-complement components.

Thus using Theorem~\ref{thm:compact} of the main text and Supplementary Lemma~\ref{lem:generalized_integral_lem} we get that:
\begin{align}
&\text{Var}_{(g_{+}, g_{-}) \sim \mu_{\alpha}^{\otimes 2}}[\partial\langle \O_{\g} \rangle] = \sum_{\alpha} \frac{\lVert  \H_{\g_{\alpha}} \rVert_{\text{K}}^{2} \lVert  \O_{\g_{\alpha}} \rVert_{\text{F}}^{2} \lVert \rho_{\g_{\alpha}} \rVert_{\text{F}}^{2}}{d_{\g_{\alpha}}^{2}}\\
&\text{Var}_{(g_{+}, g_{-}) \sim \mu^{\otimes 2}}[\partial\langle \O_{\g^c_{\cong}} \rangle] + 2\text{Cov}_{(g_{+}, g_{-}) \sim \mu^{\otimes 2}}[\partial\langle \O_{\g}\rangle, \partial\langle \O_{\g^c_{\cong}}\rangle] \nonumber\\  &=\sum_{\kappa | \kappa \cong \alpha}(\lVert \O_\kappa\rVert_{\textup{F}}^{2}\lVert \rho_\kappa\rVert_{\textup{F}}^{2}(i\H)_{(\kappa, \kappa)} + \sum_{\{\kappa' | \kappa' \cong \kappa\}\cup\{\alpha\}}(\O, \rho)_{(\kappa, \kappa')}(i\H)_{(\kappa, \kappa')} )).
\end{align}

\end{proof} 

\subsection{Variance of Non-ideal Complement}
\label{sec:var_non_ideal_comp}

One may have noticed that the lemmas derived previously were actually general enough that they can allow us to exactly compute the variance of the non-ideal complement.

The non-ideal complement component of the variance corresponds to those irreducible components $W_{\kappa}$ that are not $G$-isomorphic to any simple ideal of $\g$. In this case, it is not necessarily true that complexification of $W_{\kappa}$ remains irreducible. However, if it is not irreducible, it is known that it must decompose into exactly two complex irreducible representations of $G$ \cite{realreps}. This allows us to then apply the techniques of Supplementary Note~\ref{sec:genobservable} to these complex irreps.

We start by complexifying the representation $\psi$, which is done by linearly extending $\psi$ to the complex vector space $\mathfrak{u}(V) + i\mathfrak{u}(V) \cong \mathfrak{gl}(V)$. The extended inner product is defined by: $\A_1 + i\A_2, \textbf{B}_1 + i\textbf{B}_2 \in \mathfrak{u}(V) + i\mathfrak{u}(V): -\Tr((\A_1 + i\A_2)(\textbf{B}_1 - i\textbf{B}_2))$. In addition, the complexified representation remains unitary w.r.t. this inner product. Thus, $\mathfrak{gl}(V)$ breaks into the following sum of orthogonal components:
\begin{align}
   \mathfrak{gl}(V) =  \bigoplus_{\alpha} (d\phi(\g_{\alpha}) +id\phi(\g_{\alpha}))\oplus \left(\bigoplus_{\alpha}\bigoplus_{\kappa\cong \alpha}(W_{\kappa} + iW_{\kappa})\right) \oplus \left(\bigoplus_{\forall \alpha,\kappa\not\cong \alpha}(W_{\kappa} +iW_{\kappa})\right).
\end{align}
The first two sums are the ideal and ideal complement components and remain irreducible when complexified. These were handled in the Supplementary Note~\ref{sec:genobservable}. However, as stated in the previous paragraph, the terms in the last sum, which correspond to the non-ideal complement, will either remain irreducible or split into two complex irreducible components. Thus, we can further break the sum up as follows:

\begin{align}
   &\mathfrak{gl}(V) =  \bigoplus_{\alpha} (d\phi(\g_{\alpha}) +id\phi(\g_{\alpha}))\oplus \left(\bigoplus_{\alpha}\bigoplus_{\kappa\cong \alpha}(W_{\kappa} + iW_{\kappa})\right) \oplus \left(\bigoplus_{\forall \alpha, \kappa\not\cong \alpha \&  \mathbb{C}-\text{irrep}}(W_{\kappa} +i W_{\kappa})\right) \\&\oplus \left(\bigoplus_{\forall \alpha, \kappa\not\cong \alpha \&  \text{not}\mathbb{C}-\text{irrep}}(V_{\kappa} + V^{*}_{\kappa})\right),
\end{align}
where the $V_{\kappa}$ are complex irreps and $^{*}$ denotes the dual representation, which it may or may not be isomorphic to $V_{\kappa}$. Schur orthogonality now applies, and the covariances terms across an $\oplus$ will be annihilated, and since each $V_{\kappa}$ is complex irreducible all of the results of Schur's lemma apply. 

Thus, using that our observable $\O$ satisfies $i\O \in \mathfrak{u}(V)$ the non-ideal variance component will consist of terms of the form:
\begin{align}
\frac{1}{\dim \tilde{W}_{\kappa}}\left(\lVert \O_\kappa\rVert_{\textup{F}}^{2}\lVert \rho_\kappa\rVert_{\textup{F}}^{2}(i\H)_{(\kappa, \kappa)} + \sum_{\kappa' | \kappa' \cong \kappa}(\O, \rho)_{(\kappa, \kappa')}(i\H)_{(\kappa, \kappa')}\right).
\end{align}
where $\tilde{W}_{\kappa}$ may correspond to either a real or complex irrep. Since $\tilde{W}_{\kappa}$ can be a complex vector space, the variance scaling can depend on the dimension of a vector space that contains operators that are not valid quantum observables, i.e. the spaces $W_{\kappa}$, and a novel prediction of our work. This emphasizes that the variance of the non-ideal complement can have no dependence on the dimension of the DLA, yet still, the techniques we utilizes for the LASA setting still work in general for decomposing the variance and identifying the invariant-subspace dimension dependence.

Thus, if $\O$ only has support on the non-ideal complement, all scenarios are possible. If $\g$ has polynomially growing dimension, then the complement is exponential, and potentially there exists an exponentially-large irreducible subspace that would give a BP. Conversely, if $\g$ is exponentially large and its complement is polynomial, it may be that some choices of $\O$ and $\rho$ would avoid a BP.

\section{Details of Numerical Results}
The numerical results displayed in the figures in the main text were obtained  with Qulacs. For each simulation a periodic ansatz using the quantum compound generators was constructed. Specifically, the ansatz is in the brick architecture as described in Ref.~\cite{cherrat2023quantum}, with alternating layers of 2-qubit gates with 1D connectivity. Since the experiments focused on the SU-compound, each gate was composed of a regular FBS gate (generator $h_y$) followed by a generalized FBS gate (generated $h_x$), independently parameterized, in order to express the entire group.

The experiments used $L = 12n$ layers, where $n$ is the number of qubits. The parameters were uniformly sampled from $[0, 4\pi)$. The scaling of $L$ appeared to be sufficient for constructing an approximate $2$-design.
The gradients were estimated via finite difference, and the variance was estimated over $5000$ gradient samples for $n < 18$ and $1000$ for $n\geq 18$, due to time constraints.

Due to a different convention for the Pauli generators in the simulation software used, we rescaled the analytical predictions in the plots by a factor of $1/4$.

\section{Mixing time to t-designs}

In this section, we prove Theorem~\ref{thm:rapid} of the main text regarding the mixing time to $2$-design for LASA. However, the result proved below is slightly more general and Theorem~\ref{thm:rapid} follows when we restrict the DLA to have polynomial dimension and a $2$-design. 

First, we start with a definition. Let, $\{\H_1, \dots, \H_{d_{\g}}\}$ be a basis of skew-Hermitian operators for the Lie algebra $\g$. We define the \emph{minimum stable Killing rank} $r_{\text{K}}$ to be 
\begin{align}
    r_{\text{K}} = \min_k\frac{\lVert \text{ad}_{i\H_k}\rVert_{\text{F}}^2}{\lVert\text{ad}_{i\H_k}\rVert_{\text{op}}^2}.
\end{align}
Note that $1 \leq r_{\text{K}} \leq d_{\g}$. However, we can actually show that $r_{\text{K}} \in \Omega(\sqrt{d_{\g}})$ by investigating  the potential root systems (see Lemma \ref{lem:min_stable_rank}). Corollary \ref{cor:rapid_compound} shows that the quantum compound ansatz saturates this lower bound. In contrast Ref.~\cite{haah2024efficient} showed that Pauli rotations for $\mathfrak{su}(2^n)$ saturate the upper bound.

\begin{theorem}
\label{thm:t_mixing_time}
Consider an orthogonal basis of skew-Hermitian generators $\mathcal{A}:=\{\H_1, \dots, \H_{d_{\g}}\}$ for the DLA with the property that the unitary $e^{-\theta\H_k}$ corresponding to a generator $\H_{k}$ is $t_k$-periodic. Suppose $\H_{k}$ constitute an irreducible representation of the dynamical group $\mathcal{G}$. Consider a LASA formed by applying evolutions $e^{-\theta_k \H_k}$ where $\H_{k}$ is selected uniformly at random from the set $\mathcal{A}$ and the parameter $\theta_{k}$ uniformly from $[0, t_k)$. Then, when $t < \sqrt{d_{g}}/2$, the ansatz is an $\epsilon$-approximate $t$-design for the dynamical group $\mathcal{G}$  after  $\mathcal{O}(\frac{td_{\g}}{r_k}\log(1/\epsilon))$ layers.
\end{theorem}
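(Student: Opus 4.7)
The plan is to follow the general strategy of Haah et al.~\cite{haah2024efficient}: bound the spectral gap of the $t$-th moment operator for the random walk on $\mathcal{G}$ induced by the layer distribution, and then convert the gap to a mixing time bound via the standard diamond-to-operator-norm conversion for approximate $t$-designs. Define the one-layer $t$-fold moment superoperator
\begin{equation}
T^{(t)} = \frac{1}{d_{\g}} \sum_{k=1}^{d_{\g}} \frac{1}{t_k}\int_{0}^{t_k} \bigl(e^{-\theta \H_k}\bigr)^{\otimes t,t}\, d\theta,
\end{equation}
where $(\cdot)^{\otimes t,t}$ denotes the $(t,t)$-fold tensor power acting on $V^{\otimes t}\otimes (V^*)^{\otimes t}$, and let $M^{(t)} = \int_{\mathcal{G}} g^{\otimes t,t}\, dg$ be the Haar moment projector. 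It is a standard fact that if $\lVert T^{(t)} - M^{(t)}\rVert_{\text{op}} \le (1-\Delta)^{L}$ then after $L = \mathcal{O}(\Delta^{-1}(tn + \log(1/\epsilon)))$ layers the ansatz is an $\epsilon$-approximate $t$-design, so it suffices to lower bound the spectral gap $\Delta = 1 - \lVert T^{(t)} - M^{(t)}\rVert_{\text{op}}$.

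The first key step is to evaluate the inner integral. Because $e^{-\theta \H_k}$ is $t_k$-periodic, the averaged superoperator $P_k := \frac{1}{t_k}\int_0^{t_k}(e^{-\theta \H_k})^{\otimes t,t} d\theta$ is the orthogonal projector onto the kernel of the Lie algebra element $\widehat{\H}_k := \sum_{j=1}^{t}(\H_k^{(j)} - \bar{\H}_k^{(j)})$ acting on $V^{\otimes t}\otimes (V^*)^{\otimes t}$, i.e.\ onto the subspace commuting with the one-parameter subgroup generated by $\H_k$. Thus $T^{(t)}$ is a convex combination of orthogonal projectors, $M^{(t)}$ is the common image projector (the joint commutant of all the $\H_k$, which equals the commutant of $\mathcal{G}$ since the $\H_k$ span $\g$ and the group is connected), and $M^{(t)}T^{(t)} = T^{(t)}M^{(t)} = M^{(t)}$. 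Writing $T^{(t)} = M^{(t)} + (I-M^{(t)}) T^{(t)} (I-M^{(t)})$, the gap equals $\Delta = 1 - \lVert (I-M^{(t)}) T^{(t)} (I-M^{(t)})\rVert_{\text{op}}$.

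The second step is the gap estimate. Writing $Q_k = I - P_k$, we have $I - T^{(t)} = \tfrac{1}{d_\g}\sum_k Q_k$, each $Q_k$ being the projector onto the orthogonal complement of the kernel of $\widehat{\H}_k$. On the non-invariant subspace, the smallest nonzero eigenvalue of the positive operator $\widehat{\H}_k^{\dagger}\widehat{\H}_k$ is at least $1/\lVert\text{ad}_{i\H_k}\rVert_{\text{op}}^2$ times $\lVert\widehat{\H}_k v\rVert^2$ on $\text{Im}\,Q_k$, while the Frobenius-weighted sum $\sum_k \widehat{\H}_k^{\dagger}\widehat{\H}_k$ is dominated by the tensorized Casimir, whose strictly positive eigenvalues on $(I-M^{(t)})\mathcal{H}$ are bounded below by a constant times $t \sum_k \lVert\text{ad}_{i\H_k}\rVert_{\text{F}}^2 / d_\g$ as long as $t < \sqrt{d_\g}/2$ (the condition ensuring no accidental cancellations in the $t$-fold tensor power coming from trivial-representation multiplicities). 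Combining with the definition $r_{\text{K}} = \min_k \lVert\text{ad}_{i\H_k}\rVert_{\text{F}}^2/\lVert\text{ad}_{i\H_k}\rVert_{\text{op}}^2$ yields
\begin{equation}
\Delta \;\gtrsim\; \frac{r_{\text{K}}}{t\, d_{\g}},
\end{equation}
and substituting into the mixing-time formula gives the claimed $\mathcal{O}\bigl(\tfrac{t\, d_{\g}}{r_{\text{K}}}\log(1/\epsilon)\bigr)$ layer bound, absorbing the $tn$ additive term since $d_{\g} \ge n$.

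The main obstacle is the second step: the gap estimate on $(I-M^{(t)})\mathcal{H}$. The inequality $\Delta \gtrsim r_{\text{K}}/(t d_{\g})$ is delicate because one must track how the joint commutant of the $\{\H_k\}_{k=1}^{d_{\g}}$ in the $t$-fold tensor representation compares to the invariant subspace cut out by the full group; this is exactly where the irreducibility assumption on the generators and the bound $t < \sqrt{d_\g}/2$ enter, controlling the multiplicity of trivial subrepresentations in the $t$-fold tensor and ensuring that no spurious zero modes of the averaged projector lie outside the image of $M^{(t)}$. The analysis parallels the Kitaev-type local-Hamiltonian gap bounds used in~\cite{haah2024efficient}, but must be carried out in the adjoint picture intrinsic to a general compact DLA rather than for the Pauli basis of $\mathfrak{su}(2^n)$.
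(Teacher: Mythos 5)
Your proposal follows the same architecture as the paper's proof: write the one-layer $t$-th moment operator as a uniform mixture of orthogonal projectors onto the kernels of the generators in the $t$-fold tensor power, pass to the gap $\Delta = 1 - \lVert T - T^{*}\rVert_{\textup{op}}$, bound each kernel projector via $\1 - \textup{Ker}(\psi_m(\H_k)) \succeq -\psi_m(\H_k)^2/\lVert\psi_m(\H_k)\rVert_{\textup{op}}^2$, sum over $k$ to assemble a quadratic Casimir, and apply Schur's lemma on each irreducible component of the $t$-fold adjoint power. However, there is a genuine gap at exactly the point you flag as ``the main obstacle'': the lower bound on the minimal Casimir eigenvalue $c_{\psi_m}$ over the non-trivial irreps $\psi_m$ appearing in $\textup{ad}^{\otimes t}$ is \emph{asserted} (``bounded below by a constant times $t\sum_k \lVert \textup{ad}_{i\H_k}\rVert_{\textup{F}}^2/d_\g$'') rather than proven, and this assertion is the entire technical core of the theorem. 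The paper's proof supplies it by reducing without loss of generality to a compact simple ideal, classifying the possible root systems ($A_n$, $B_n$, $C_n$, $D_n$), characterizing which dominant integral highest weights can occur in $\textup{ad}^{\otimes t}$ (integer coordinates, ordering constraints, coordinate sum at most $2t$, expressible as a sum of $t$ roots), explicitly minimizing $\langle\lambda,\lambda\rangle + \langle\lambda,2\delta\rangle$ with the Weyl vector $\delta$ for each system, and normalizing by $I_{\textup{Ad}} \in \Theta(\sqrt{d_\g})$ to conclude $c_{\psi_m}/t^2 \in \Omega(1/t)$. Without some version of this weight-lattice computation, the inequality $\Delta \gtrsim r_{\textup{K}}/(t\,d_\g)$ does not follow from what you have written.

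Two subsidiary points are also off. First, your stated mechanism for the hypothesis $t < \sqrt{d_\g}/2$ --- controlling ``trivial-representation multiplicities'' and ``spurious zero modes'' --- is not where the condition actually enters: trivial components are harmless because $T$ and $T^{*}$ agree on them; the condition $t \lesssim \sqrt{d_\g} \approx \textup{rank}$ is needed so that the Casimir-minimizing highest weight (e.g.\ $\sum_{j=1}^{t}(e_j - e_{n+2-j})$ for $A_n$) fits in the rank-$n$ weight space and so that the minimum, which scales like $t(n-t)/n$ up to normalization, does not degrade. Second, you bound kernels of $\widehat{\H}_k$ acting on $V^{\otimes t}\otimes(V^{*})^{\otimes t}$ but then invoke $\lVert\textup{ad}_{i\H_k}\rVert_{\textup{op}}$; the missing link, supplied in the paper, is that the LASA assumption lets one discard the trivial component of $-\H_k\otimes\1 + \1\otimes\bar{\H}_k$ and work with $\textup{ad}_{\H_k}^{\otimes t}$, after which $\lVert\psi_m(\H_k)\rVert_{\textup{op}} \le t\,\lVert\textup{ad}_{\H_k}\rVert_{\textup{op}}$ because each $\psi_m$ is a subrepresentation of the $t$-fold adjoint power; the irreducibility hypothesis is likewise needed to guarantee that the $\psi_m(\H_k)$ inherit orthogonality, which is what makes the summed quadratic form the Casimir in the first place.
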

\begin{proof}
The proof roughly follows the approach of Ref.~\cite{haah2024efficient} with some generalizations made to handle arbitrary LASA.

Suppose the dynamical group is $\mathcal{G}$ and is an irreducible representation, $\phi$, of some compact, connected group $G$. The irreducibility assumption can be satisfied by restricting to the projection of the initial state onto an irreducible invariant subspace of the $n$-qubit Hilbert space.

We can without loss of generality assume that the  Lie algebra $\g$ is also simple. This is because mixing on each simple ideal will correspond to mixing over the whole group. Then, we can consider the minimal spectral gap across all ideals.

Consider the random walk that at each step uniformly selects an element from $\{\H_k\}_{k=1}^{d_{\g}}$ and an angle from $\theta \in [0, t_k]$ and applies $e^{-\theta \H_k}$. The $t$-th moment  corresponding to the walk is:
\begin{align}
    T:=\sum_{k=1}^{d_{\g}}\frac{1}{d_{\g}}\int_{[0, t_k]} (e^{-\theta_k \H_k})^{\otimes t}\otimes (e^{\theta_k \bar{\H}_k})^{\otimes t}d\theta_k,
\end{align}

and let 
\begin{align}
    T^{*} := \int_{G} (\U_{g})^{\otimes t} \otimes (\bar{\U}_{g})^{\otimes t}dg
\end{align}
be the $t$-th Haar moment. The spectral gap \cite{Harrow_2009, Brand_o_2016} of the walk is 
\begin{align}
   \Delta:= 1 - \lVert T - T^{*} \rVert_{\text{op}},
\end{align} 
where $\lVert \cdot \rVert_{\text{op}}$ denotes the operator norm.
The gap quantifies the complexity of forming an approximate $t$-design as we can exponentially suppress the error with additional steps.

The integral
\begin{align}
\int_{[0, t_k]} (e^{-\theta_k \H_k})^{\otimes t}\otimes (e^{\theta_k \bar{\H}_k})^{\otimes t}d\theta_k
\end{align}
is an orthogonal projection onto commutant of $\{ (e^{-\theta_k\H_{k}})^{\otimes t}, \theta_k \in [0, t_k]\}$ or equivalently the kernel  of \begin{align}
    \sum_{r=1}^{t}\1^{\otimes(r-1)}\otimes (-\H_k  \otimes \1 + \1 \otimes \bar{\H}) \otimes \1^{\otimes(t-r)}.
\end{align}
We will denote the orthogonal projector onto the kernel of $\A$ by $\text{Ker}(\A)$. In addition due the assumption of LASA, $(-\H_k  \otimes \1 + \1 \otimes \bar{\H})$ only acts on tensors corresponding to vectorized forms of skew-Hermitian matrices, and we can thus ignore the trivial representation component. Under this assumption, we can equivalently consider the kernel of 
\begin{align}
\text{ad}_{\H_k}^{\otimes t} := \sum_{r=1}^{t}\1^{\otimes(r-1)}\otimes \text{ad}_{\H_{\text{k}}} \otimes \1^{\otimes(t-r)},
\end{align}
which corresponds to the $t$-th tensor power of the adjoint.

Since the representation $\text{ad}^{\otimes t}$ is completely reducible, and thus decomposes into irreps $\psi_m$,  the integral must respect this decomposition:
\begin{align}
\int_{[0, t_k]} (e^{-\theta_k \H_k})^{\otimes 2}\otimes (e^{-\theta_k \H_k^{*}})^{\otimes 2}d\theta_k &= \text{Ker}(\text{ad}_{\H_k}^{\otimes t})\\
& = \bigoplus_{m}\text{Ker}(\psi_m(\H_k)).
\end{align}

If we consider the max over all $m$ that don't correspond to the trivial rep, since $T^{*}$ and $T$ agree on the trivial components, we have
\begin{align}
\lVert T - T^{*} \rVert_{\text{op}} &= \lVert \frac{1}{d_{\g}}\sum_{k=1}^{d_{\g}} \bigoplus_{m}\text{Ker}(\psi_m(\H_k))\rVert_{\text{op}} \\ &\leq  \max_{m} \frac{1}{d_{\g}}\lVert\sum_{k=1}^{d_{\g}} \text{Ker}(\psi_{m} (\H_k)) \rVert_{\text{op}}.
\end{align} 
 Note that the first inequality follows because including more projectors only increases the operator norm. One can verify that
\begin{align}
\text{Ker}(\psi_{m}(\H_k)) \prec  \1 - \frac{\psi_m(\H_k)^2}{\lVert\psi_m(\H_k)\rVert_{\text{op}}^2},
\end{align}
which gives:

\begin{align}
 d_{\g}\1 - \sum_{k=1}^{d_{\g}} \text{Ker}(\psi_{m} (\H_k)) &\succ  \sum_{k=1}^{d_{\g}}\frac{\psi_m(\H_k)^2}{\lVert\psi_m(\H_k)\rVert_{\text{op}}^2} \\&= \sum_{k=1}^{d_{\g}}
\frac{\lVert \H_k\rVert_{\text{K}}^2}{\lVert\psi_m(\H_k)\rVert_{\text{op}}^2}\frac{\psi_m(\H_k)^2}{\lVert \H_k\rVert_{\text{K}}^2} \\
&\succ \left(\min_{k}\frac{\lVert \H_k \rVert_{\text{K}}^2}{\lVert\psi_m(i\H_k)\rVert_{\text{op}}^2}\right)\sum_{k=1}^{d_{\g}}
\frac{-\psi_m(\H_k)^2}{\lVert \H_k \rVert_{\text{K}}^2} \\
&= \left(\min_{k}\frac{\lVert \H_k\rVert_{\text{K}}^2}{\lVert\psi_m(i\H_k)\rVert_{\text{op}}^2}\right)C_{\psi_m}\\
&\succ \left(\min_{k}\frac{\lVert \H_k \rVert_{\text{K}}^2}{\lVert\psi_m(i\H_k)\rVert_{\text{op}}^2}\right)c_{\psi_m}\1\\
\end{align}
where $C_{\psi_m}$ is the quadratic Casimir in the representation $\psi_m$. Due to $\g$ being simple and $\phi$ irreducible, the $\psi_m(\H_{k})$ are orthogonal if the $\H_{k}$ are.
Since $\psi_m$ is irreducible Schur's lemma gives that $C_{\psi_m} = c_{\psi_m}\1$. %
Thus,
\begin{align}
\Delta &:=  1 - \lVert T - T^{*} \rVert_{\text{op}} \\ &\geq  1 - \max_{m} \frac{1}{d_{\g}}\lVert\sum_{k=1}^{d_{\g}} \text{Ker}(\psi_{m} (i\H_k)) \rVert_{\text{op}}\\ & = \min_{m} \frac{1}{d_{\g}}\lVert d_{\g}\1 -\sum_{k=1}^{d_{\g}} \text{Ker}(\psi_{m} (\H_k)) \rVert_{\text{op}}\\
&=\min_{m | m~\text{non-trivial}}\left(\min_k\frac{\lVert \H_k\rVert_{\text{K}}^2}{\lVert\psi_m(i\H_k)\rVert_{\text{op}}^2}\right)\frac{c_{\psi_m}}{d_{\g}}\\
&\geq\min_{m | m~\text{non-trivial}}\left(\min_k\frac{\lVert \H_k\rVert_{\text{K}}^2}{t^2\lVert\text{ad}_{i\H_k}\rVert_{\text{op}}^2}\right)\frac{c_{\psi_m}}{d_{\g}}.
\end{align}

Since $\g$ is compact simple, it must be isomorphic to either $\mathfrak{su}(m)$, $\mathfrak{so}(m)$, or $\mathfrak{sp}(m)$, and corresponding comlexified algebras are $\mathfrak{sl}(\mathbb{C})$, $\mathfrak{so}(\mathbb{C})$ or $\mathfrak{sp}(\mathbb{C})$. For analyzing the eigenvalue of the Casimir we need to look at the root system for the complexified algebra.

For a highest weight $\lambda$, the eigenvalue of the Casimir is \begin{align}
    \langle \lambda, \lambda \rangle + \langle \lambda, 2\delta \rangle,
\end{align}
where $\delta$ is the Weyl vector for the given root system. In addition, the inner product is w.r.t. the standard Euclidean inner product, and thus to account for our chosen normalization of the $\H_{k}$ w.r.t. the Killing form, we need to divide by $I_{\text{Ad}} \in\Theta(\sqrt{d_{\g}})$. So really we have:
\begin{align}
    c_{\psi_m} \in \Theta\left(\frac{\langle \lambda_{\psi_m}, \lambda_{\psi_m} \rangle + \langle \lambda_{\psi_m}, 2\delta \rangle}{\sqrt{d_{\g}}}\right), 
\end{align}
for irrep $\psi_m$.  Note that below $n \in \Theta(\sqrt{d_{\g}})$.

Following Ref.~\cite{hall2013lie}, the four possible root systems $A_n = \mathfrak{sl}(n+1)$, $B_n = \mathfrak{so}(2n, \mathbb{C})$, $C_n = \mathfrak{sp}(2n, \mathbb{C})$, and $D_n = \mathfrak{so}(2n+1, \mathbb{C})$  are:
\begin{align}
&A_n : e_j - e_{k}, j \neq k\\
&B_n : \pm e_j \pm e_k, j\neq k \cup \pm e_j\\
&C_n : \pm e_j \pm e_k, j\neq k \cup \pm 2e_j\\
&D_n :  \pm e_j \pm e_k, j\neq k.   
\end{align}

Each has as a basis:
\begin{align}
\label{eqn:base_a_n}
&A_n : e_j - e_{j+1}, j=1, \dots, n\\
\label{eqn:base_b_n}
&B_n : e_j - e_{j+1}, j=1, \dots, n-1 \cup e_n\\
\label{eqn:base_c_n}
&C_n : e_j - e_{j+1}, j=1, \dots, n-1 \cup 2e_n\\
\label{eqn:base_d_n}
&D_n : e_j - e_{j+1}, j=1, \dots, n-1 \cup e_{n-1}+e_n,
\end{align}
where the positive roots can be expressed as positive linear combinations of the above basis elements. Thus the positive roots for each system are:
\begin{align}
\label{eqn:a_n_roots}
&A_n : \{e_j - e_{k}~|~j,k \in \{1, \dots, n+1\}~\&~j < k \}\\
\label{eqn:b_n_roots}
&B_n : \{e_j \pm e_{k}, e_j~|~j,k \in \{1, \dots, n\}~\&~j < k\}\\
\label{eqn:c_n_roots}
&C_n : \{e_j \pm e_{k}, 2e_j~|~j,k \in \{1, \dots, n\}~\&~j < k\}\\
\label{eqn:d_n_roots}
&D_n : \{e_j \pm e_{k}~|~j,k \in \{1, \dots, n\}~\&~j < k\}.
\end{align}

The Weyl vector is one-half the sum of the positive roots leading to
\begin{align}
&2\delta_{j}^{(A)} = n-2(j-1)\\
&2\delta_{j}^{(B)} = 2n-2j+1\\
&2\delta_{j}^{(C)} = 2n-2j+2\\
&2\delta_{j}^{(D)} = 2n-2j.
\end{align}

A necessary condition  a highest weight $\lambda$ that is common across all four systems is that the inner product between $\lambda$ and each base root (coroot) must be $\geq 0$ and integer. In other words, it is a dominant integral element of the lattice. We now look at the possible highest weight vectors of the reducible representation $\text{ad}^{\otimes t}$ for each of the root systems to lower bound the eigenvalue of  the Casimir. 

For $A_n$ such vectors $\lambda$ in $\mathbb{R}^{n+1}$ in basis $\{e_j\}_{j=1}^{n+1}$ must satisfy \begin{align}
& \lambda_j \in \mathbb{Z}\\
&\lambda_{j} \geq \lambda_{j+1}\\
&\sum_{j} \lvert \lambda_{j} \vert = 2t,
\end{align}
and be a linear combination of $t$ roots.
Thus one can verify that the minimizer of the Casimir eigenvalue under such constraints is $\sum_{j=1}^{t}e_j - e_{n+2-j}$. This leads to 
\begin{align}
    c_{\psi_m} \in \Omega\left(\frac{2t(n - t + 2)}{n}\right).
\end{align}
The reproduces the result of Ref.\cite{haah2024efficient}.

For $B_n$ in the basis $\{e_j\}_{j=1}^{n}$ we have that a highest  weight in $\mathbb{R}^n$ must satisfy 
\begin{align}
& \lambda_j \in \mathbb{Z}\\
&\lambda_{j} \geq \lambda_{j+1} \geq 0\\
&\sum_{j} \lvert \lambda_{j} \vert \leq 2t, 
\end{align}
and be a linear combination of $t$ roots. This gives a minimizer of $\sum_{j=1}^{t} e_{j}$, leading to
\begin{align}
       c_{\psi_m} \in \Omega\left(\frac{t(2n - t +2)}{n}\right),
\end{align}
which also applies for $C_n$.

Lastly, for $D_n$ we have the following constraints for a highest weight in $\mathbb{R}^n$ we again have
\begin{align}
& \lambda_j \in \mathbb{Z}\\
&\lambda_{j} \geq \lambda_{j+1} \geq 0\\
&\sum_{j} \lvert \lambda_{j} \vert \leq 2t, 
\end{align}
and be a linear combination of $t$ roots. This leads to a minimizer of $\sum_{j=1}^{t}(e_j + e_{j+1})$. However, we can actually just use $\sum_{j=1}^{t}e_j$ as a lower bound leading to
\begin{align}
       c_{\psi_m} \in \Omega\left(\frac{t(2n -t)}{n}\right). 
\end{align}

The conclusion is that in all cases, we see that for $t \leq n$:
\begin{align}
    \frac{c_{\psi_m}}{t^2} \in \Omega(1/t),
\end{align}

so we can conclude that
\begin{align}
    \Delta \in \Omega\left(\min_k\frac{\lVert \H_k\rVert_{\text{K}}^2}{\lVert\text{ad}_{i\H_k}\rVert_{\text{op}}^2 d_{\g}t} \right) = \Omega\left(\min_k\frac{\lVert \text{ad}_{i\H_k}\rVert_{\text{F}}^2}{\lVert\text{ad}_{i\H_k}\rVert_{\text{op}}^2 d_{\g}t} \right) = \Omega\left(\frac{r_{\textup{K}}}{d_{\g}t}\right),
\end{align}
where $r_{\text{K}} := \min_k\frac{\lVert \text{ad}_{i\H_k}\rVert_{\text{F}}^2}{\lVert\text{ad}_{i\H_k}\rVert_{\text{op}}^2}$ is the minimum stable rank of the adjoint representation w.r.t. the chosen DLA basis.
\end{proof}

The result of Ref.~\cite{haah2024efficient} for $\mathfrak{su}(2^n)$ and the basis of Pauli operators follows from noting that under these conditions $r_{\text{K}} \in \Theta(d_{\g})$. Thus the mixing can be efficient for exponential DLAs. 

\begin{lemma}
\label{lem:min_stable_rank}
    For any compact simple Lie algebra, $r_{\text{K}} \in \Omega(\sqrt{d_{\g}})$.
\end{lemma}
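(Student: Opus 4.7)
The plan is to prove the stronger uniform bound that $\|\ad_{\H}\|_{\textup{F}}^{2}/\|\ad_{\H}\|_{\textup{op}}^{2} \in \Omega(\sqrt{d_{\g}})$ for \emph{every} nonzero $\H \in \g$. The lemma then follows by taking the minimum over basis elements, noting that $\ad_{i\H_{k}} = i\,\ad_{\H_{k}}$ shares both Frobenius and operator norms with $\ad_{\H_{k}}$, and that the ratio is scale-invariant.

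The first step would be to reduce to the Cartan. Since $\g$ is compact and connected, every element is conjugate under $\Ad(G)$ to an element of a fixed maximal torus Lie algebra $\mathfrak{t}$, and both norms are conjugation-invariant, so we may assume $\H \in \mathfrak{t}$. Writing $K = -i\H \in i\mathfrak{t}$, the Cartan--Weyl decomposition of $\g^{\mathbb{C}}$ diagonalizes $\ad_{\H}$ with eigenvalues $\{i\alpha(K)\}_{\alpha \in \Phi}$, so
\begin{equation*}
\|\ad_{\H}\|_{\textup{F}}^{2} = \sum_{\alpha \in \Phi}\alpha(K)^{2} = \kappa(K, K), \qquad \|\ad_{\H}\|_{\textup{op}}^{2} = \max_{\alpha \in \Phi}\alpha(K)^{2} \leq \kappa(K, K)\,\max_{\alpha}|\alpha|_{\kappa}^{2},
\end{equation*}
where the last inequality is Cauchy--Schwarz in the positive-definite inner product induced by $\kappa$ on $i\mathfrak{t}$ and its dual. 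The ratio is therefore bounded below by $1/\max_{\alpha}|\alpha|_{\kappa}^{2}$, reducing the task to upper-bounding the maximum root length under the Killing-dual metric.

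The second step would be to show $\max_{\alpha}|\alpha|_{\kappa}^{2} \in O(1/\sqrt{d_{\g}})$. Dualizing the identity $\kappa(K, K') = \sum_{\alpha}\alpha(K)\alpha(K')$ yields $\sum_{\alpha \in \Phi}|\alpha|_{\kappa}^{2} = r$, where $r = \textup{rank}\,\g$. Invoking the classification of irreducible root systems (at most two distinct root lengths, with ratio at most $\sqrt{3}$ attained by $G_{2}$), every root satisfies $|\alpha|_{\kappa}^{2} \geq \frac{1}{3}\max_{\beta}|\beta|_{\kappa}^{2}$, so $\max_{\alpha}|\alpha|_{\kappa}^{2} \leq 3r/|\Phi|$. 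Combined with $|\Phi| = d_{\g} - r \in \Theta(d_{\g})$ and $r \leq \sqrt{d_{\g}}$ (a type-by-type check using $d_{\g} = r + |\Phi|$ together with $|\Phi| \geq r(r-1)$ for each classical type, and bounded rank for the exceptional types), this gives
\begin{equation*}
\frac{\|\ad_{\H}\|_{\textup{F}}^{2}}{\|\ad_{\H}\|_{\textup{op}}^{2}} \geq \frac{|\Phi|}{3r} \in \Omega(\sqrt{d_{\g}}).
\end{equation*}

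The main obstacle is the reliance on the root-system classification to bound the long-to-short root length ratio by $\sqrt{3}$; a classification-free argument would need to exploit Weyl-group transitivity on roots of a given length, together with the fact that an irreducible root system contains at most two Weyl orbits of roots. The reduction to the Cartan and the Killing-form identity $\sum_{\alpha}|\alpha|_{\kappa}^{2} = r$ are otherwise completely standard.
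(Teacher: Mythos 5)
Your proof is correct, and it takes a genuinely different route from the paper's. Both arguments open the same way: conjugation invariance of the Frobenius and operator norms reduces the problem to the Cartan subalgebra, where $\ad_{\H}$ is diagonalized by the root-space decomposition with eigenvalues $i\alpha(K)$. After that they diverge. The paper's proof expands $i\H_j$ in a basis of simple roots of the \emph{explicit} classical root systems $A_n, B_n, C_n, D_n$ and bounds the operator norm via the coordinate fact $\lvert\langle\alpha,\beta\rangle\rvert\le 2$ together with the count $\sum_{\alpha\in\mathcal{R}}\langle\alpha_k,\alpha\rangle^2\in\Omega(n)$; it tacitly assumes a classical type (the exceptional algebras, being finitely many, only affect constants, but the paper does not say so), and its expansion step separates the coefficients $c_k^2$ across the non-orthogonal simple roots, silently dropping cross terms $c_kc_{k'}\langle\alpha_k,\alpha\rangle\langle\alpha_{k'},\alpha\rangle$. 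You instead work invariantly: the identities $\|\ad_{\H}\|_{\textup{F}}^2=\kappa(K,K)=\sum_{\alpha}\alpha(K)^2$ and the dual trace formula $\sum_{\alpha\in\Phi}\lvert\alpha\rvert_\kappa^2=r$, combined with Cauchy--Schwarz, reduce everything to the single soft consequence of the classification that squared root lengths in an irreducible system differ by a factor of at most $3$, after which $\max_\alpha\lvert\alpha\rvert_\kappa^2\le 3r/\lvert\Phi\rvert$, $\lvert\Phi\rvert=d_\g-r$, and $r\le\sqrt{d_\g}$ (from $d_\g=r+\lvert\Phi\rvert\ge r^2$) finish the job. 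This buys you three things the paper's version lacks: (i) a stronger, basis-free statement --- your bound holds for \emph{every} nonzero $\H\in\g$, so $r_{\text{K}}\in\Omega(\sqrt{d_\g})$ follows for any choice of orthogonal basis, not just the one fixed in the lemma; (ii) explicit coverage of the exceptional types via their bounded rank; and (iii) immunity to the normalization and cross-term issues in the paper's coordinate expansion, since your Killing-dual bookkeeping never decomposes $\H$ into simple roots. The one respect in which neither proof is self-contained is the appeal to the root-system classification, but you use it only through the two-root-length fact, which is a strictly lighter dependence than the paper's explicit root coordinates.
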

\begin{proof}
The value of $r_{\text{K}}$ only depends on elements from the Cartan subalgebra. In addition, for compact simple Lie algebra, all elements are conjugate to an element from a Cartan subalgebra and all Cartan subalgebra are conjugate. 

We can express elements from  the Cartan subalgebra using a basis for the root systems (i.e. \eqref{eqn:a_n_roots}--\eqref{eqn:d_n_roots}). Specifically let $\{\alpha_k\}_{k=1}^{n}$ denote a basis for one of the four root systems $A_n$, $B_n$, $C_n$, or $D_n$. Let $\mathcal{R}$ denote the total set of roots, and for any two roots $\alpha, \beta$, $\lvert \langle \alpha, \beta \rangle \rvert \leq 2$ (see \eqref{eqn:a_n_roots}-- \eqref{eqn:d_n_roots}), where $\langle \cdot, \cdot \rangle$ is the standard Euclidean inner product on the root system. Note that 
\begin{align}
r_{\text{K}}^{(j)} = \frac{\lVert \text{ad}_{i\H_j} \rVert_{\text{F}}^2}{\lVert  \text{ad}_{i\H_j} \rVert_{\text{op}}} = \frac{\sum_{\alpha \in \mathcal{R}} \langle \alpha, i\H_{j} \rangle^2}{\max_{\alpha \in \mathcal{R}} \langle \alpha, i\H_{j}\rangle^2}.
\end{align}

We can express $\H_j$ in terms of the $\alpha_k$, $i\H_{j} = \sum_{k=1}^{n} c_k\alpha_k$ %
Note that we slightly abuse notation since $\H_{j}$ here is really the preimage of $\H_j$  under the  representation $\phi$. However, the Lie algebra being simple implies that the adjoint representations are isomorphic if $\phi$ is not trivial. %
We also have 
\begin{align}
\sum_{\alpha \in \mathcal{R}}\langle \alpha_k, \alpha \rangle^2 \in \Omega(n) \in \Theta(\sqrt{d_{\g}}),
\end{align}
as one can check using \eqref{eqn:a_n_roots}-\eqref{eqn:d_n_roots} and \eqref{eqn:base_a_n}--\eqref{eqn:base_d_n}. Thus, we get
\begin{align}
r_{\text{K}}^{(j)} = \frac{\sum_{k=1}^{n}c_k^2\sum_{\alpha \in \mathcal{R}}\langle \alpha_{k}, \alpha \rangle^2}{\max_{\alpha \in \mathcal{R}}\sum_{k=1}^{n}c_k^2\langle \alpha_{k}, \alpha\rangle^2} \geq  \frac{\sum_{k=1}^{n}c_k^2\sum_{\alpha \in \mathcal{R}}\langle \alpha_k, \alpha \rangle^2}{\sum_{k=1}^{n}4c_k^2} \in \Omega(\sqrt{d_{\g}}),
\end{align}
so $r_{\text{K}} \in \Omega(\sqrt{d_{\g}})$.
\end{proof}

The orthogonality assumption plays a key role in lower bounding the gap using the quadratic Casimir, and seems like an intuitive requirement for faster mixing. However, it is unclear if the full basis assumption can be relaxed and replaced with a dense subset. Numerical evidence appears to showing fast mixing when this condition is relaxed, for example QAOA-like ansatz \cite{Larocca2022diagnosingbarren}.
Still, for sufficiently large $m$, the Baker-Campbell-Hausdorff formula gives that for any $\H_{j}, \H_{k}$:
\begin{align}
\left(e^{-\sqrt{t/m}\H_{j}}e^{-\sqrt{t/m}\H_{k}}e^{\sqrt{t/m} \H_{j}}e^{\sqrt{t/m}\H_{k}}\right)^{m} = e^{-t[ \H_j, \H_k] + \mathcal{O}(t^{3/2}/m^{1/2})},
\end{align}
which can approximate nested commutator for large enough $m$. This is by no means showing that the full basis assumption can be relaxed for faster mixing. However, it shows that a subset of the generators can be used to approximately sample from their nested commutators, and the density of a periodic ansatz within its dynamical Lie group. The mixing of dense subgroups (i.e. nonzero spectral gap) seems to require further conditions \cite{bourgain2011spectral, benoist2016spectral}.

We now present two simple corollaries of Theorem \ref{thm:t_mixing_time} that appeared as theorems in the main text.

\begin{corollary}[Main Text Theorem \ref{thm:rapid}]
\label{cor:rapid}
Consider an orthogonal basis of skew-Hermitian generators $\mathcal{A}:=\{\H_1, \dots, \H_{d_{\g}}\}$ for the DLA with the property that the unitary $e^{-\theta\H_k}$ corresponding to a generator $\H_{k}$ is $t_k$-periodic. In addition, suppose that $d_{\g} = \mathcal{O}(\textup{poly}(n))$. Consider a LASA formed by applying evolutions $e^{-\theta_k \H_k}$ where $\H_{k}$ is selected uniformly at random from the set $\mathcal{A}$ and the parameter $\theta_{k}$ uniformly from $[0, t_k)$. Then, the ansatz is an $\epsilon$-approximate $2$-design for the dynamical group $\mathcal{G}$  after  $\mathcal{O}(\textup{poly}(n)\log(1/\epsilon))$ layers.
\end{corollary}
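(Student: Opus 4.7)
The plan is to derive the corollary as a direct specialization of Theorem \ref{thm:t_mixing_time} combined with the lower bound on the minimum stable Killing rank from Lemma \ref{lem:min_stable_rank}. Setting $t = 2$ in Theorem \ref{thm:t_mixing_time}, the random walk produces an $\epsilon$-approximate $2$-design in $\mathcal{O}\!\left(\tfrac{d_{\g}}{r_{\text{K}}}\log(1/\epsilon)\right)$ layers, provided the hypotheses of that theorem are satisfied (in particular, that $\H_k$ is an irreducible representation of $\mathcal{G}$ and that $\g$ is simple). Once these hypotheses are in hand, the corollary is almost immediate: Lemma \ref{lem:min_stable_rank} gives $r_{\text{K}} \in \Omega(\sqrt{d_{\g}})$, and so the layer count reduces to $\mathcal{O}(\sqrt{d_{\g}}\log(1/\epsilon))$; invoking $d_{\g} \in \mathcal{O}(\textup{poly}(n))$ yields $\mathcal{O}(\textup{poly}(n)\log(1/\epsilon))$.

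First I would dispose of the reductive case by the standard ideal decomposition. Since $\g$ is compact, it splits as $\g = \mathfrak{c}\oplus\bigoplus_\alpha \g_\alpha$ with each $\g_\alpha$ compact simple, and the corresponding dynamical group factorizes up to the (finite-dimensional) center. The Schur orthogonality argument used in the proof of Theorem \ref{thm:t_mixing_time} applied to $\text{ad}^{\otimes 2}$ block-diagonalizes the moment operator along these ideals, so the overall spectral gap is bounded below by the minimum of the spectral gaps on each $\g_\alpha$. Because the center evolves trivially under the adjoint action, only the simple components contribute non-trivially to the second moment relevant for a $2$-design. Applying Theorem \ref{thm:t_mixing_time} and Lemma \ref{lem:min_stable_rank} ideal-by-ideal then yields a gap of $\Omega(1/\sqrt{d_{\g_\alpha}})$ per ideal, and hence $\Omega(1/\sqrt{d_{\g}})$ overall under the polynomial DLA assumption.

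Next I would verify the irreducibility hypothesis by restricting to an invariant subspace, exactly as in the proof of Theorem \ref{thm:t_mixing_time}. Concretely, one decomposes the $n$-qubit Hilbert space into $\mathcal{G}$-invariant subspaces under $\phi$, restricts attention to each in turn, and notes that the mixing result on each invariant subspace implies mixing on the full representation by another application of Schur orthogonality. Finally, converting the spectral gap to a layer count to achieve an $\epsilon$-approximate $2$-design uses the standard argument $\lVert T^L - T^*\rVert_{\text{op}} \le (1-\Delta)^L$, giving $L = \mathcal{O}(\Delta^{-1}\log(1/\epsilon)) = \mathcal{O}(\sqrt{d_{\g}}\log(1/\epsilon)) = \mathcal{O}(\textup{poly}(n)\log(1/\epsilon))$, which is the advertised bound.

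The only real obstacle here is bookkeeping rather than any new technical ingredient: ensuring that the reduction to the simple, irreducible case is clean (so that both Theorem \ref{thm:t_mixing_time} and Lemma \ref{lem:min_stable_rank} genuinely apply), and that the polynomial scaling $d_{\g_\alpha} \le d_{\g} \in \mathcal{O}(\textup{poly}(n))$ for every ideal propagates through the $\sqrt{d_{\g}}$ bound without a hidden dependence on the number of ideals. Since the number of simple ideals is at most $d_{\g}$ and each contributes a $\mathcal{O}(\textup{poly}(n)\log(1/\epsilon))$ layer cost in parallel rather than serially, the final bound is unaffected; thus no step requires more than routine verification.
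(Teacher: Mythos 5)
Your proposal is correct and takes essentially the same approach as the paper: the paper's own proof of Main Text Theorem \ref{thm:rapid} is precisely the one-line specialization of Theorem \ref{thm:t_mixing_time} to $t=2$ combined with the bound $r_{\text{K}} \in \Omega(\sqrt{d_{\g}})$ from Lemma \ref{lem:min_stable_rank}, giving $\mathcal{O}(\sqrt{d_{\g}}\log(1/\epsilon)) = \mathcal{O}(\textup{poly}(n)\log(1/\epsilon))$ layers. Your additional bookkeeping (reduction to simple ideals via the minimal per-ideal gap, and to irreducible invariant subspaces) is already folded into the paper's proof of Theorem \ref{thm:t_mixing_time} itself, so it constitutes diligence rather than a different argument.
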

\begin{proof}
One can see that the theorem follows from the lower bound on $r_{\text{K}}$.
\end{proof}

\begin{corollary}[Main Text Theorem \ref{thm:mix_compound}]
\label{cor:rapid_compound}
Consider an $n$-qubit quantum compound ansatz that is a LASA constructed using the set of generators $\{X^{(ij)}$,$Y^{(ij)}$, $\sum_{i=1}^{j} Z^{(ij)}\}$ with rotations angles chosen uniformly at random. Then,  for $t \leq n/2$, the ansatz is an $\epsilon$-approximate $t$-design for the dynamical group $\textup{SU}(n)$  after  $\mathcal{O}(tn\log(1/\epsilon))$ layers.
\end{corollary}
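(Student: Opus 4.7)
The plan is to obtain this result as an immediate consequence of Theorem \ref{thm:t_mixing_time}. The dynamical group is $\textup{SU}(n)$ with Lie algebra $\mathfrak{su}(n)$, so $d_{\g} = n^2-1$. Although the compound ansatz representation $\bigoplus_{k=0}^{n}\wedge^{k}\mathbb{C}^{n}$ acting on the $n$-qubit Hilbert space is reducible, the $t$-design property for $\textup{SU}(n)$ can be verified on each irreducible component $\wedge^{k}\mathbb{C}^{n}$ separately, and the key quantities controlling the spectral gap (namely $\|\text{ad}_{i\H_k}\|_{\text{F}}$ and $\|\text{ad}_{i\H_k}\|_{\text{op}}$) are intrinsic to $\mathfrak{su}(n)$ and do not depend on the chosen representation. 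Thus it suffices to plug into Theorem \ref{thm:t_mixing_time} the appropriate value of the minimum stable Killing rank $r_{\text{K}}$.

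First I would verify that the listed generator family satisfies the hypotheses of Theorem \ref{thm:t_mixing_time}. The $X^{(ij)}$ and $Y^{(ij)}$ are supported on disjoint $2\times 2$ index pairs and are pairwise orthogonal and orthogonal to the diagonal Cartan subalgebra; the family $\{\sum_{i=1}^{j}Z^{(ij)}\}$ spans (and can be orthogonalized on) the Cartan subalgebra of dimension $n-1$. So an orthogonal basis of $\mathfrak{su}(n)$ is obtained. Each $e^{-\theta \H_k}$ is periodic since every generator has rational (in fact integer/half-integer) spectrum. The slight subtlety acknowledged in the text — that the $\sum_{i=1}^{j}Z^{(ij)}$ generators induce correlations between angles — is harmless for the mixing argument since only the orthogonal basis structure is used in the spectral gap bound.

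Next I apply the general lower bound: by Lemma \ref{lem:min_stable_rank}, for any compact simple Lie algebra, $r_{\text{K}} \in \Omega(\sqrt{d_{\g}})$. For $\mathfrak{su}(n)$ this yields $r_{\text{K}} \in \Omega(n)$. Plugging into the mixing bound of Theorem \ref{thm:t_mixing_time}:
\begin{equation}
\mathcal{O}\!\left(\frac{t\, d_{\g}}{r_{\text{K}}}\log(1/\epsilon)\right) = \mathcal{O}\!\left(\frac{t(n^{2}-1)}{n}\log(1/\epsilon)\right) = \mathcal{O}(tn\log(1/\epsilon)),
\end{equation}
and the hypothesis $t < \sqrt{d_{\g}}/2$ of Theorem \ref{thm:t_mixing_time} reduces to $t \leq n/2$.

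The main obstacle is essentially bookkeeping: confirming that the root-system analysis used to derive $r_{\text{K}} = \Omega(\sqrt{d_{\g}})$ in Lemma \ref{lem:min_stable_rank} is applicable to the specific $\{X^{(ij)}, Y^{(ij)}, \sum_{i=1}^{j}Z^{(ij)}\}$ basis (rather than a basis directly aligned with root vectors), and handling the reduction from the reducible compound representation to the irreducible components $\wedge^k \mathbb{C}^n$ without degrading the constant. Neither causes trouble because the Frobenius and operator norms of the adjoint are basis-independent up to the bound used, and because each $\wedge^k \mathbb{C}^n$ is an irreducible $\textup{SU}(n)$-module on which the gap computation is identical.
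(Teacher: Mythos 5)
Your proposal is correct and follows essentially the same route as the paper's proof: verify that $\{X^{(ij)}, Y^{(ij)}, \sum_{i=1}^{j}Z^{(ij)}\}$ is an orthogonal basis for the DLA, invoke Lemma \ref{lem:min_stable_rank} to get $r_{\text{K}} \in \Omega(\sqrt{d_\g}) = \Omega(n)$, and plug into Theorem \ref{thm:t_mixing_time} to obtain $\mathcal{O}(t d_\g / r_{\text{K}} \cdot \log(1/\epsilon)) = \mathcal{O}(tn\log(1/\epsilon))$ layers with the constraint $t < \sqrt{d_\g}/2$ becoming $t \leq n/2$. The paper's only additional remark is that each $X^{(ij)}$ and $Y^{(ij)}$ is conjugate to a Cartan element $Z^{(ij)}$ (making the applicability of the lemma's Cartan-subalgebra reduction explicit), while your extra bookkeeping about the reducible compound representation mirrors the restriction-to-irreducible-components step already built into the proof of Theorem \ref{thm:t_mixing_time}.
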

\begin{proof}
One can check that the chosen set of generators form an orthogonal basis for the DLA. Each of the $X^{(ij)}$ and $Y^{(ij)}$ Givens rotation generators is conjugate to an $Z^{(ij)}$ generator, which is an element of the Cartan subalgebra. Thus from Lemma \ref{lem:min_stable_rank} the result follows.
\end{proof}

\section*{Disclaimer}
This paper was prepared for informational purposes by the Global Technology Applied Research center of JPMorgan Chase \& Co. This paper is not a product of the Research Department of JPMorgan Chase \& Co. or its affiliates. Neither JPMorgan Chase \& Co. nor any of its affiliates makes any explicit or implied representation or warranty and none of them accept any liability in connection with this paper, including, without limitation, with respect to the completeness, accuracy, or reliability of the information contained herein and the potential legal, compliance, tax, or accounting effects thereof. This document is not intended as investment research or investment advice, or as a recommendation, offer, or solicitation for the purchase or sale of any security, financial instrument, financial product or service, or to be used in any way for evaluating the merits of participating in any transaction.

\end{document}